	\newcolumntype{d}[1]{D{.}{.}{#1|}}
\DeclareSymbolFontAlphabet{\mathbb}{AMSb}
\DeclareSymbolFontAlphabet{\mathbbl}{bbold}
\DeclareMathOperator*{\argmin}{arg\,min}
\begin{document}

\newtheorem{remark}{Remark}
\newtheorem{theorem}{Theorem}
\newtheorem{lemma}{Lemma}
\newtheorem{definition}{Definition}
\newtheorem{problem}{Problem}
\newtheorem{proposition}{Proposition}
\newtheorem{corollary}{Corollary}
\newtheorem{example}{Example}
\newtheorem{conjecture}{Conjecture}
\newtheorem{operator}{Operator}

\def\select{\texttt{SELECT}\xspace}
\def\measure{\texttt{MEASURE}\xspace}
\def\reconstruct{\texttt{RECONSTRUCT}\xspace}
\def\Error{\text{TSE}}
\def\map{{\mbox{\sf MAP}}}
\def\plus{{+}}
\def\opt{{\mbox{\sf OPT}}}
\def\metaopt{\ensuremath{{\mbox{\sf OPT}_{\sf HDMM}}}}
\def\WW{\mathbb{W}}
\def\VV{\mathbb{V}}
\def\vv{\mathbbl{v}}
\def\Wlog{\mathcal{W}}
\def\AA{\mathbb{A}}
\def\BB{\mathbb{B}}
\def\CC{\mathbb{H}}
\def\GG{\mathbb{G}}
\def\QQ{\mathbb{Q}}
\def\MM{\mathbb{M}}
\def\optgp{\mbox{\sf OPT}_{0}}
\def\optk{\mbox{\sf OPT}_{\otimes}}
\def\optkk{\mbox{\sf OPT}_{+}}
\def\optm{\mbox{\sf OPT}_{\mathsf{M}}}

\def\Imp{\mbox{\sf ImpVec}}
\def\mult{\mbox{\sf Multiply}}
\def\invert{\mbox{\sf LstSqr}}

\newcommand{\newcontent}[1]{{\color{MidnightBlue} #1}}

\newcommand{\tighttimes}{\!\times\!}

\newcommand{\algoname}[1]{{\textsc{#1}}\xspace}
\newcommand{\sys}{\algoname{HDMM}}
\newcommand{\sysFull}{High-Dimensional Matrix Mechanism\xspace}
\newcommand{\LMW}{\algoname{LM}}
\newcommand{\Identity}{\algoname{Identity}}
\newcommand{\Workload}{\ensuremath{MM(W)}}
\newcommand{\PrivBayes}{\algoname{PrivBayes}}
\newcommand{\matrixMech}{\algoname{MM}}
\newcommand{\HB}{\algoname{HB}}
\newcommand{\Privelet}{\algoname{Privelet}}
\newcommand{\Quadtree}{\algoname{Quadtree}}
\newcommand{\GreedyH}{\algoname{GreedyH}}
\newcommand{\DataCube}{\algoname{DataCube}}
\newcommand{\DAWA}{\algoname{DAWA}}
\newcommand{\LRM}{\algoname{LRM}}

\newcommand{\datasetname}[1]{{\emph{#1}}\xspace}
\newcommand\Census{\datasetname{CPH}}
\newcommand\Person{\datasetname{Person}}
\newcommand\Adult{\datasetname{Adult}}
\newcommand\CPS{\datasetname{CPS}}
\newcommand\Patent{\datasetname{Patent}}
\newcommand\Taxi{\datasetname{Taxi}}

\newcommand{\workloadname}[1]{{\emph{#1}}\xspace}
\newcommand\FixedWidthRange{\workloadname{Width 32 Range}}
\newcommand\PrefixOneD{\workloadname{Prefix 1D}}
\newcommand\PrefixTwoD{\workloadname{Prefix 2D}}
\newcommand\PrefixThreeD{\workloadname{Prefix 3D}}
\newcommand\PermutedRange{\workloadname{Permuted Range}}
\newcommand\PrefixIdentity{\workloadname{Prefix Identity}}
\newcommand\SFOne{\workloadname{SF1}}
\newcommand\SFOnePlus{\workloadname{SF1+}}
\newcommand\AllMarginals{\workloadname{All Marginals}}
\newcommand\TwoWayMarginals{\workloadname{2-way Marginals}}
\newcommand\ThreeWayMarginals{\workloadname{3-way Marginals}}
\newcommand\AllRangeMarginals{\workloadname{All Range-Marginals}}
\newcommand\TwoWayRangeMarginals{\workloadname{2-way Range-Marginals}}
\newcommand\ThreeWayRange{\workloadname{All 3-way Ranges}}

\newcommand{\errRatio}{\ensuremath{Ratio(W, \mathcal{A}_{other})}\xspace}

\newcommand{\NA}{-}
\newcommand{\NS}{*}

% Name

\newcommand{\op}[1]{{\sc #1}\xspace}
\newcommand{\stitle}[1]{\vspace{-2.5mm}\paragraph*{#1}}
\newcommand{\eat}[1]{}
\newcommand{\eatrev}[1]{}
\newcommand{\mybullet}{\vspace{1mm}\noindent$\bullet$~~}

\newcommand{\todo}[1]{[[\emph{\color{teal}TODO: #1}]]}
\newcommand{\gm}[1]{[[\emph{\color{red}GM: #1}]]}
\newcommand{\ry}[1]{[[\emph{\color{blue}RM: #1}]]}
\newcommand{\mh}[1]{[[\emph{\color{red}MH: #1}]]}
\newcommand{\am}[1]{[[\emph{\color{magenta}AM: #1}]]}

\newcommand{\gmref}[2]{{\color{cyan} #1}~[[\emph{{\color{red} GM: #2}}]]}
\newcommand{\ryref}[2]{{\color{red} #1}~[[\emph{{\color{blue} RM: #2}}]]}
\newcommand{\mhref}[2]{{\color{cyan} #1}~[[\emph{{\color{red} MH: #2}}]]}
\newcommand{\amref}[2]{{\color{cyan} #1}~[[\emph{{\color{magenta} AM: #2}}]]}

\renewcommand*\Call[2]{\textproc{#1}(#2)}
%% Terminology

%\def\protected{Private-Private}
%\def\secure{Private-Public}

% %% MATH STUFF

\def\Lap{\mbox{Lap}}
\def\Gaus{\mbox{Gaus}}
\def\sens{\Delta}
\def\LM{\mathcal{L}}
\def\GM{\mathcal{G}}

\newcommand{\Lk}[1]{\left\Vert #1  \right\Vert_\algG}
\newcommand{\Lone}[1]{\left\Vert #1  \right\Vert_\LM}
\newcommand{\Ltwo}[1]{\left\Vert #1  \right\Vert_\GM}
\newcommand{\norm}[1]{\left\lVert#1\right\rVert}
\newcommand{\set}[1]{\{#1\}}   % Set (as in \set{1,2,3}).

\newcommand{\vect}[1]{\bm{#1}}
\newcommand{\matr}[1]{\bm{#1}}
\def\btheta{\vect{\theta}}
\def\blambda{\vect{\lambda}}
\def\bkappa{\vect{\kappa}}
\def\bTheta{\matr{\Theta}}
\def\z{\vect{z}}
\def\c{\vect{c}}
\def\W{{\matr{W}}}
\def\V{\matr{V}}
\def\A{\matr{A}}
\def\Q{\matr{Q}}
\def\B{\matr{B}}
\def\C{\matr{C}}
\def\D{\matr{D}}
\def\L{\matr{L}}
\def\I{\matr{I}}
\def\T{\matr{T}}
\def\R{\matr{R}}
\def\P{\matr{P}}
\def\S{\matr{S}}
\def\M{\matr{M}}
\def\Y{\matr{Y}}
\def\X{\matr{X}}
\def\x{\vect{x}}
\def\q{\vect{q}}
\def\a{\vect{a}}
\def\u{\vect{u}}
\def\v{\vect{v}}
\def\w{\vect{w}}
\def\e{\vect{e}}

\def\xhat{\bar{\x}}
\def\y{\vect{y}}

\newcommand{\Wnt}{\ensuremath{\mathcal{W}_{\mbox{\tiny SF1}}}}
\newcommand{\Wst}{\ensuremath{\mathcal{W}_{\mbox{\tiny SF1+}}}}

 \def\db{I}  % db instance
 \def\nbrs{nbrs}

 \def\algG{\mathcal{K}}  % generic DP algorithm

\title[Title]{HDMM: Optimizing error of high-dimensional statistical queries under differential privacy
%\rsuper*
}
%\titlecomment{\lsuper* Views expressed in this paper are those of  authors and do not necessarily reflect the views of the U.S. Census Bureau.}
\author{Ryan McKenna}	%required
\address{College of Information \& Computer Sciences, The University of Massachusetts, Amherst, MA 10002}
\email{rmckenna@cs.umass.edu}  %optional

\author{Gerome Miklau}	%required
\address{College of Information \& Computer Sciences, The University of Massachusetts, Amherst, MA 10002}
\email{miklau@cs.umass.edu}  %optional

\author{Michael Hay}	%required
\address{Department of Computer Science, Colgate University, Hamilton, NY 13346}	%required
\email{mhay@colgate.edu}  %optional

\author{Ashwin Machanavajjhala}	%required
\address{Department of Computer Science, Duke University, Durham, NC, 27708}	%required
\email{ashwin@cs.duke.edu}  %optional

\maketitle

\pagestyle{plain}
%\pagenumbering{gobble} %arabic

\begin{abstract}
%Differentially private algorithms for answering sets of predicate counting queries on a sensitive database have many applications. Organizations that collect individual-level data, such as statistical agencies and medical institutions, use them to safely release summary tabulations. However, existing techniques are accurate only on a narrow class of query workloads, or are extremely slow, especially when analyzing more than one or two dimensions of the data.

In this work we describe the High-Dimensional Matrix Mechanism (\sys), a differentially private algorithm for answering a workload of predicate counting queries.
  %The algorithm offers state-of-the-art accuracy on a variety of query workloads, and is especially effective for higher-dimensional datasets.  %Our key technical contributions are twofold.  First, we develop implicit matrix representations for workloads of conjunctive queries that is far more compact than the dense matrix alternative.  Second, we exploits this compact representation to efficiently search (a subset of) the space of differentially private algorithms for one that answers the input query workload with high accuracy. 
\sys represents query workloads using a compact implicit matrix representation and exploits this representation to efficiently optimize over (a subset of) the space of differentially private algorithms for one that is unbiased and answers the input query workload with low expected error.  $\sys$ can be deployed for both $\epsilon$-differential privacy (with Laplace noise) and $(\epsilon, \delta)$-differential privacy (with Gaussian noise), although the core techniques are slightly different for each.  We demonstrate empirically that \sys can efficiently answer queries with lower expected error than state-of-the-art techniques, and in some cases, it nearly matches existing lower bounds for the particular class of mechanisms we consider.
\end{abstract}

%!TEX root=paper.tex
\section{Introduction}  \label{sec:introduction}

Institutions like the U.S. Census Bureau and Medicare regularly release summary statistics about individuals, including population statistics cross-tabulated by demographic attributes \cite{census-sf1,onthemap} and tables reporting on hospital discharges organized by medical condition and patient characteristics \cite{hcupnet}.
%
%A number of organizations release summary statistics of individual-level data they collect. These include the summary statistics released by the US Census Bureau about the U.S. population \cite{census-sf1,onthemap}, cross-tabulated by a variety of demographic attributes, and tables reporting patients discharged by hospitals based on medical conditions and patient characteristics~\cite{hcupnet}, etc.
%
These data have the potential to reveal sensitive information, especially through joint analysis of multiple releases~\cite{onthemap:icde08,sigmod:haney17,vaidya2013hcupnet}.  Differential privacy~\cite{dwork2006calibrating,Dwork14Algorithmic} offers a framework for releasing statistical summaries of sensitive datasets, while providing formal and quantifiable privacy to the contributing individuals.  

We consider the problem of \emph{batch query answering} under differential privacy.  That is, our goal is to release answers to a given query \emph{workload}, consisting of a set of \emph{predicate counting queries}, while satisfying differential privacy. A predicate counting query computes the number of individuals in the dataset who satisfy an arbitrary predicate $\phi$ (e.g., how many individuals have Income $\geq \$50,\!000$).  Workloads of predicate counting queries are quite versatile as they are capable of expressing histograms, multi-dimensional range queries, group-by queries, data cubes, marginals, and arbitrary combinations thereof.
Answering a batch of predicate counting queries has been widely studied by the research community. Past results have established theoretical lower bounds~\cite{bhaskara2012unconditional, hardt2010geometry,nikolov2013geometry,li13optimal} as well as a wealth of practical  algorithms~\cite{zhang16privtree,yuan2016convex,li2015matrix,zhangtowards,xiao2014dpcube,qardaji2014priview,li2014data,Yaroslavtsev13Accurate,xu2013differential,qardaji2013understanding,qardaji2013differentially,yuan2012low,xu12histogram,li2012adaptive,cormode2012differentially,Acs2012compression,xiao2011differential,ding2011differentially,li2010optimizing,hay2010boosting,barak2007privacy,qardaji2014priview,Zhang2014}.
\eat{
\begin{table*}
	\caption{\label{fig:alg_mm} The key steps of the Matrix Mechanism (\matrixMech), an instance of the select-measure-reconstruct paradigm~\cite{li2010optimizing}. (Below is the Laplace version of the Matrix Mechanism; the Gaussian version differs only in noise generation and the matrix norm applied to $\A$.)} %\gm{But $OPT_{MM}$ is also different depending on the noise.  Should we generalize the algorithms to use $K=(L\|G)$?}
\centering
		\begin{tabular}{rlcl}
			\multicolumn{4}{l}{\textbf{Input:} workload $\W$, in matrix form}\\
			\multicolumn{4}{l}{\hspace{.45in} data $\x$, in vector form}\\
			\multicolumn{4}{l}{\hspace{.45in} privacy parameter $\epsilon$}\\
			\hline \hline
			& \\ %empty line
 			{\sf SELECT} $\begin{cases}\end{cases}$ \hspace{1ex} \qquad  & $\A$ & = & $\opt_{MM}(\W)$ \\
			\multirow{2}{*}{{\sf MEASURE} $\begin{cases} \mathstrut \end{cases}$} &$\a$ & = & $\A\x$ \\
			& $\y$ & = & $\a + Lap(\norm{\A}_1 / \epsilon)$ \\
			\multirow{2}{*}{{\sf RECONSTRUCT} $\begin{cases} \mathstrut \end{cases}$} & $\xhat$ & = & $\A^+\y$ \\
			& $ans$ & = & $\W \xhat$
		\end{tabular}
\end{table*}}

One of the simplest mechanisms for answering a workload of queries is to add carefully calibrated Laplace or Gaussian noise directly to each of the workload query answers.  The noise magnitude is calibrated to a property of the workload known as its \emph{sensitivity}, which can be large for some workloads, resulting in a significant amount of noise. This method fails to adequately exploit structure in the workload and correlation amongst queries, and thus it often adds more noise than is strictly necessary to preserve differential privacy, resulting in suboptimal utility. 

A better approach generalizes the basic noise addition mechanism by first \textbf{selecting} a new set of \emph{strategy} queries, then \textbf{measuring} the strategy queries using a noise addition mechanism, and \textbf{reconstructing} answers to the workload queries from the noisy measurements of the strategy queries. Choosing an effective query answering strategy (different from the workload) can result in orders-of-magnitude lower error than the Laplace mechanism, with no cost to privacy.
Many mechanisms for workload answering fall within the select-measure-reconstruct paradigm~\cite{zhang16privtree,yuan2016convex,li2015matrix,zhangtowards,xiao2014dpcube,qardaji2014priview,li2014data,Yaroslavtsev13Accurate,xu2013differential,qardaji2013understanding,qardaji2013differentially,yuan2012low,xu12histogram,li2012adaptive,cormode2012differentially,Acs2012compression,xiao2011differential,ding2011differentially,li2010optimizing,hay2010boosting,li2015matrix}, differing primarily in the strategy selection step.
We can characterize strategy selection as a search problem over a space of strategies, distinguishing prior work in terms of key algorithmic design choices: the search space, the cost function, and the type of search algorithm (greedy, local, global, etc.).  These design choices impact the two key performance considerations: accuracy and scalability.

At one extreme are techniques that explore a narrow search space, making them efficient and scalable but not particularly accurate (in particular, their search space may include accurate strategies only for a limited class of workloads).  For example, \HB~\cite{qardaji2013understanding} considers strategies consisting of hierarchically structured interval queries.  It performs a simple search to find the branching factor of the hierarchical strategy that minimizes an error measure that assumes the workload consists of all range queries (regardless of the actual input workload).  It is efficient and can scale to higher dimensions, but it achieves competitive accuracy only when the workload consists of range queries and the data is low dimensional.

At the other extreme are techniques that search a large space, and adapt to the workload by finding a strategy within that space that offers low error on the workload, thereby making them capable of producing a more accurate strategy for the particular workload. % than non-adaptive techniques.
However, this increased accuracy comes at the cost of high runtime and poor scalability.  This is exemplified by the Matrix Mechanism \cite{li2015matrix}. The Matrix Mechanism represents the workload and strategy as a matrix, and the data as a vector.  With this representation, the select, measure, and reconstruct steps can be completely defined in the language of linear algebra.  In addition, there is a simple formula for the expected error of any selected strategy matrix in terms of elementary matrix operations.  This enables the Matrix Mechanism to select the optimal strategy (i.e., the one that offers least expected error) by solving a numerical optimization problem.  

Using a matrix to represent a workload is appealing because the representation is expressive enough to capture an arbitrary collection of predicate counting queries, and it reveals any structure that may exist between the workload queries.  However, the size of the workload matrix is equal to the number of queries times the size of the domain, and is infeasible to represent large workloads defined over multi-dimensional domains as a matrix.  Moreover, solving the optimization problem underlying strategy selection is nontrivial and expensive. In short, there is no prior work that is accurate for a wide range of input workloads, and also capable of scaling to large multi-dimensional domains.

\eat{
\begin{table*}
	\caption{\label{fig:alg_hdmm} An overview of the {\em High Dimensional Matrix Mechanism} (\sys), which shares many of the steps of the Matrix Mechanism, but uses specialized representations of the workload and strategy, along with the  efficient operations they enable.}
		\begin{tabular}{lcll}
			\textbf{Input:} & \multicolumn{3}{l}{workload $\Wlog$, in logical form}\\
			& \multicolumn{3}{l}{data $\x$, in vector form}\\
			& \multicolumn{3}{l}{privacy parameter $\epsilon$}\\
			\hline \hline
			 $\WW$ & = & $\Imp(\Wlog)$ & \emph{// Compact vector representation}\\
			  $\AA$ & = & $\metaopt(\WW)$ & \emph{// Optimized strategy selection}\\
			 $\a$ & = & $\mult(\AA, \x)$ & \emph{// Strategy query answering} \\
			$\y$ & = & $\a + Lap(\norm{\AA}_1 / \epsilon)$ & \emph{// Noise addition}  \\
			$\xhat$ & = & $\invert(\AA, \y)$  & \emph{// Inference}\\
			$ans$ & = & $\mult(\WW, \xhat)$ & \emph{// Workload answering}
		\end{tabular}
\end{table*}}

\subsection*{Overview of approach and contributions} \label{sec:sub:overview}

This paper describes the \sysFull (\sys) which is a practical instantiation of the Matrix Mechanism (MM), capable of scaling to large multi-dimensional domains.  HDMM offers the flexibility and workload-adaptivity of the Matrix Mechanism, while offering the scalability of simpler mechanisms.  There are a number of innovations that distinguish HDMM from the Matrix Mechanism: 

\begin{itemize}
\item The Matrix Mechanism represents query workloads \emph{explicitly}, as fully materialized matrices, while \sys uses a compact \emph{implicit} matrix representation. This permits a lossless representation of queries that avoids a representation exponential in the number of attributes. The implicit representation consists of sub-workload matrices (usually one per attribute) which are used as terms in a Kronecker product.  Further, we allow the workload to be expressed as unions of such Kronecker terms.  This allows us to represent large multi-dimensional workloads efficiently while maintaining the key benefits that the explicit matrix representation offers.  
\item The numerical optimization problem at the heart of the Matrix Mechanism is practically infeasible, even for a single attribute with a domain of size 10. \sys introduces four optimization routines for strategy selection: $\optgp$, $\optk$, $\optkk$, and $\optm$.   $\optgp$ is designed for \emph{explicitly} represented workloads, and can scale to domains as large as $8192$.  $\optk$, $\optkk$, and $\optm$ are three different techniques for optimizing \emph{implicitly} represented workloads (with implicitly represented strategies), and can scale to significantly larger domains.\footnote{$\optk$ and $\optkk$ have linear dependence on the number of attributes, while $\optgp$ and any method that deals with explicitly represented workloads has an exponential dependence on the number of attributes.}  These optimization routines differ in the space of strategies they consider.  In all cases, the strategy search space is chosen so that is expressive enough to encode high-quality strategies, while also enabling tractable optimization.  
\item We also propose efficient algorithms for the measure and reconstruct steps of HDMM.  In the Matrix Mechanism, these steps are implemented by performing matrix operations with the explicit workload and strategy matrices and the data vector.  \sys exploits the implicit representation of the selected strategies to significantly speed up these steps.
\end{itemize}

As a result of these innovations, \sys achieves high accuracy on a wide variety of realistic input workloads, in both low and high dimensions.  In fact, in our experiments, we find it has higher accuracy than all prior select-measure-reconstruct techniques, even on input workloads for which the prior techniques were specifically designed.  It also achieves reasonable runtime and scales more effectively than prior work that performs non-trivial optimization (see \cref{sec:experiments} for a detailed scalability evaluation).  The main bottleneck of HDMM is \emph{representing the data in vector form}, which requires space proportional to the domain size; HDMM can scale to domains as large as $10^9$.\footnote{We show in this paper that in certain special cases, we can bypass this fundamental limitation.}

HDMM was first described by the authors in \cite{mckenna2018optimizing}. This paper provides a more complete description of HDMM and adds several new technical contributions:

\begin{enumerate}
\item We generalize and extend HDMM to support $(\epsilon, \delta)$-differential privacy via Gaussian noise.  This is a nontrivial extension: changing noise distributions fundamentally changes the optimization problems underlying MM and HDMM. We analyze this change and derive new optimization routines for strategy selection in this regime.  All four of our core optimization routines $\optgp,\optk,\optkk$, and $\optm$ require different changes to support Gaussian noise. 
\item We provide new results on the SVD bound \cite{li13optimal}, a simple formula which provides a lower bound on the achievable error of the Matrix Mechanism in terms of the properties of the workload.  Specifically, we show how the SVD bound can be efficiently computed for implicitly represented workloads, and we use the SVD bound to provide additional theoretical justification for our optimization routines.  We include the SVD bound in experiments to inform the optimality of the strategies found by our optimization routines.
\item We provide a complete description of $\optm$, the optimization routine that searches over marginal query strategies. This is one of most important optimization routines because it generally produces the best strategies for marginal query workloads, one of the most common types of workloads for multi-dimensional data.  In addition, we show that for marginal query workloads, it is sometimes possible to derive the optimal strategy in closed form, a remarkable new result.
\item We outline an approach to bypass the main bottleneck of HDMM, which is representing the data in vector form and performing the \reconstruct step.  In particular, we show that in certain special-but-common cases, representing the data in vector form is not necessary.  This is achieved by integrating HDMM with \texttt{Private-PGM} \cite{mckenna2019graphical}, an alternate technique for \reconstruct that produces a compact factorized representation of the data.  This modification allows HDMM to scale to even higher dimensional domains, far beyond settings where the data can be represented in vector form.  
\item We provide a more comprehensive set of experiments, in both low-dimensional and high-dimensional settings, showing consistent utility improvements over other mechanisms.  We also provide a detailed analysis of the experimental results.
\end{enumerate}

\subsection*{Organization} %\gm{We still need an organization paragraph here so the reader knows what is coming.  If the contributions above align with sections, you could mention those sections above and this paragraph can be shorter.}

This paper is organized as follows.  In \cref{sec:data_query} we provide background on the data model and query representation.  In \cref{sec:privacy} we provide background on differential privacy, including the Matrix Mechanism.  In \cref{sec:optimization}, we describe $\optgp$, an optimization routine that approximately solves the Matrix Mechanism optimization problem for {\em explicitly} represented workloads.  In \cref{sec:implicit}, we show how many common workloads over high-dimensional domains can be {\em implicitly} represented in terms of Kronecker products.  In \cref{sec:hd_opt}, we describe $\optk$ and $\optkk$: two optimization routines that can effectively optimize implicitly represented workloads.  In \cref{sec:impmarg}, we show how marginal query workloads can be represented implicitly, using an even more compact representation than the one given in \cref{sec:implicit}.  In \cref{sec:margopt}, we describe $\optm$, an optimization routine that can efficiently optimize implicitly represented workloads with marginal query strategies.  In \cref{sec:running}, we describe the remaining steps of HDMM, including efficient \measure and \reconstruct.  We perform a thorough experimental study in \cref{sec:experiments}. Related work is discussed in \cref{sec:related} and proofs are provided in the appendix. 

\eat{

\gm{integrate some of this text above:}

{\it 
We use as a running example, and motivating use case, the differentially private release of a collection of 10 tabulations from the 2010 {\em Summary File 1 (SF1)}\cite{census-sf1}, an important data product based on the Census of Population and Housing (CPH). Statistics from SF1 are used for redistricting, demographic projections, and other policy-making.

In anticipation of the 2020 census, the Census Bureau recently produced test publications protected by a disclosure limitation system based on differential privacy \ryref{\cite{??}}{fix; not sure which publication this refers to}.  If the test of this disclosure limitation system is successful, then the expectation is that the publications of the 2020 Census will also be protected using differential privacy.  The work discussed in this paper is part of the research and development activity for these disclosure limitation systems.

}}

%%%%%%%%%%%%%%%%%%%%%
\eat{
\section{Outline of Intro}
\begin{outline}
		\1 Problem of releasing sets of linear queries under differential privacy
		\1 has many applications
			\2 Census bureau
			\2 healthcare databases that tabulate statistics
			\2 visualizations
		\1 state of the art: either is not error optimal, or require high computational overhead and are restricted to 1- or 2-D
			\2 First class of solutions considers tables in relational form, queries in a logical form (like SQL) and adds noise calibrated to (a bound over) the sensitivity of the queries.
				\3 sensitivity of each linear query is easy to compute (1 if counting query)
				\3 but, computing sensitivity of a set of queries is NP-hard
				\3 Xiaokui, follow up, ProPER, etc.
			\2 Second class of solutions have two important differences:
				\3 consider the databases and queries in vector form expressed over the full domain of each tuple (which is exponential in the number of attributes)
				\3 rather than answering the original set of queries, these techniques (a) select a set of measurement queries, (b) measure them using Laplace mechanism and (c) reconstruct the desired answers from the measurements.
				\3 theory: [hardt-talwar] and follow ups; [Matrix Mechanism] and variants; data dependent stuff (where measurements are incomplete wrt input queries)
				\3 These are the state of the art techniques in terms of error (maybe give an example) because:
					\4 error using these strategies are an order of magnitude smaller than the naive strategy
					\4 computing sensitivity is trivial when represented in the vector form (circumventing the hardness results)
					\4 standard inference algorithms that tend to use vectors and linear algebra can readily used to reconstruct answers to input queries using noisy measurements.
				\3 But representing queries and data in vector form makes these techniques extremely inefficient, limiting their applicability
					\4 matrix mechanism algorithm too slow for standard schemas and workloads
					\4 a lot of work on specialized algorithms for special cases
			\2 other work but not workload specific
		\1 Problem:
			\2 There is no general purpose differentially private algorithm that can answer sets of statistical queries posed over a table with (a) competetive error and (b) which scales to workloads over schemas with more than a few attributes.
		\1 Contributions:
			\2 A new suite of algorithms that can answer queries with competetive error on higher dimensional tables than was possible before
			\2 scales to domains of size $10^9$
			\2 key innovation:
				\3 translate logical queries into one out of a few compact vector representation. These representations are based on outer-products or weighted marginals
				\3 For each representation, we have designed novel scalable algorithms to identify a measurement strategy for which noisy answers are derived for the database
				\3 ...
			\2 nice results
\end{outline}
}%%%%%%%%%%%%%%%%%%%%%

%!TEX root = paper.tex
\section{Data model and query representation}  \label{sec:data_query}
%We describe below the data model, query workloads, explicit vectorized representations, and privacy background.

In this section we introduce much of the notation and relevant background on the data model and query representation required to understand this work.  We use as a running example, and motivating use case, the differentially private release of a collection of 10 tabulations from the 2010 {\em Summary File 1 (SF1)}\cite{census-sf1}, an important data product based on the Census of Population and Housing (CPH). %Statistics from SF1 are used for redistricting, demographic projections, and other policy-making.

\subsection{Notation}

\begin{table}[b]
%\resizebox{0.49\textwidth}{!}{
\begin{tabular}{|c|c|}
\hline
\textbf{Symbol} & \textbf{Meaning} \\\hline
$A$ & Attribute \\
$t$ & Tuple (database item) \\
%$n_i$ & Domain size of attribute $A_i$ \\
%$n$ & Full domain size \\
$\phi$ & Predicate \\
$\Phi$ & Set of predicates \\
$\mathbb{I}$ & Indicator function \\
$\mathcal{W}$ & Logical workload \\
$\x$ & Data vector \\
$\q$ & Query vector \\
$\W$ & (Explicit) Workload matrix \\
$\A$ & (Explicit) Strategy matrix \\\hline
\end{tabular}
%\resizebox{0.49\textwidth}{!}{
\begin{tabular}{|c|c|}
\hline \textbf{Symbol} & \textbf{Meaning} \\\hline
$\epsilon, \delta$ & Privacy parameters \\
$\algG$ & Noise addition mechanism \\
$\algG = \LM$ & Laplace mechanism \\
$\algG = \GM$ & Gaussian mechanism \\
$\Lone{\A}$ & $L_1$ sensitivity \\
$\Ltwo{\A}$ & $L_2$ sensitivity \\
$\norm{\A}_F$ & Frobenius norm \\
$\otimes$ & Kronecker product \\
$\WW$ & (Implicit) Workload matrix \\
$\AA$ & (Implicit) Strategy matrix  \\
\hline
\end{tabular}
\caption{Table of notation.} \label{tbl:notation}
\end{table}

A table of common notations is given in \cref{tbl:notation}.  In general, we adhere to the following conventions.  Scalars and tuples are lowercase, non-bold.  Sets are uppercase, non-bold.  Vectors are lowercase, bold.  Matrices are uppercase, bold.  Implicit matrices are uppercase, blackboard bold.

\newcommand{\attset}[1]{{\mathcal #1}}

\subsection{Data and schema}
We assume a single-table relational schema $R(A_1 \dots A_d)$, where $attr(R)$ denotes the set of attributes of $R$. Subsets of attributes are denoted $\attset{A} \subseteq attr(R)$. Each attribute $A_i$ has a finite domain $dom(A_i)$ with size $|dom(A_i)| = n_i$.  The full domain of $R$ is $dom(R) = dom(A_1) \times \dots \times dom(A_d)$, and has size $n = \prod_i n_i$.  An instance $I$ of relation $R$ is a multiset whose elements are tuples in $dom(R)$.

\begin{example}
The \texttt{\Person} relation has the following schema: six boolean attributes describing Race, two boolean attributes for Hispanic Ethnicity and Sex, Age in years between 0 and 114, and a Relationship-to-householder field that has 17 values. These queries are on a multidimensional domain of size $2^6 \times 2 \times 2 \times 115\times 17=\num{500480}$. The data also includes a geographic attribute encoding state (51 values including D.C.).  The SF1+ queries are defined on a domain of size $\num{500480} \times 51=\num{25524480}$.
\end{example}

\subsection{Logical view of queries} 
Predicate counting queries are a versatile class, consisting of queries that count the number of tuples satisfying any logical predicate. We define below a natural logical representation of these queries, distinguished from a subsequent vector representation.
\begin{definition}[Predicate counting query] \label{def:query}
A predicate on $R$ is a boolean function $\phi: dom(R) \rightarrow  \{0,1\}$.  A predicate can be used as a counting query on instance $I$ of $R$ whose answer is
	$\phi(I)=\sum_{t\in I} \phi(t)$.
\end{definition}

%\begin{definition}[Predicate counting query] \label{def:query2}
%Given an boolean function $\phi : dom(R) \rightarrow \{ 0, 1 \}$, the corresponding predicate counting query is defined as $q_{\phi}(I) = \sum_{t \in I} \phi(t)$.  It can be evaluated with respect to the data vector as follows:
%$q_{\phi}(I) = \sum_{t \in dom} \phi(t) \x_{I}(t) $.  Hence, it is useful to treat the predicate counting query as a vector (indexed by tuples in the domain) $\q = vec(\phi)$ so that the answer the query is simply $ \q^T \x $.
%\end{definition}
%
A predicate corresponds to a condition in the {\tt WHERE} clause of an SQL statement, so in SQL a predicate counting query has the form: \texttt{SELECT Count(*) FROM R WHERE $\phi$}.

% Example omitted
%\0 Consider the SQL query {\tt SELECT Count(*) FROM Persons WHERE Sex = 'Male' AND Age >= 18}. This is a predicate counting query which we could write as the conjunction of two predicates $[\phi_1]_{\{{\tt Sex}\}} \wedge [\phi_2]_{\{{\tt Age}\}}$ where $\phi_1:Sex = Male$ and $\phi_2:Age \geq 18$.

When a predicate refers {\em only} to a subset of attributes $\attset{A} \subset attr(R)$ we may say that it is defined with respect to $\attset{A}$ and annotate it $ \phi_{\attset{A}} : dom(\attset{A}) \rightarrow \set{0,1}$.
If $\phi_\attset{A}$ and $\phi_\attset{B}$ are predicates on attribute sets $\attset{A}$ and $\attset{B}$, then their conjunction is a predicate $\phi_\attset{A} \wedge \phi_\attset{B} : dom(\attset{A} \cup \attset{B}) \rightarrow \set{0,1}$.
%$[\phi_1]_\attset{A} \wedge [\phi_2]_\attset{B}$

%	\gm{Add footnote explaining our techniques work for the more general class of linear queries; add small appendix explaining distinction.}

We assume that each query consists of \emph{arbitrarily complex} predicates on each attribute, but require that they are combined across attributes with conjunctions. In other words, each $\phi$ is of the form $\phi = \phi_{A_1} \land \dots \land \phi_{A_d}$.  This facilitates the compact implicit representations described in \cref{sec:implicit}.  One approach to handling disjunctions (and other more complex query features) is to transform the schema by merging attributes.  We illustrate this below in its application to the SF1 workload and provide a more general approach to disjunctive queries in \cref{sec:disjuncts}.

\begin{example}
The SF1 workload consists of conjunctive conditions over its attributes, with the exception of conditions on the six binary race attributes, which can be complex disjunctions of conjunctions (such as ``The number of Persons with two or more races'').  We simply merge the six binary race attributes and treat it like a single $2^6=64$ size attribute (called simply {\em Race}). This schema transformation does not change the overall domain size, but allows every SF1 query to be expressed as a conjunction.
% On {\em Race} we can vectorize arbitrarily complex conditions and combine them conjunctively with other conditions on the remaining attributes.  This schema transformation does not change the overall size of the data vector
\end{example}

\subsection{Logical view of query workloads} \label{sec:sub:logical_workloads}

A workload is a set of predicate counting queries $\Phi = \set{\phi_1, \dots, \phi_m}$. A workload may consist of queries designed to support a variety of analyses or user needs, as is the case with the SF1 workload described above.  Workloads may also be built from the sufficient statistics of models, or generated by tools that aid users in exploring data, or a combination thereof.  For the privacy mechanisms considered here it is preferable for the workload to explicitly mention all queries of interest, rather than a subset of the queries that could act like a supporting view, from which the remaining queries of interest could be computed.  Enumerating all queries of interest allows error to be optimized collectively.  In addition, a workload query can be repeated, or equivalently, weighted, to express the preference for greater accuracy on that query.

\begin{example}
	Our example workload is a subset of queries from SF1 that can be written as predicate counting queries over a \texttt{\Person} relation.  (We omit other queries in SF1 that involve groups of persons organized into households; for brevity we refer to our selected queries simply as SF1.)  Our SF1 workload has 4151 predicate counting queries, each of the form \texttt{SELECT Count(*) FROM \Person WHERE $\phi$}, where $\phi$ specifies some combination of demographic properties (e.g. number of Persons who are Male, over 18, and Hispanic) and thus each query reports a count at the national level.
\end{example}

\begin{example}
A workload we call SF1+ consists of the national level queries in SF1 {\em as well as} the same queries at the state level for each of 51 states. We can succinctly express the state level queries as an additional 4151 queries of the form:
\texttt{SELECT state, Count(*) FROM \Person WHERE $\phi$ GROUP BY state}. Thus, SF1+ can be represented by a total of $4151+4151=8302$ SQL queries. (The \texttt{GROUP BY} is a succinct way to represent a potentially large set of predicate counting queries.) The SF1+ queries are defined on a domain of size $\num{500480} \times 51=\num{25524480}$. In addition to their SQL representation, the SF1 and SF1+ workloads can be naturally expressed in a logical form defined in \cref{sec:implicit_conjunctions}. We use \Wnt and \Wst to denote the logical forms of SF1 and SF1+ respectively.
\end{example}

%(Later we will see that relative expressions of importance of queries can be accommodated through weighting the workload.)

%\begin{example}[Census SF1 workload]
%The SF1 workload consists of predicate counting queries and is carefully constructed to support various stakeholders. It is precisely described in Census documents \cite{?}. The workload defined by the Census Summary File 1 consists of predicate counting queries that support a variety of purposes and communities.  For congressional redistricting, tabulations of the voting-age (over 18) population are important, broken down by race and ethnicity, while gender is irrelevant.  But other tabulations include fine-grained breakdowns by gender and age, possibly in combination with race. The SF1 workload combines many such tabulations. It typically undergoes minor updates before each decennial census.
%\end{example}

\paragraph*{\textbf{Structured multi-dimensional workloads}}

Multi-dimens\-ional workloads are often defined in a structured form, as {\em products} and {\em unions of products}, that we will exploit later in our implicit representations.  Following the notation above, we write $\Phi_\attset{A}$ to denote a set of predicates, each mentioning only attributes in $\attset{A}$.  For example, the following are common predicate sets defined over a single attribute $A$ of tuple $t$:\vspace{1ex} \\ \vspace{1ex}
\begin{tabular}{lll}
$I$ & $\mbox{Identity}_A$ & $= \set{ \: \mathbb{I}[t_A = a] \mid a \in dom(A) \:}$  \\
$P$ & $\mbox{Prefix}_A$   & $= \set{ \: \mathbb{I}[t_A \leq a] \mid a \in dom(A) \:}$ \\
$R$ & $\mbox{AllRange}_A$ & $= \set{ \: \mathbb{I}[a \leq t_A \leq b] \mid a, b \in dom(A), a \leq b  \:}$ \\
T & $\mbox{Total}_A$ & $=\set{ \: \mathbb{I}[\text{True}] \: }$ %\ry{does this make sense?} \gm{yes, I think so because the other things in brackets are Boolean predicates; this is a predicate that always evaluates to True.} \\
\end{tabular}

Above, $\mathbb{I}$ is the indicator function, e.g., $\mathbb{I}[t_A = a] = \phi(t_A) = \begin{cases} 1 & \text{if } t_A = a \\ 0 & \text{otherwise} \end{cases} $.

$\mbox{Identity}_A$ contains one predicate for each element of the domain.  Both $\mbox{Prefix}_A$ and $\mbox{Range}_A$ rely on an ordered $dom(A)$; they contain predicates defining a CDF (i.e. sufficient to compute the empirical cumulative distribution function), and the set of all range queries, respectively. The predicate set $\mbox{Total}_A$, consists of a single predicate, returning {\em True} for any $a \in dom(A)$, and thus counting all records.

We can construct multi-attribute workloads by taking the cross-product of predicate sets defined for single attributes, and {\em conjunctively} combining individual queries.

\begin{definition}[Product] \label{def:product-workload}
The product of two predicate sets $\Phi_\attset{A}$ and $\Phi_\attset{B}$ is another predicate set $\Phi_\attset{A} \times \Phi_\attset{B} = \{ \phi_\attset{A} \wedge \phi_\attset{B} \mid \phi_\attset{A} \in \Phi_\attset{A}, \phi_\attset{B} \in \Psi_\attset{B} \} $ containing the conjunction of every pair of predicates.
\end{definition}

%Note that the crossproduct of predicate sets with only one predicate is simply a single query whose predicate is defined by the conjunction of individual attribute predicates.

We describe several examples of workloads constructed from products and unions of products below.  %First we show how a single SQL query can be expressed in product form.

%\ry{fix up examples}
\begin{example}[Single query as product] \label{ex:sql_query}
A predicate counting query in the SF1 workload is: \texttt{SELECT Count(*) FROM \Person WHERE sex=M AND age < 5}.
We can express this query as a product: first, define predicate set $\Phi_1 = \set{\mathbb{I}[t_{sex} = M] }$ and predicate set $\Phi_2 = \set{ \mathbb{I}[t_{age} < 5] }$.  The query is expressed as the product $\Phi_1 \times \Phi_2$.  (We omit Total on the other attributes for brevity.)
%% ALT VERSION
% We can express this query as a product as follows. First, we create a predicate set for Sex, denoted $\Phi_{1}$, that contains a single predicate $\Phi_{1} = [isMale]_{\texttt{Sex}}$ where $isMale$ is the predicate $isMale(t) = t.Sex == M$.
% Second, we create a singleton predicate set on Age,
% % denoted $\Phi_{2}$, that contains a single predicate
% $\Phi_{2} = [underFive]_{\texttt{Age}}$ where $underFive$ is the predicate $underFive(t) = t.Age < 5$.
% % Finally, we create Total predicate sets on all other attributes.
% The query expressed as a product is then $q = \Phi_{1} \times \Phi_2 \times \dots$ where the omitted terms are Total predicate sets on the remaining attributes.
\end{example}

\begin{example}[\texttt{GROUP BY} query as product] \label{ex:groupby}
A \texttt{GROUP BY} query can be expressed as a product by including an Identity predicate set for each grouping attribute and a singleton predicate set for each attribute in the \texttt{WHERE} clause.
% (We assume that the \texttt{WHERE} clause is a conjunction of conditions on individual attributes.  Disjunctions are discussed in~\cref{sec:implicit_conjunctions}.)
The product would also include Total for each attribute not mentioned in the query.
For example, the query \texttt{SELECT sex, age, Count(*) FROM \Person WHERE his\-panic = TRUE GROUP BY sex, age} is expressed as
$\mbox{I}_{\texttt{sex}} \times \mbox{I}_{\texttt{age}} \times \Phi_{3}$ where $\Phi_{3} = \set{\mathbb{I}[t_{hispanic} = True]}$.  This product contains $2 \times 115$ counting queries, one for each possible setting of sex and age.
\end{example}
\begin{example}[Marginal and Prefix-Marginal] \label{ex:marg}
A marginal query workload is defined by the product of one or more Identity predicates on selected attributes and Total on all other attributes.  For example, $\mbox{I}_{\tt Sex} \times \mbox{I}_{\tt Age} \times \mbox{I}_{\tt Hispanic}$ contains predicates to compute one three-way marginal, and consists of $2 \times 115 \times 2$ counting queries, one for each possible setting of Sex, Age and Hispanic.  This is equivalent to a \texttt{GROUP BY} query on Sex, Age, and Hispanic with no \texttt{WHERE} clause.  

A prefix-marginal query workload is a natural generalization of a marginal query workload which can be obtained by replacing one or more of the Identity predicates with Prefix.  For example, $\mbox{I}_{\tt Sex} \times \mbox{P}_{\tt Age} \times \mbox{I}_{\tt HIspanic}$ contains predicates to compute one three-way prefix-marginal, and consists of $ 2 \times 115 \times 2 $ counting queries of the form $ \mathbb{I}[t_{sex}=a, t_{age} \leq b, t_{hispanic} = c]$.  
% (We omit Total on the other attributes for brevity of notation.)
%Because it does not mention the other attributes, each of the counting queries will total over tuples that have {\em any} value for the unnamed attributes (Gender, Race, Hispanic).
\end{example}

Marginals query workloads are very common workloads that make sense for domains with categorical attributes, while prefix-marginals are more natural for domains with both categorical and discretized numeric attributes.
\begin{example}[SF1 Tabulation as Product] \label{ex:tabulation}
Except for the population total, the queries in the P12 tabulation of the Census SF1 workload \cite{census-sf1} can be described by a single product: $\mbox{I}_{\tt Sex} \times \mbox{R}_{\tt Age} $ where $ \mbox{R}_{\tt Age} $ is a particular set of range queries including $[0,114], [0,4]$, $[5,9]$, $[10,14], \dots [85,114]$.
\end{example}
%\begin{example}[SF1 Tabulation as Product] \label{ex:tabulation}
%One of the tabulations in the Census SF1 workload ($P12$ in \cite{census-sf1}) can be expressed as a product $\{\mbox{sex=}M, \mbox{sex=}F \}$ $\times$ $\mbox{R}_{\tt Age} $ where $ \mbox{R}_{\tt Age} $ is an irregular set of range queries including $[0,4]$, $[5,9]$, $[10,14]$,$[15,17], \dots [85,115]$.
%\end{example}
% Then the two-way marginal in the above example can be more completely specified as:
%$$\mbox{Identity}_{\tt Sex} \times \mbox{Identity}_{\tt Age} \times \mbox{Total}_{\tt Gender} \times \mbox{Total}_{\tt Race} \times \mbox{Total}_{\tt Hispanic}$$

\paragraph*{Unions of products}
Our workloads often combine multiple products as a union of the sets of queries in each product.  For example, the queries required to compute all three-way marginals could be represented as a union of ${ d \choose 3}$ workloads, each a product of the Identity predicate set applied to three attributes.  The input to the algorithms that follow is a logical workload consisting of a union of products, each representing one or possibly many queries.
% Each query is the conjunction of arbitrarily complex predicates on individual attributes.

\begin{definition}[Logical workload] \label{def:logical_workload}
A logical workload $\Wlog=\{Q_1 \dots Q_k\}$ consists of a set of products $Q_i$ where each $Q_i=\Phi^{(i)}_{A_1} \times \dots \times \Phi^{(i)}_{A_d}$.
%A logical workload $\Wlog=\{q_1 \dots q_k\}$ consists of a set of products $q_i$ where each $q_i=[\Phi_{i1}]_{A_1} \times \dots \times [\Phi_{id}]_{A_d}$.
\end{definition}

\begin{example}[SF1 as union of products] \label{ex:product_terms_sf1}
The SF1 workload can be represented in a logical form, denoted $\Wnt$, that consists of a union of $k=4151$ products, each representing a single query.  Because these queries are at the national level, there is a Total predicate set on the State attribute.
The logical form of the SF1+ workload, denoted \Wst, includes those products, plus an additional 4151 products that are identical except for replacing the Total on State with an Identity predicate set.  There are a total of $k=8302$ products, representing a total of $4151 + 51 \times 4151 = \num{215852}$ predicate counting queries.  While this is a direct translation from the SQL form, this representation can be reduced.  First, we can reduce to $k=4151$ products by simply adding $True$ to the Identity predicate set on State to capture the national counts.
Furthermore, through manual inspection, we found that both $\Wnt$ and $\Wst$ can be even more compactly represented as the union of 32 products---we  use $\Wnt^*$ and $\Wst^*$ to denote more compact logical forms. This results in significant space savings (as described shortly in \cref{ex:implicit_workload_size}) and runtime improvements.% (\cref{sec:experiments}).
\end{example}

\subsection{Explicit data and query vectorization} \label{sec:sub:vdata}

The vector representation of predicate counting queries (and the data they are evaluated on) is central to the select-measure-reconstruct paradigm.  %Vector representation makes explicit the relationship between queries, simplifying key calculations.
%such as the sensitivity of a workload and the process of inferring new query answers from measured noisy estimates.
The vector representation of instance $I$ is denoted $\x_I$ (or simply $\x$ if the context is clear) and called the {\em data vector}.

\begin{definition}[Data vector]
The data vector representation of an instance $I$, denoted $\x_I$, is a vector indexed by tuples $ t \in dom(R) $, so that $ \x_I(t) = \sum_{t' \in dom(R)} \mathbb{I}[t = t'] $.
\end{definition}

Informally, $\x(t)$ counts the number of occurences of $t$ in $I$.
%When $R$ has $d$ attributes, the data vector has a interpretation as a rank $d$ tensor.
%\footnote{When $R$ has $d$ attributes, the data vector has a multi-dimensional interpretation as a $d$-way array, or a tensor; to simplify notation we assume appropriate flattening.}  Each entry in $\x_I$ corresponds to a tuple $t\in dom(R)$ and reports the number of occurrences of $t$ in $I$.
% I don't think we use this:
%We sometimes write $[\x_I]_\attset{A}$ to denote the vectorization of a projection of the input data, $\Pi_\attset{A}(I)$.
Note that, throughout the paper, the representation of the data vector is {\em always} explicit; it is the representation of queries that will be implicit.
Every predicate counting query $\phi$ also has a corresponding vector form.
\begin{definition}[Vectorized query] \label{def:vect-query}
The vector representation of a predicate counting query $\phi$, denoted $\q_{\phi} = vec(\phi)$ is a vector indexed by tuples $t \in dom(R)$, so that $ \q_{\phi}(t) = \phi(t) $.  
\end{definition}

The function $vec(\phi)$ which transforms a logical predicate into its corresponding vector form has a simple implementation: compute $\phi(t)$ for every $t \in dom(R) $ and store the results in a vector.   Note that both the data vector and the vectorized query have size $|dom(R)| = N$.

\begin{proposition}[Query evaluation]
A predicate counting query can be answered by computing the dot product between the query vector and data vector: $ \phi(I) = \q_{\phi}^T \x_I $.
\end{proposition}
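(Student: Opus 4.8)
The plan is to unfold the definition of the dot product, substitute the definitions of the two vectors involved, and then reorganize the resulting sum so that it coincides with the definition of $\phi(I)$ as a sum over the multiset $I$.

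First I would expand the inner product from its definition on vectors indexed by $dom(R)$, writing $\q_{\phi}^T \x_I = \sum_{t \in dom(R)} \q_{\phi}(t)\, \x_I(t)$. By \cref{def:vect-query}, $\q_{\phi}(t) = \phi(t)$, and by the definition of the data vector, $\x_I(t)$ equals the multiplicity $m_I(t)$ of the tuple $t$ in the multiset $I$ --- this is precisely the informal reading noted after that definition, namely that $\x(t)$ counts the occurrences of $t$ in $I$. Substituting both gives $\q_{\phi}^T \x_I = \sum_{t \in dom(R)} \phi(t)\, m_I(t)$.

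Next I would rewrite the left-hand side of the claim. By \cref{def:query}, $\phi(I) = \sum_{t \in I} \phi(t)$, where the sum ranges over the multiset $I$, so each distinct tuple value is counted as many times as it occurs. Grouping the terms of this sum by distinct domain elements yields $\phi(I) = \sum_{t \in dom(R)} m_I(t)\, \phi(t)$, where the elements of $dom(R)$ not appearing in $I$ contribute nothing since $m_I(t) = 0$ there. Comparing the two displayed expressions establishes $\phi(I) = \q_{\phi}^T \x_I$.

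The only point requiring a little care is the bookkeeping between the multiset $I$ and its support in $dom(R)$: the regrouping step must correctly account for repeated tuples, which is exactly what the multiplicity function $m_I$ (equivalently, the data vector $\x_I$) encodes. Beyond that, the argument is a routine unfolding of definitions, so I expect no genuine obstacle.
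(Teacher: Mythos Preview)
Your proof is correct and follows essentially the same approach as the paper: both unfold the dot product, substitute the definitions of $\q_{\phi}$ and $\x_I$, and regroup the multiset sum $\sum_{t\in I}\phi(t)$ over distinct domain elements weighted by multiplicity. The paper condenses this into a single chain of equalities, while you spell out the multiset-to-support regrouping more carefully, but the argument is the same.
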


To see why this works, observe that $ \phi(I) = \sum_{t \in I} \phi(t) = \sum_{t \in dom(R)} \phi(t) \x_I(t) = \q_{\phi}^T \x_I $.
A single predicate counting query is represented as a vector, so a workload of predicate counting queries can be represented as a matrix in which queries are rows.  For logical workload $\Wlog$, its (explicit) matrix form is written $\W$, and the evaluation of the workload is equivalent to the matrix product $\W \x_I$.  Note that the size of the workload matrix is $m \times n$ where $m$ is the number of queries, $\x_I$ is $n \times 1$, and the vector of workload answers is $m \times 1$.

\section{Privacy background} \label{sec:privacy}
Differential privacy is a property of a randomized algorithm that bounds the difference in output distribution induced by changes to an individual's data. Let $\nbrs(\db)$ be the set of databases differing from $I$ in at most one record.
 %; i.e., if $\db' \in \nbrs(\db)$, then $|(\db - \db') \cup (\db' - \db)| = 1$.
%
\begin{definition}[Differential Privacy~\cite{dwork2006calibrating}] \label{def:diffp}
A randomized algorithm $\algG$ is $(\epsilon,\delta)$-differentially private if for any instance $\db$, any $\db' \in \nbrs(\db)$, and any outputs $O \subseteq Range(\algG)$,% the following holds:
\[
Pr[ \algG(\db) \in O] \leq \exp(\epsilon) \times Pr[ \algG(\db') \in O] + \delta
\]
\end{definition}

%
%We focus exclusively on $\epsilon$-differential privacy (i.e. $\delta=0$).  However our techniques also apply to a version of $\matrixMech$ satisfying approximate differential privacy ($\delta > 0)$ \cite{li2015matrix}.

When $\delta=0$, we say $\algG$ is $\epsilon$-differentially private.  In this work we consider algorithms for answering a workload of predicate counting queries, in which case we invoke the mechanism with the workload matrix and data vector: $ \algG(\W, \x) $.
%The Laplace and Gaussian mechanisms underly the mechanisms considered in this paper, and are defined below.
We focus on mechanisms that give unbiased answers to the workload queries (sometimes called data-independent mechanisms).  
%Because these mechanisms are unbiased, the expected error does not depend on the true data $\x$.  
Below we define two simple mechanisms of this form: the Laplace mechanism ($\algG = \LM$) and the Gaussian mechanism ($\algG = \GM$).

%The Laplace mechanism underlies the private mechanisms considered in this paper; we describe it in vector form. Let $\Lap(\sigma)^m$ denote a vector of $m$ independent samples from a Laplace distribution with mean 0 and scale $\sigma$.
\begin{definition}[Laplace mechanism] \label{prop:laplace}
Given an $m \times n$ query matrix $\W$, and a noise magnitude $\sigma$, the Laplace Mechanism $\LM$ outputs the vector:
$\LM(\W,\x) = \W\x + \Lone{\W} \Lap(b)^m$ where $\Lap(b)^m$ is a vector of $m$ i.i.d. samples from a Laplace distribution with scale $b$, and $\Lone{\W}$ denotes the maximum $L_1$ norm ofthe columns of $\W$.
\end{definition}

\begin{definition}[Gaussian mechanism] \label{prop:gaussian}
Given an $m \times n$ query matrix $\W$, and a noise magnitude $\sigma$, the Gaussian mechanism $\GM$ outputs the vector:
$\GM(\A,\x) = \A\x + \Ltwo{\W} \Gaus(\sigma)^m$ where $\Gaus(\sigma)^m$ is a vector of $m$ i.i.d. samples from a Gaussian distribution with scale $\sigma$ and $\Ltwo{\W}$ denotes the maximum $L_2$ norm of the columns of $\W$.
\end{definition}

The quantities $\Lone{\W}$ and $\Ltwo{\W}$ above are the {\em $L_1$ sensitivity} and {\em $L_2$ sensitivity} of the query set defined by $\W$ respectively, since they measure the maximum difference in the answers to the queries in $\W$ on any two databases that differ only by a single record \cite{li2015matrix}.  In the remainder of the paper, we will use $\norm{\W}_{\algG}$ to denote this sensitivity norm when referring to a general quantitity that applies for both Laplace noise ($\algG = \LM$) and Gaussian noise ($\algG = \GM$).  As long as $b$ or $\sigma$ is sufficiently large, these two mechanisms can provide differential privacy.  The precise conditions are stated in \cref{prop:privacy1}.

\begin{proposition}[Privacy of Laplace and Gaussian mechanisms~\cite{Dwork14Algorithmic,balle2018improving}] \label{prop:privacy1}
The Laplace mechanism is $\epsilon$-differentially private as long as $b \geq \frac{1}{\epsilon}$ and the Gaussian mechanism is $(\epsilon, \delta)$ differentially private as long as $\delta \geq \Psi(\frac{1}{2\sigma} - \epsilon \sigma) - \exp{(\epsilon)} \Psi(-\frac{1}{2 \sigma} - \epsilon \sigma) $, where $\Psi$ is the cumulative distribution function of the standard gaussian distribution.
\end{proposition}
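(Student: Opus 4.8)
The statement consists of two independent claims, and I would prove them separately. The Laplace half is the classical noise-calibration argument of~\cite{Dwork14Algorithmic}; the Gaussian half is the tight ``analytic Gaussian mechanism'' analysis of~\cite{balle2018improving}, which reduces the $m$-dimensional problem to a one-dimensional statement about the privacy loss random variable. In both cases the only structural facts needed about $\W$ are that a single-record change to $\db$ alters $\x_\db$ by (plus or minus) a standard basis vector, so that $\norm{\W\x_\db - \W\x_{\db'}}_1 \le \Lone{\W}$ and $\norm{\W\x_\db - \W\x_{\db'}}_2 \le \Ltwo{\W}$ --- which is exactly how the two sensitivity norms were defined.

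For the Laplace mechanism, fix $\db$ and $\db'\in\nbrs(\db)$. The density of $\LM(\W,\cdot)$ at an output $\y$ is proportional to $\prod_{i=1}^m \exp\!\big(-|\y_i-(\W\x)_i|/(b\,\Lone{\W})\big)$. Taking the ratio of the densities under $\db$ and $\db'$ and bounding each coordinate with the triangle inequality, the log-ratio at every $\y$ is at most $\norm{\W\x_\db-\W\x_{\db'}}_1/(b\,\Lone{\W}) \le 1/b$. When $b \ge 1/\epsilon$ this is at most $\epsilon$, so $\Pr[\LM(\W,\x_\db)\in O] \le e^{\epsilon}\Pr[\LM(\W,\x_{\db'})\in O]$ for every $O$; this is $\epsilon$-differential privacy with $\delta = 0$ by \cref{def:diffp}.

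For the Gaussian mechanism I would first invoke the privacy-loss characterization of approximate DP: $\GM$ is $(\epsilon,\delta)$-DP iff for every pair of neighbors, writing $P,Q$ for the two output distributions and $L(\y)=\ln\frac{dP}{dQ}(\y)$ for the privacy loss, $\Pr_{\y\sim P}[L(\y)>\epsilon]-e^{\epsilon}\Pr_{\y\sim Q}[L(\y)>\epsilon]\le\delta$; this follows from \cref{def:diffp} because the worst-case event is $O=\{\y:\frac{dP}{dQ}(\y)>e^{\epsilon}\}=\{L>\epsilon\}$. Here $P$ and $Q$ are $m$-variate Gaussians with common covariance $(\sigma\,\Ltwo{\W})^2\I$ and means differing by $\vect{v}:=\W\x_\db-\W\x_{\db'}$. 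A direct computation of the log-density ratio shows that under $\y\sim P$ the privacy loss $L$ is Gaussian with mean $\eta^2/2$ and variance $\eta^2$, and under $\y\sim Q$ it is Gaussian with mean $-\eta^2/2$ and variance $\eta^2$, where $\eta:=\norm{\vect{v}}_2/(\sigma\,\Ltwo{\W})\le 1/\sigma$. Substituting the corresponding Gaussian tail probabilities in terms of $\Psi$, the left-hand side of the hockey-stick bound evaluates to $\Psi\!\big(\tfrac{\eta}{2}-\tfrac{\epsilon}{\eta}\big)-e^{\epsilon}\,\Psi\!\big(-\tfrac{\eta}{2}-\tfrac{\epsilon}{\eta}\big)$.

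It remains to maximize this over the admissible range $\eta\in(0,1/\sigma]$, and this is the step I expect to be the main obstacle. The plan is to show that $g(\eta):=\Psi(\tfrac{\eta}{2}-\tfrac{\epsilon}{\eta})-e^{\epsilon}\Psi(-\tfrac{\eta}{2}-\tfrac{\epsilon}{\eta})$ is nondecreasing on $(0,\infty)$ by differentiating: using $\Psi'(u)=\tfrac{1}{\sqrt{2\pi}}e^{-u^2/2}$ and the algebraic identity $(\tfrac{\eta}{2}+\tfrac{\epsilon}{\eta})^2-(\tfrac{\eta}{2}-\tfrac{\epsilon}{\eta})^2=2\epsilon$, the factor $e^{\epsilon}$ exactly cancels the ratio of the two Gaussian densities, so that $g'(\eta)$ collapses to $\Psi'(\tfrac{\eta}{2}-\tfrac{\epsilon}{\eta})>0$. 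Monotonicity then forces the supremum over all neighbor pairs to be attained at $\eta=1/\sigma$, where the bound equals $\Psi(\tfrac{1}{2\sigma}-\epsilon\sigma)-e^{\epsilon}\Psi(-\tfrac{1}{2\sigma}-\epsilon\sigma)$. Hence any $\delta$ at least this large satisfies the hockey-stick bound for every pair of neighbors, which is precisely $(\epsilon,\delta)$-differential privacy.
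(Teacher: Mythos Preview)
The paper does not prove \cref{prop:privacy1}; it is stated as a known result with citations to \cite{Dwork14Algorithmic} and \cite{balle2018improving}, and the surrounding text explicitly points to the analytic Gaussian mechanism of Balle and Wang for the Gaussian half. Your proposal is therefore not being compared against an in-paper argument but against the cited literature, and it faithfully reproduces those arguments: the Laplace part is the standard density-ratio bound, and the Gaussian part is exactly the privacy-loss/hockey-stick computation of \cite{balle2018improving}, including the key monotonicity step (your cancellation $e^{\epsilon}\Psi'(v)=\Psi'(u)$ via $v^2-u^2=2\epsilon$ is precisely what collapses $g'(\eta)$ to $\Psi'(\eta/2-\epsilon/\eta)>0$). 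The only minor omission is that \cref{def:diffp} is symmetric in the ordered pair $(\db,\db')$, so one must also bound $Q(O)-e^{\epsilon}P(O)$; this follows immediately by the same computation with the roles of $P$ and $Q$ swapped, since the Gaussian setup is invariant under $\vect{v}\mapsto-\vect{v}$.
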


Above, the condition on $\sigma$ for the Gaussian mechanism corresponds to the so-called ``analytic gaussian mechanism'' \cite{balle2018improving}, which \emph{exactly} calibrates the minimum $\sigma$ needed to ensure $(\epsilon, \delta)$-differential privacy.  For this method, $\sigma$ can be obtained numerically with a root-finding algorithm.  Other simpler formulas exist for $\sigma$ that are not as tight, using a classical analysis of the Gaussian mechanism \cite{Dwork14Algorithmic} or through R\'{e}nyi differential privacy or concentrated differential privacy \cite{mironov2017renyi,bun2016concentrated}. 
It is straightforward to analyze the error of these two mechanisms since they add i.i.d. noise with the same (known) variance to all workload queries.  In particular, we use expected total squared error as our error metric:

\begin{definition}[Expected Error~\cite{li2010optimizing}] \label{def:error}
Given a $m \times n$ workload matrix $\W$ and a differentially-private algorithm $\algG$, the expected total squared error is:
$$ \Error(\W, \algG) = \mathbb{E}[\norm{\W \x - \algG(\W, \x)}_2^2] $$
where the expectation is taken over the randomness in the privacy mechanism $\algG$.
\end{definition}

\begin{proposition}[Error of Laplace and Gaussian mechanisms] \label{prop:error0}
The Laplace and Gaussian mechanisms are unbiased and have the following expected error:

\begin{align*}
\Error(\W, \LM) &= 2 m b^2 \Lone{\W}^2 &
\Error(\W, \GM) &= m \sigma^2 \Ltwo{\W}^2
\end{align*}
\end{proposition}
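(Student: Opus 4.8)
The plan is to compute both quantities directly from \cref{def:error}, exploiting that each mechanism returns the exact workload answers $\W\x$ plus a vector of independent, mean-zero noise whose per-coordinate variance is known in closed form.

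First I would establish unbiasedness. For the Laplace mechanism, the error vector is $\W\x - \LM(\W,\x) = -\Lone{\W}\,\Lap(b)^m$, i.e., a fixed scalar $\Lone{\W}$ times a vector of $m$ i.i.d.\ $\Lap(b)$ samples; since a Laplace variable with scale $b$ has mean $0$, the expectation of this vector is $\vect{0}$, so the mechanism is unbiased. The identical argument applies to $\GM$, using that a $\Gaus(\sigma)$ variable has mean $0$.

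Next I would compute the expected total squared error. Writing the error vector as $\e = \W\x - \LM(\W,\x)$, we have $\norm{\e}_2^2 = \Lone{\W}^2 \sum_{i=1}^m Z_i^2$ with $Z_1,\dots,Z_m$ i.i.d.\ $\Lap(b)$, so by linearity of expectation $\Error(\W,\LM) = \Lone{\W}^2 \sum_{i=1}^m \mathbb{E}[Z_i^2]$. Since $\mathbb{E}[Z_i] = 0$, we have $\mathbb{E}[Z_i^2] = \mathrm{Var}(Z_i) = 2b^2$, the variance of a scale-$b$ Laplace random variable, so the sum equals $2mb^2$ and $\Error(\W,\LM) = 2mb^2\,\Lone{\W}^2$. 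For the Gaussian mechanism the computation is word-for-word the same, with $\e = -\Ltwo{\W}\,\Gaus(\sigma)^m$ and $\mathbb{E}[G_i^2] = \sigma^2$ for $G_i$ i.i.d.\ $N(0,\sigma^2)$, giving $\Error(\W,\GM) = m\sigma^2\,\Ltwo{\W}^2$.

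There is no real obstacle here: the only inputs beyond the definitions are the second moments $2b^2$ and $\sigma^2$ of the Laplace and Gaussian distributions. The two points worth stating carefully are that the deterministic sensitivity scaling $\Lone{\W}$ (resp.\ $\Ltwo{\W}$) factors out of the squared $\ell_2$ norm as its square, and that the $m$ coordinates are i.i.d., so each contributes the same variance, producing the factor of $m$.
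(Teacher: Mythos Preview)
Your argument is correct. The paper does not actually supply a proof of this proposition; it is stated as a standard fact and left unproved, so there is no ``paper's approach'' to compare against. Your direct computation---factoring out the scalar sensitivity from the squared $\ell_2$ norm, using that the noise coordinates are i.i.d.\ mean-zero with variance $2b^2$ (Laplace) or $\sigma^2$ (Gaussian), and summing $m$ identical contributions---is exactly the intended elementary derivation.
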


As evident by \cref{prop:error0}, the expected error of these mechanisms depends crucially on the sensitivity of the query matrix $\W$, and if this is large then the total squared error will also be large.  We now introduce a generalization of these mechanisms that often has better expected error.

\subsection{The matrix mechanism} \label{sec:sub:mm}

The core idea of the matrix mechanism is to apply the mechanism $\algG$ on a new query matrix $\A$, then use the noisy answers to the queries in $\A$ to estimate answers to the queries in $\W$.  The benefits of this approach will become clear when we reason about the expected error. 

\begin{definition}[Matrix mechanism \cite{li2010optimizing}] \label{prop:matrixmech}
Given a $m \times n$ workload matrix $\W$, a $p \times n$ strategy matrix $\A$, and a differentially private algorithm $\algG(\A, \x)$ that answers $\A$ on $\x$, the mechanism $\mathcal{M}_{\A,\algG}$ outputs the following vector: $ \mathcal{M}_{\A, \algG}(\W, \x) = \W \A^+ \algG(\A,\x) $.
\end{definition}

The privacy of the Matrix mechanism follows from the privacy of $\algG$, since that is the only part of the mechanism that has access to the true data.  In this paper, we assume $\algG$ is either the Laplace mechanism or the Gaussian mechanism, although in principle other noise-addition mechanisms are also possible \cite{ghosh2012universally, hardt2010geometry, li2015matrix}.  Under some mild conditions stated below, the Matrix mechanism is unbiased and the error can be expressed analytically as shown below:

%For $\mathcal{M}_{\A, \algG}$ to  be unbiased, $\algG$ must be unbiased and add iid noise to every query (e.g., Laplace or Gaussian mechanism) and $\A$ must support $\W$ (i.e., $\W \A^+ \A = \W$) \cite{li2015optimizing}.

\begin{proposition}[Error of the Matrix mechanism \cite{li2010optimizing}] \label{prop:error}
The matrix mechanism is unbiased, i.e., $ \mathbb{E}[\mathcal{M}_{\A, \algG}(\W, \x)] = \W \x $, and has the following expected error:
\begin{align*}
\Error(\W, \mathcal{M}_{\A, \algG}) &= \Error(\A, \algG) \norm{\W \A^+}_F^2 \\
&\propto \Lk{\A}^2 \norm{\W \A^+}_F^2
\end{align*}
as long as $ \W \A^+ \A = \W $ and $\algG$ adds i.i.d. noise with mean $0$.
\end{proposition}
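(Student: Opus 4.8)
\emph{Proof plan.} The plan is to substitute the additive-noise description of $\algG$ into the definition of $\mathcal{M}_{\A,\algG}$ from \cref{prop:matrixmech}, use the hypothesis $\W\A^+\A = \W$ to cancel the deterministic part, and then evaluate the surviving noise term via a trace identity. Nothing beyond linearity of expectation and the fact that $\algG$ adds centered, i.i.d.\ noise will be needed.

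First I would write $\algG(\A,\x) = \A\x + \z$, where $\z$ is a length-$p$ random vector with i.i.d.\ coordinates, $\mathbb{E}[\z] = \bm{0}$, and common coordinate variance $v$. By \cref{prop:error0}, $v = 2b^2\Lone{\A}^2$ when $\algG = \LM$ and $v = \sigma^2\Ltwo{\A}^2$ when $\algG = \GM$; in both cases $v$ is proportional to $\Lk{\A}^2$ and is the quantity that controls $\Error(\A,\algG)$. Plugging into \cref{prop:matrixmech} and applying the hypothesis,
\begin{align*}
\mathcal{M}_{\A,\algG}(\W,\x) \;=\; \W\A^+\bigl(\A\x + \z\bigr) \;=\; \W\A^+\A\,\x + \W\A^+\z \;=\; \W\x + \W\A^+\z .
\end{align*}
Unbiasedness is then immediate: taking expectations and using $\mathbb{E}[\z] = \bm{0}$ yields $\mathbb{E}[\mathcal{M}_{\A,\algG}(\W,\x)] = \W\x$.

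For the error, \cref{def:error} together with the identity above gives $\Error(\W,\mathcal{M}_{\A,\algG}) = \mathbb{E}\bigl[\norm{\W\A^+\z}_2^2\bigr]$. I would set $\matr{B} = \W\A^+$, rewrite $\norm{\matr{B}\z}_2^2 = \mathrm{tr}\bigl(\matr{B}^T\matr{B}\,\z\z^T\bigr)$, and move the expectation inside the trace; because the coordinates of $\z$ are uncorrelated with variance $v$, $\mathbb{E}[\z\z^T] = v\I$, so $\mathbb{E}\bigl[\norm{\matr{B}\z}_2^2\bigr] = v\,\mathrm{tr}(\matr{B}^T\matr{B}) = v\,\norm{\W\A^+}_F^2$. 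Since $v \propto \Lk{\A}^2$, this is the stated proportionality $\Error(\W,\mathcal{M}_{\A,\algG}) \propto \Lk{\A}^2\,\norm{\W\A^+}_F^2$, and substituting the explicit value of $v$ from \cref{prop:error0} recovers the closed-form expression $\Error(\W,\mathcal{M}_{\A,\algG}) = \Error(\A,\algG)\,\norm{\W\A^+}_F^2$.

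I do not expect a genuine obstacle here; this is a bilinear-form variance computation. The two points that need care are (i) invoking $\W\A^+\A = \W$ at exactly the right step — without it a deterministic residual $\W\A^+\A\,\x \neq \W\x$ survives, so the mechanism is biased and the error picks up an additional term — and (ii) bookkeeping the scalar linking the per-coordinate noise variance $v$ to $\Error(\A,\algG)$ and to $b$ or $\sigma$, which has a different explicit form for Laplace versus Gaussian noise but is in both cases a workload-independent constant times $\Lk{\A}^2$, which is all the strategy-optimization routines developed later depend on.
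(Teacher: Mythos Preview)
The paper does not actually prove this proposition; it is stated as cited background from \cite{li2010optimizing} and no proof appears in the body or the appendix. Your argument is the standard one and is correct: write the additive noise as $\z$ with $\mathbb{E}[\z]=\bm{0}$ and $\mathbb{E}[\z\z^T]=v\I$, use $\W\A^+\A=\W$ to isolate $\W\A^+\z$, and reduce $\mathbb{E}\norm{\W\A^+\z}_2^2$ to $v\,\norm{\W\A^+}_F^2$ via the trace identity. One bookkeeping quibble: in your last sentence you equate $v\,\norm{\W\A^+}_F^2$ with $\Error(\A,\algG)\,\norm{\W\A^+}_F^2$, but by \cref{def:error} and \cref{prop:error0} one has $\Error(\A,\algG)=p\,v$ (total over the $p$ rows of $\A$), not $v$. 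The first displayed equality in the proposition is therefore off by a factor of $p$ as literally written; this is an imprecision in the paper's own statement, and everything that is used downstream --- the proportionality $\Error(\W,\mathcal{M}_{\A,\algG})\propto \Lk{\A}^2\norm{\W\A^+}_F^2$ and the discussion after the proposition comparing $\mathcal{M}_{\W,\algG}$ to $\algG$ --- is consistent with your (correct) computation $\Error(\W,\mathcal{M}_{\A,\algG})=v\,\norm{\W\A^+}_F^2$.
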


%Above, the term $\norm{\A}$ may refer to the $L_1$ sensitivity (if $ \algG = \LM $) or the $L_2$ sensitivity (if $\algG = \GM$).
When invoked with $ \A = \W$, the matrix mechanism is very similar to base mechanism $\algG$, but the error is typically lower.  The term $\norm{\W \W^+}_F^2$ in the error formula is equal to the rank of $\W$, which is bounded above by $m$.  This implies that the error of $\mathcal{M}_{\W, \algG}$ can never be higher than $\algG$.  Further, there are often much better strategies to select than $\A = \W$.  

%When the matrix mechanism is invoked without a concrete strategy matrix $\A$ \gm{could be confusing}, it will attempt to find the optimal strategy by minimizing expected error over the space of strategies.  
Finding the best strategy $\A$ for a given workload $\W$ is the main technical challenge of the matrix mechanism.  This strategy selection problem can be formulated as a constrained optimization problem.  However, solving this problem is computationally expensive, especially when $\algG = \LM$, where it is generally infeasible to solve it for any nontrivial input workload.

\subsection{Lower bounds on error} \label{sec:sub:lowerbounds}

An important theoretical question is to identify, or bound, how low the error of the matrix mechanism can be for a given workload $\W$.  This is useful because finding the strategy with minimum error is a difficult (and often intractable) problem, but computing a lower bound on error can be done efficiently.  Knowing how low the error can be allows one to compare the error of a concrete strategy to the lower bound to see how close to optimal it is.   Additionally, the lower bound can be used to make important policy decisions, such as setting the privacy budget, or whether it is worth investing the resources to find a good strategy (as opposed to using other types of mechanisms like data-dependent ones). 
Li and Miklau studied this problem and derived the SVD bound \cite{li13optimal}.

\begin{definition}[SVD Bound \cite{li13optimal}]
Given a $m \times n $ workload $\W$, the singular value bound is:

$$SVDB(\W) = \frac{1}{n} \big( \lambda_1 + \dots + \lambda_n \big)^2 $$

where $ \lambda_1, \dots, \lambda_n $ are the singular values of $\W$.
\end{definition}

The SVD bound gives a lower bound on the error achievable by the matrix mechanism.

\begin{proposition}[SVD Bound \cite{li13optimal}]
Given a $ m \times n $ workload $\W$ and a $p \times n $ strategy $\A$ that supports $\W$:
%$$ SVDB(\W) \leq \Ltwo{\A}^2 \norm{\W \A^+}_F^2 \leq \Lone{\A}^2 \norm{\W \A^+}_F^2$$
$$ SVDB(\W) \leq \Lk{\A}^2 \norm{\W \A^+}_F^2 $$
\end{proposition}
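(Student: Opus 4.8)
The plan is to peel off the sensitivity factor $\Lk{\A}^2$ by bounding it below with a Frobenius norm, and then to recognize what remains as a bound on the nuclear norm of a matrix product.

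First I would observe that for \emph{any} strategy $\A$, and for either noise model, $\Lk{\A}^2 \ge \tfrac{1}{n}\norm{\A}_F^2$. When $\algG = \GM$, $\Lk{\A} = \Ltwo{\A}$ is the largest $L_2$ norm among the $n$ columns of $\A$, so $\Lk{\A}^2 = \max_j \norm{\A_{:,j}}_2^2 \ge \tfrac{1}{n}\sum_{j=1}^n \norm{\A_{:,j}}_2^2 = \tfrac{1}{n}\norm{\A}_F^2$ by averaging. When $\algG = \LM$, $\Lk{\A} = \Lone{\A} = \max_j \norm{\A_{:,j}}_1 \ge \max_j \norm{\A_{:,j}}_2 = \Ltwo{\A}$ since $\norm{v}_1 \ge \norm{v}_2$ for every vector, so the same bound holds a fortiori. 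Hence it suffices to prove $\big(\lambda_1 + \dots + \lambda_n\big)^2 \le \norm{\A}_F^2\,\norm{\W\A^+}_F^2$.

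Next I would invoke the support hypothesis $\W\A^+\A = \W$ (the same condition required in \cref{prop:error}). Setting $\B := \W\A^+$, this says $\W = \B\A$. Because the singular values of $\W$ beyond its rank vanish, $\lambda_1 + \dots + \lambda_n$ equals the nuclear norm $\norm{\W}_*$ (the sum of all singular values), and the goal reduces to the clean inequality $\norm{\B\A}_* \le \norm{\B}_F\,\norm{\A}_F$. I would prove this directly from the SVD: take a thin SVD $\W = \matr{U}\matr{\Sigma}\matr{V}^T$ with $\matr{U},\matr{V}$ having orthonormal columns, so that
\begin{align*}
\norm{\W}_* &= \mathrm{tr}(\matr{\Sigma}) = \mathrm{tr}(\matr{U}^T\W\matr{V}) = \mathrm{tr}(\matr{U}^T \B \A \matr{V}) \\
&= \langle \B^T\matr{U},\, \A\matr{V}\rangle_F \;\le\; \norm{\B^T\matr{U}}_F\,\norm{\A\matr{V}}_F
\end{align*}
by Cauchy--Schwarz for the Frobenius inner product. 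Since $\matr{U}$ has orthonormal columns, $\matr{U}\matr{U}^T \preceq \I$, so $\norm{\B^T\matr{U}}_F^2 = \mathrm{tr}(\B\B^T\matr{U}\matr{U}^T) \le \mathrm{tr}(\B\B^T) = \norm{\B}_F^2$, and likewise $\norm{\A\matr{V}}_F \le \norm{\A}_F$ using $\matr{V}\matr{V}^T \preceq \I$. (Equivalently, this is H\"{o}lder's inequality for Schatten norms, $\norm{\B\A}_{S_1} \le \norm{\B}_{S_2}\norm{\A}_{S_2}$.) Chaining the pieces, $SVDB(\W) = \tfrac{1}{n}\norm{\W}_*^2 \le \tfrac{1}{n}\norm{\W\A^+}_F^2\norm{\A}_F^2 \le \norm{\W\A^+}_F^2\,\Lk{\A}^2$.

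I expect the only nonroutine step to be the nuclear-norm-of-a-product inequality. The two auxiliary facts it relies on --- that left- or right-multiplication by a matrix with orthonormal columns does not increase the Frobenius norm, and that $\mathrm{tr}(MN) \le \mathrm{tr}(M)$ whenever $M \succeq 0$ and $0 \preceq N \preceq \I$ --- are standard, and the reduction steps before and after it (Steps 1--2 and the final chaining) are immediate bookkeeping.
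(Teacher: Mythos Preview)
The paper does not supply its own proof of this proposition; it is quoted from \cite{li13optimal} and used as a black box (see \cref{sec:sub:lowerbounds}). So there is no in-paper argument to compare against.

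Your proof is correct and is essentially the standard derivation of the SVD bound. The two ingredients --- (i) the column-norm inequality $\Lk{\A}^2 \ge \tfrac{1}{n}\norm{\A}_F^2$, which holds for both $\algG=\LM$ and $\algG=\GM$ via the chain $\Lone{\A}\ge\Ltwo{\A}$ and $\max_j\norm{\A_{:,j}}_2^2 \ge \tfrac{1}{n}\sum_j\norm{\A_{:,j}}_2^2$, and (ii) the Schatten--H\"older inequality $\norm{\B\A}_* \le \norm{\B}_F\norm{\A}_F$ applied to the factorization $\W=\B\A$ with $\B=\W\A^+$ guaranteed by the support condition $\W\A^+\A=\W$ --- are exactly what the original proof in \cite{li13optimal} uses. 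Your SVD/Cauchy--Schwarz verification of (ii) is clean, and the auxiliary trace facts you cite are indeed routine. Nothing is missing.
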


The SVD Bound is known to be tight when $ \algG = \GM$ and some conditions on $\W$ are satisfied, meaning there is some strategy $\A$ that achieves the equality.  When $ \algG = \LM $ the bound may not be tight however. In this paper we use the SVD Bound to evaluate the quality of the strategies found by our optimization routines.  We also derive expressions for efficiently computing the SVD Bound for implicitly represented workloads, and use this analysis to theoretically justify our optimization routines.

\section{Optimizing Explicit Workloads} \label{sec:optimization}

In this section, we introduce the main optimization problem that underlies strategy selection for the matrix mechanism, and describe $\optgp$, our algorithm for solving it.  We assume for now that the workload is represented \emph{explicitly} as a dense matrix.  The methods we describe are useful by themselves for workloads defined over modest domains (namely, those smaller than about $n=10^4$), and they are an essential building block for the more scalable methods we describe in \cref{sec:hd_opt}. 

\subsection{The optimization problem} \label{sec:sub:problem}

Our goal is to find a query strategy that offers minimal expected error on the workload.  Using the analytic error formula from \cref{prop:error}, this can be defined as a constrained optimization problem. One formulation of this problem is stated below:

\begin{problem}[Matrix Mechanism Optimization \cite{li2015matrix}]\label{prob:strategy_selection}
Given an $m\times n$ workload matrix $ \W $:
\begin{equation} \label{eq:opt}
\begin{aligned}
& \underset{\A}{\text{minimize}}
& & \norm{\A}_{\algG}^2 \norm{ \W \A^+ }_F^2 \\
& \text{subject to} & & \W \A^+ \A = \W \\
%& & & || \A ||_{\algG}^2 \leq 1
\end{aligned}
\end{equation}
\end{problem}

For a number of reasons, this optimization problem is difficult to solve exactly: it has many variables, it is not convex, and both the objective function and constraints involve $\A^+$, which can be slow to compute. In addition, $ \norm{ \W \A^+ }_F^2 $ has points of discontinuity near the boundary of the constraint $ \W\!\A^+\!\A = \W $. This problem was originally formulated as a rank-constrained semi-definite program \cite{li2010optimizing}, and, while algorithms exist to find the global optimum, they require $O(m^4(m^4+n^4))$ time, making it infeasible in practice. 

Gradient-based numerical optimization techniques can be used to find locally optimal solutions to Problem \ref{prob:strategy_selection}.  These techniques begin by guessing a solution $\A_0$ and then iteratively improve it using the gradient of the objective function to guide the search. The process ends after a number of iterations are performed, controlled by a stopping condition based on improvement of the objective function.  The constraints complicate the problem further, but even if we ignore them,
%A direct application of gradient-based optimization methods does not work due to the constraints, and so a projected gradient method must be used instead. \ry{1. it's not clear how to project onto this constraint.  2. it's possible to ignore the constraint and the optimization will still run, but this produces poor strategies in practice}   Even without the constraints, 
gradient-based optimization is slow, as the cost of computing the objective function for general $\A$ is $O(n^3)$, e.g. requiring more than $6$ minutes for $n=8192$.

In the next sections, we provide algorithms for efficiently solving Problem \ref{prob:strategy_selection}.  We separately consider the two cases of Gaussian noise and Laplace noise, as the required techniques are quite different.

\subsection{Strategy Optimization with Gaussian Noise} \label{sec:sub:convex} 

While Problem \ref{prob:strategy_selection} with $\algG = \GM$ is not convex in its current form, it can be reformulated into an equivalent problem that is convex \cite{li2015matrix,yuan2016convex}.  The key idea is that the objective function can be expressed in terms of $\X = \A^T \A$, since $\Ltwo{\A}^2 = max(diag(\A^T \A))$ and $\norm{\W \A^+}_F^2 = tr[(\A^T \A)^+ (\W^T \W)]$.  This allows us to optimize $\X$ instead of $\A$, and then we can recover $\A$ by performing Cholesky decomposition on $\X$.  Remarkably, the resulting problem is convex with respect to $\X$. 

\begin{definition}[Convex Reformulation \cite{yuan2016convex}]\label{prob:convex}
Given a workload matrix $ \W $ of rank $n$, let $\optgp(\W) = \A$ where $\A^T \A$ is a Cholesky decomposition of $\X^*$ and:
\begin{equation} \label{eq:opt}
\begin{aligned}
\X^* = & \underset{\X}{\text{minimize}}
& & tr[\X^{-1} (\W^T \W)] \\
& \text{subject to} & & diag(\X) = \mathbf{1} \\
& & & \X \succ 0
\end{aligned}
\end{equation}
\end{definition}

%Yuan et al. proposed a gradient-based optimization algorithm for this exact problem \cite{yuan2016convex}.  Their technique assumes that $\W^T \W$ is full rank, although in practice it can run with minor modifications even if this is not the case (perhaps at the cost of theoretical guarantees).
%bout that.  How do we relate to properties of W?}.  

%Additionally, the constraint $ \W = \W \A^+ \A $ can be written as $ \X \succ 0 $ because the constraint is satisfied if and only if $\A$ has full column rank (because $\W^T \W$ is assumed to be full rank), or equivalently when $\X$ is positive definite.  Thus, Problem \ref{prob:strategy_selection} can be reformulated into Problem \ref{prob:convex}, which is a convex problem in $\X$.

While the above problem is convex, it is still nontrivial to solve due to the dependence on the matrix inverse and the constraint $ \X \succ 0 $ ($\X$ is positive definite).  The equality constraint $diag(\X)=1$ and corresponding optimization variables $diag(\X)$ can easily be eliminated from the problem since they must always equal $1$.  Additionally $\X$ must be a symmetric matrix so we can optimize over the lower triangular entries, essentially reducing the number of optimization variables by a factor of two. The full details of a conjugate gradient algorithm for solving this problem are available in \cite{yuan2016convex}.  The per-iteration runtime of their ``COA'' algorithm is $O(n^3)$, and it typically requires about 50 iterations to converge.  In practice it is able to scale up to about $n \approx 10^4$.  The algorithm works well in practice when $n$ is small, but is not particularly robust for some workloads when $n$ is larger, which we observe empirically in \cref{sec:experiments} (e.g., COA on prefix workload in \cref{table:opt02}).  

We thus design our own algorithm to solve the same optimization problem, which is based on the same principles as the COA technique, but is more robust in practice.  There are two key differences in our implementation.  First, we initialize the optimization intelligently by setting $ \X = \P \mathbf{\sqrt{\Lambda}} \P^T $ where $ \P \mathbf{\Lambda} \P^T $ is the eigen-decomposition of $\W^T \W$.  This an approximation to the optimal strategy based on the SVD bound \cite{li13optimal}, and acts as a very good initialization.  Second, instead of using the custom-designed conjugate gradient algorithm proposed in \cite{yuan2016convex}, we simply use scipy.optimize, an off-the-shelf optimizer.  We heuristically ignore the constraint $\X \succ 0$ during optimization, using a large loss value when it is not satisfied.  This makes the problem unconstrained, and readily solvable by scipy.optimize.  The constraint is verified to hold at the end of the optimization.  These changes lead to more robust optimization that produce strategies nearly matching the SVD bound, as we show experimentally in \cref{sec:experiments}.

\subsection{Strategy Optimization with Laplace noise} \label{sec:sub:generalpurpose}

Unfortunately, the techniques used in the previous section do not apply to the Laplace noise setting, as the sensitivity norm $ \Lone{\A} $ cannot be expressed in terms of $ \A^T \A $.  In this section, we describe an alternate approach to approximately solve Problem \ref{prob:strategy_selection}: parameterized strategies.  Our key idea is to judiciously restrict the search space of the optimization problem to simplify the optimization while retaining expressivity of the search space.  While our approach does not necessarily produce a globally optimal solution to Problem \ref{prob:strategy_selection}, with good parameterizations it can still find state of the art strategies.  Below we describe the idea of parameterized strategies in its full generality. Then we propose a specific parameterization that works well for a variety of input workloads. 

A parameterization is a function $ \A(\btheta) $ mapping a real-valued parameter vector $\btheta$ to a strategy $\A$.  Optimizing over a parameterized strategy space can be performed by optimizing $\btheta$ rather than $\A$.  In the extreme case, there may be one entry in $\btheta$ for every entry in $\A$, but a more careful design of the parameterization with fewer parameters and a smart mapping between the entries of the parameter vector and the entries of the strategy matrix can lead to more effective optimization.  There are several design considerations for setting up a good parameterization.  First, the parameterization must be expressive enough to encode high-quality strategies (this may depend on the workload).  Second, the parameterization should have structure that makes the optimization problem more computationally tractable (such as eliminating constraints).
%For example, the parameterization may be designed so that $ \W = \W \A^+ \A $ for all $\A = \A(\theta)$, allowing us to turn the constrained optimization problem into an unconstrained one that is easier to solve.   
Third, the parameterization may encode domain expertise about what a good strategy should look like, which could make it easier to find high-quality local minima. %For example, $\A(\btheta)$ could be designed to have uniform norm.  %\ry{this could include things like uniform column norm}

Several existing privacy mechanisms can be thought of as an instance of this parameterization framework \cite{qardaji2013understanding,li2014data,ding2011differentially,yuan2012low,li2012adaptive}.  These mechanisms typically are designed for a specific workload or workload class.  For example, Qardaji et al. consider the space of hierarchical strategies, which is parameterized by a single parameter: the branching factor of the hierarchy, which is chosen to minimize MSE on the workload of \emph{all range queries} \cite{qardaji2013understanding}.  Li et al. consider the space of weighted hierarchical strategies (with fixed branching factor), which is parameterized by vector of scaling factors for each query in the hierarchy \cite{li2014data}.  This method adapts to the input workload, but the hierarchical parameterization only works well for workloads of range queries.  Ding et al. consider the space of marginal query strategies (and workloads of the same form), the parameters can be thought of as a binary vector of length $2^d$ corresponding to which marginals should be included in the strategy \cite{ding2011differentially}.  These parameters are optimized using a greedy heuristic.  Li et al. consider the space of strategies containing the eigenvectors of the workload gram matrix, which is parameterized by a vector of scaling factors for each eigenvector \cite{li2012adaptive}.  This parameterization works well when $\algG = \GM$, but not when $ \algG = \LM $, and it has recently been subsumed by the work of Yuan et al. for $ \algG = \GM $ \cite{yuan2016convex}. 

We now present a new general-purpose parameterization, called $p$-Identity, which handles these considerations without making strict assumptions about the structure of the workload.  It also out-performs all of the existing parameterizations, even on the workloads for which they were designed. The parameters of a p-Identity strategy are more naturally interpreted as a matrix $\bTheta$ rather than a vector $\btheta$ so we instead use the notation $ \A(\bTheta) $.   

\begin{definition}[$p$-Identity strategies] \label{def:gp} %\gm{better name..?}
Given a $p \times n$ matrix of non-negative values $\matr{\Theta}$, the $p$-Identity strategy matrix $\A(\matr{\Theta})$ is defined as follows:
\begin{equation*} \label{eq:reparam}
%\hspace{4cm} 
A(\matr{\Theta}) = \begin{bmatrix} \I \\ \matr{\Theta} \end{bmatrix} \D
\end{equation*}
where $\I$ is the identity matrix and $ \D = diag(1 + \vect{1}^T \matr{\Theta})^{-1} $.
\end{definition}

Above, the matrix $\A(\bTheta)$ contains $n+p$ queries, including $n$ identity queries (i.e., queries that count the number of records in the database for each domain element in $dom(R)$) and $p$ arbitrary queries that depend on the parameters $\bTheta$.  The diagonal matrix $\D$ is used to re-weight the strategy so that $ \Lone{\A} = 1 $ and each column of $\A$ has the same $L_1$ norm.  This is an example of domain knowledge incorporated into the parameterization, as it has been shown that optimal strategies have uniform column norm \cite{li2015matrix}. \footnote{If a strategy did not, a query could be added to it without increasing sensitivity and addition of that query would result in error less than or equal to that of the original strategy.}

\begin{example}
For $p=2$ and $n=3$, we illustrate below how $\A(\matr{\Theta})$ is related to its parameter matrix, $\matr{\Theta}$. 
{\small
\begin{align*} 
\matr{\Theta} = \begin{bmatrix} 1 & 2 & 3 \\ 1 & 1 & 1 \end{bmatrix} & & &
\A(\matr{\Theta}) = \begin{bmatrix} 0.33 & 0 & 0 \\ 0 & 0.25 & 0 \\ 0 & 0 & 0.2 \\ 0.33 & 0.5 & 0.6 \\ 0.33 & 0.25 & 0.2 \end{bmatrix}
\end{align*} }
\end{example} 

For this class of parameterized strategies, the resulting optimization problem requires optimizing $\bTheta$ instead of $\A$ and is stated below; we use $\optgp$ to denote the operator that solves this problem.

\begin{definition}[parameterized optimization]\label{prob:gp}
Given a workload matrix $ \W $ and hyper-parameter $p$, let $\optgp(\W) = \A(\bTheta^*)$ where:
\begin{equation*} 
\begin{aligned}
& \bTheta^* = \underset{\matr{\Theta} \in \mathbb{R}_+^{p \times n}}{\text{argmin}}
& & \norm{ \W \A(\bTheta)^+ }_F^2 \\
\end{aligned}
\end{equation*}
\end{definition}

This parameterization was carefully designed to simplify optimization.  Because $\A(\bTheta)$ is full rank, the constraints are satisfied by construction, and we can solve an unconstrained problem instead.  Furthermore, $\norm{\A(\bTheta)}_1 = 1$ for all $\bTheta$, so that term can be removed from the objective.  Additionally, due to the special structure of $\A(\bTheta)$ we can efficiently evaluate the objective and its gradient in $O(p n^2)$ time instead of $O(n^3)$ time.  For example, when $n = 8192$ it requires $>6$ minutes to evaluate the objective for a general strategy $\A$, while it takes only $1.5$ seconds for a $p$-Identity strategy (with $p=\frac{n}{16}$), which is a $240\times$ improvement.  Despite this imposed structure, $\A(\bTheta)$ is still expressive enough to encode high-quality strategies.  Moreover, $p$ can always be tuned to balance expressivity with efficiency.  %More details regarding this parameterization are provided in the appendix.  

\eat{

\vspace{1ex}
\noindent {\bf \em Constraint resolution} \quad Problem \ref{prob:gp} is a simpler optimization problem than Problem \ref{prob:strategy_selection} because it is unconstrained: $ \W \!\A^+\! \A = \W $ for all $\A=\A(\matr{\Theta})$ because p-Identity strategies are full rank by construction.  Furthermore $\Lone{\A} = 1$ for all $\A=\A(\bTheta)$ by construction, which simplifies the objective. %\ry{say why: because $\A$ is full rank}

\vspace{1ex}
\noindent {\bf \em Expressiveness} \quad Considering only $p$-Identity strategy matrices could mean that we omit from consideration the optimal strategy, but it also reduces the number of variables, allowing more effective gradient-based optimization.  The expressiveness depends on the parameter $p$ used to define matrices $\A(\matr{\Theta})$, which is an important hyper-parameter.  
In practice we have found $p \approx \frac{n}{16}$ to provide a good balance between efficiency and expressiveness for complex workloads (such as the set of all range queries). For less complex workloads, an even smaller $p$ may be used.  

\vspace{1ex}
\noindent {\bf \em Efficient gradient, objective, and inference} \quad 
To a first approximation, the runtime of gradient-based optimization is
$\mbox{\sc \#restarts}*\mbox{\sc \#iter}*(\mbox{\sc cost}_{grad}+\mbox{\sc cost}_{obj})$,
where $\mbox{\sc cost}_{grad}$ is the cost of computing the gradient, $\mbox{\sc cost}_{obj}$ is the cost of computing the objective function, {\sc \#restarts} is the number of random restarts, % (needed to explore the space), 
and {\sc \#iter} is the number of iterations per restart.  
For strategy $\A$, the objection function, denoted $C(\A)$, and the gradient function, denoted $ \frac{\partial C}{\partial \A} $, are defined: 
\begin{align}
C(\A) &= \norm{ \W \A^+ }_F^2 = tr[(\A^T \A)^+ (\W^T \W)] \label{eq:obj} \\  
\frac{\partial C}{\partial \A} &= -2 \A (\A^T \A)^+ (\W^T \W) (\A^T \A)^+ \label{eq:grad}
\end{align}

Note that the above expressions depend on $\W$ through the matrix product $\W^T\W$, which can be computed once and cached for all iterations and restarts; we therefore omit this cost from complexity expressions.  It always has size $n \times n$ regardless of the number of queries in $\W$.  For highly structured workloads (e.g all range queries), $\W^T\W$ can be computed directly without materializing $\W$, so we allow $\optgp$ to take $\W^T\W$ as input in these special cases. 

For general $\A$, the runtime complexity of computing $C(\A)$ and $ \frac{\partial C}{\partial \A} $ is $O(n^3)$. By exploiting the special structure of $\A(\matr{\Theta})$, we can reduce these costs to $ O(p n^2) $:

\begin{restatable}[Complexity of $\optgp$]{theorem}{thmoptgp}\label{thm:optgp}
Given any $p$-Identity strategy $\A(\bTheta)$, both the objective function $C(\A(\matr{\Theta}))$
and the gradient $ \frac{\partial C}{\partial \A}$ can be evaluated in $ O(p n^2) $ time.  	
\end{restatable}

The proof of this statement and the corresponding algorithm for efficiently evaluating the objective function is given in the appendix. The speedup resulting from this parameterization is often much greater in practice than the $\frac{n}{p}$ improvement implied by the theorem.  When $n = 8192$, computing the objective for general $\A$ takes $>6$ minutes, while it takes only $1.5$ seconds for a $p$-Identity strategy: a $240\times$ improvement.  %As we show in \cref{sec:running}, $p$-Identity matrices also support efficient inference because the calculation of $\A^+ \y$ is simplified.
Nevertheless, $\optgp$ is only practical for modest domain sizes ($n \sim 10^4$). For multi-dimensional data we do not use it directly, but as a subroutine in the algorithms to follow.

}

\subsection{Strategy Visualization}

\begin{figure}
\subcaptionbox{ \label{fig:chol} Cholesky strategy}{\includegraphics[width=0.325\textwidth]{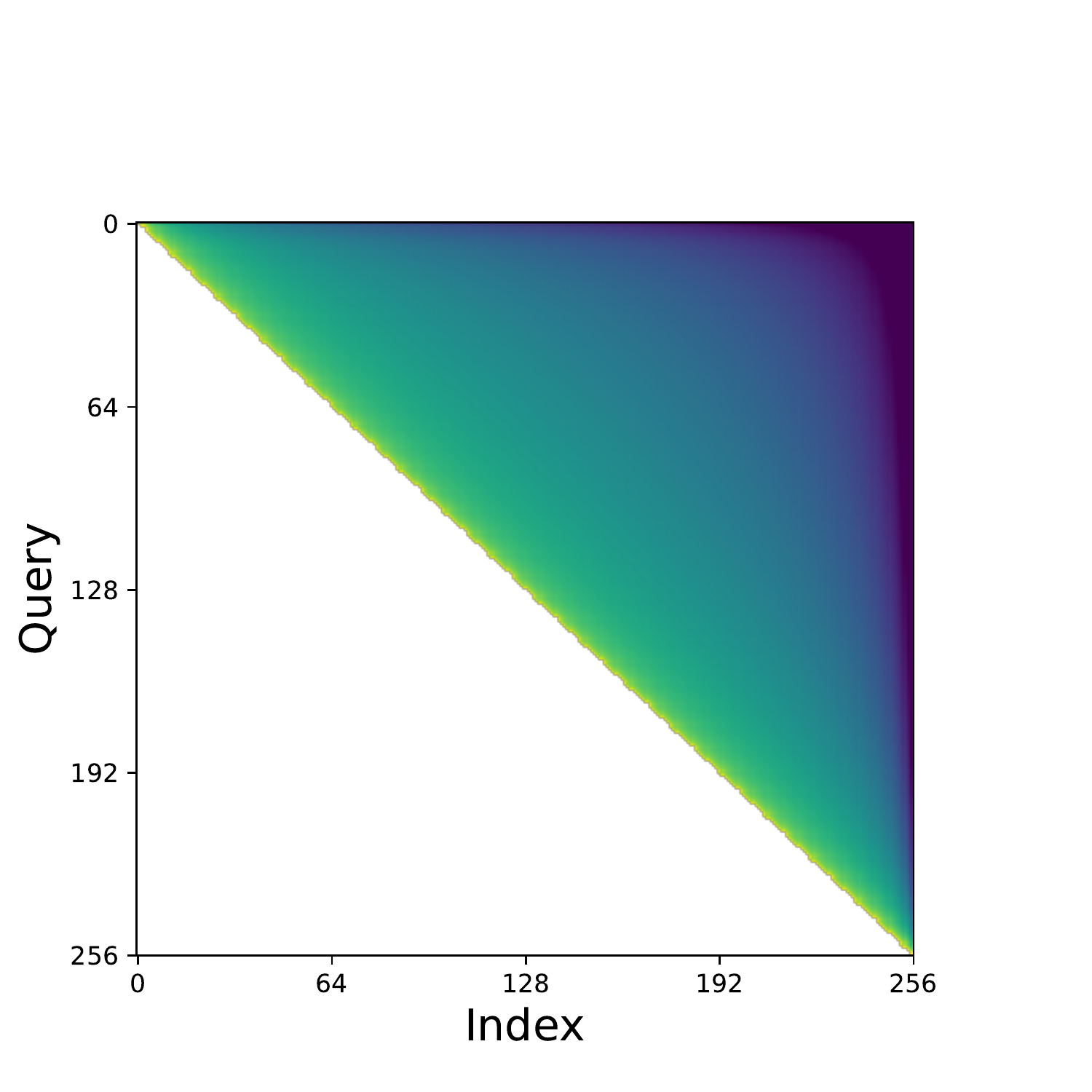}}
\subcaptionbox{ \label{fig:pid} p-Identity strategy \\ (p=16)}{\includegraphics[width=0.325\textwidth]{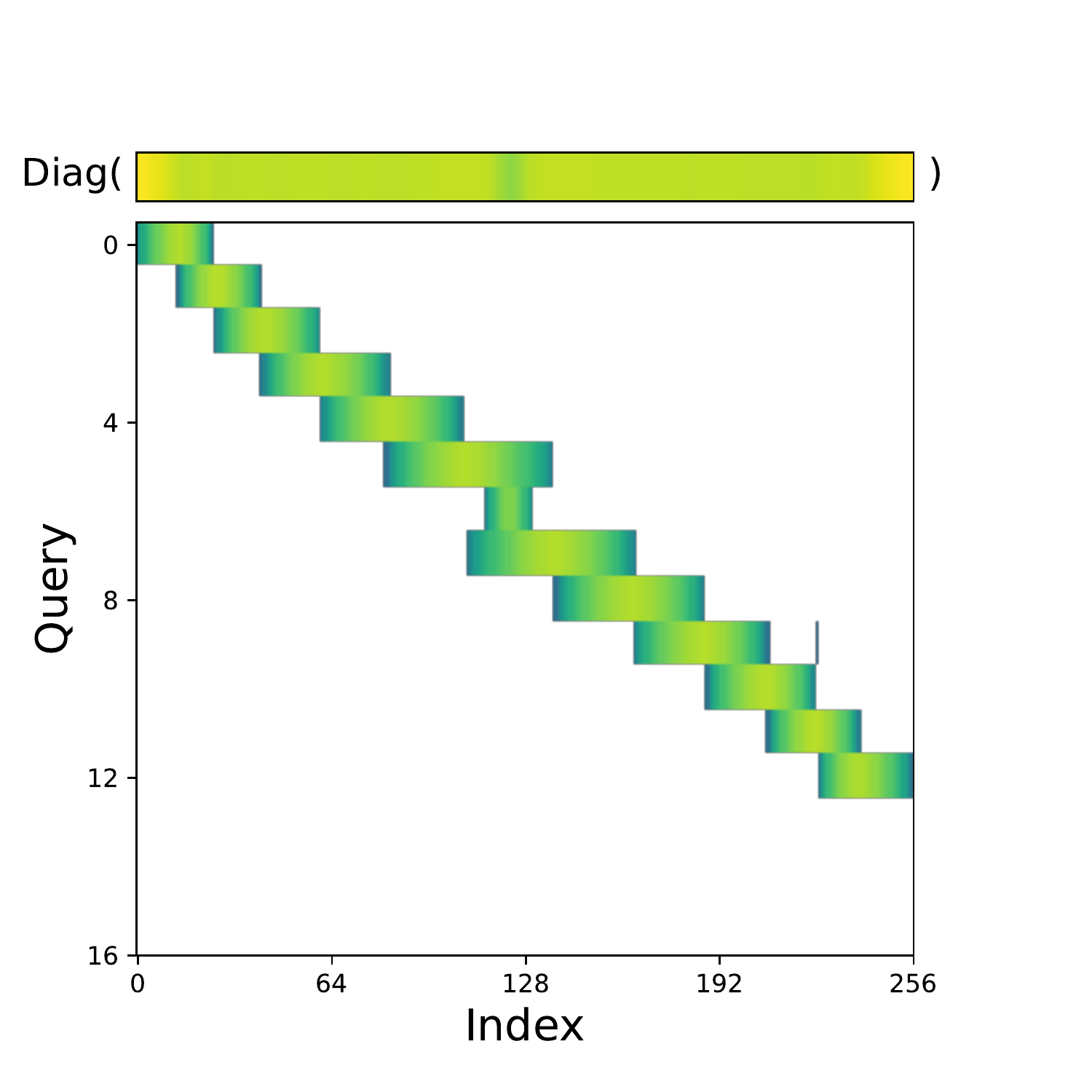}}
\subcaptionbox{ \label{fig:h16} Hierarchical strategy \\ (b=16)}{\includegraphics[width=0.325\textwidth]{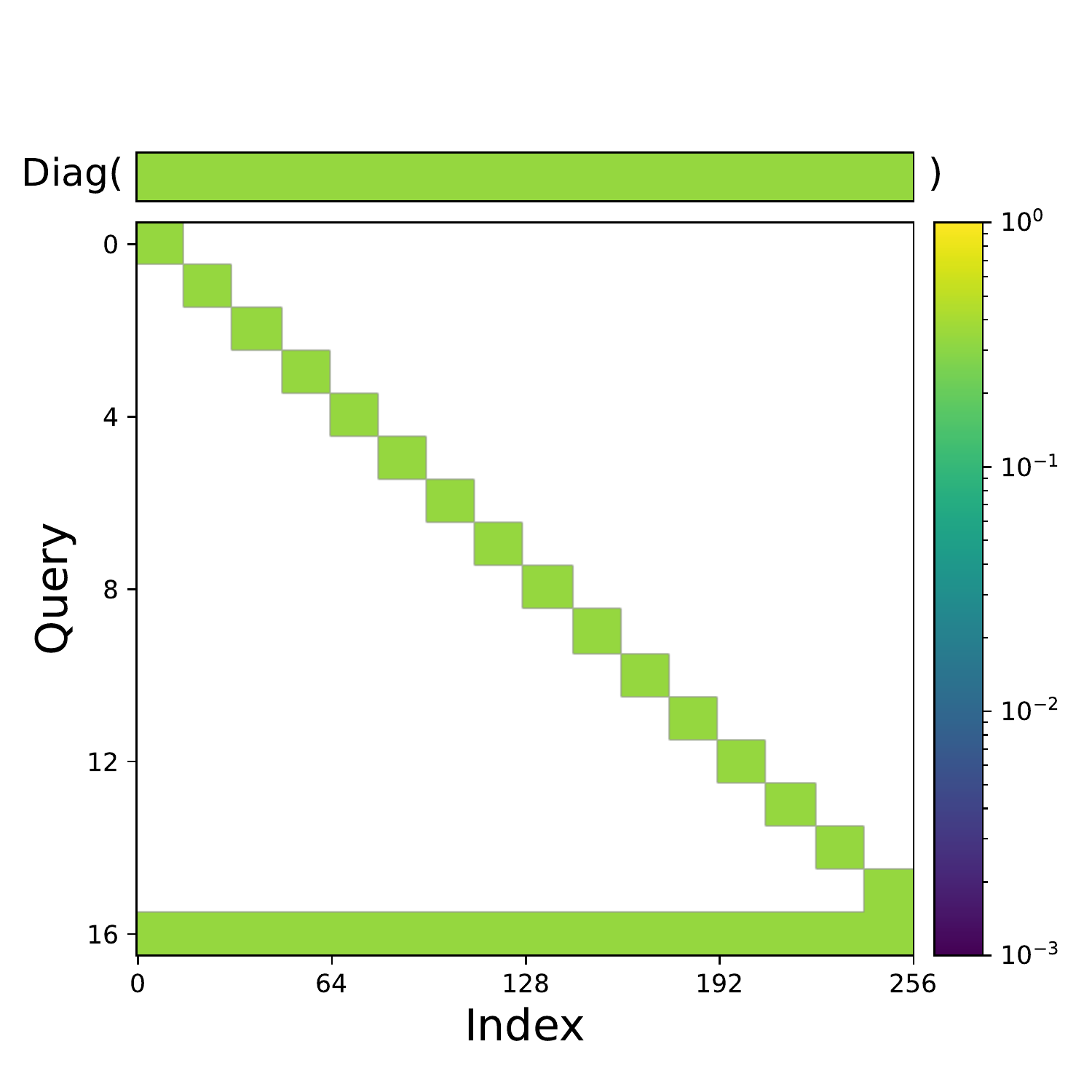}}
\caption{ \label{fig:onedviz} Visualization of three different strategies for answering the workload of All Range queries on a domain of size $256$.  Figures (a) and (b) show strategies optimized by HDMM, and Figure (c) shows a hierarchical strategy with branching factor $16$, a previously state-of-the-art strategy for this workload \cite{hay2010boosting,qardaji2013understanding}.  Each row is a query, and cells are color coded according to their value in the strategy matrix.  Figure (a) plots each query as one very thin row, while Figure (b) and (c) only plot the non-trivial queries as thicker rows.  The diagonal queries are plotted separately as a single row above the main plot, but it actually represents $256$ different queries.
}
\end{figure}

An interested reader may wonder what the optimized strategies actually look like for some common workloads.  In \cref{fig:onedviz}, we plot three strategies designed for the workload of All Range queries.  \cref{fig:chol,fig:pid} show the strategies produced by HDMM for Gaussian and Laplace noise, respectively, and \cref{fig:h16} shows a previously state-of-the-art strategy for range queries.  These visualizations provide interesting insight into the nature of the solution, and reveal why HDMM succeeds in reducing error.

In \cref{fig:chol}, the strategy is an upper triangular square matrix; this is by construction as it is obtained through a Cholesky decomposition.\footnote{There may be equally good strategies that do not have this upper triangular structure, but HDMM will always find one with this structure.}  In each row, weights are largest near the main diagonal, and gradually decrease the further away it gets.  This can be observed from the smooth transition from yellow to green to blue in \cref{fig:chol}.

In \cref{fig:pid}, the strategy was optimized with $p=16$, but only $13$ non-zero queries were found.  Each query has varying width, ranging between about $16$ and $64$, and has greatest weight towards the middle of the query, with gradually decreasing weights away from the center.  In most cases, the queries overlap with half of the two neighboring queries.  The weights on the identity queries are approximately uniform throughout at around $0.5$, with higher weights near the edges.  There is a very natural reason why this structure works well for range queries.  Summing up adjacent queries leads to a bigger query which looks approximately like a range query.  It will have a long uniform center, and decaying weights on the edges.  These decaying weights can be increased to match the uniform center by drawing on the answers from the uniform queries.  Thus, any range query can be answered by summing up a relatively small number of strategy query answers.

This same intuition was used in the derivation of the hierarchical strategy in \cref{fig:h16}.  However, this strategy answers some range queries more effectively than others.  It struggles for queries that require summing up many identity queries.  For example, it can answer the range $[0,16)$ using one strategy query, but it requires summing up $16$ strategy queries to answer the range $[8,24)$.

%Since $\A(\matr{\Theta})$ has full column rank for any $ \matr{\Theta} $, $\A^+ = (\A^T \A)^{-1} \A^T$, and $(A^T A)^{-1}$ can be obtained for free as a by-product of the optimization.  Thus, the calculation of $\A^+ \y $ can be carried out in $ O(n^2)$ time, and that can be further improved to $O(n p)$ by using the same ideas that accelerated the calculation of the objective function.  

%!TEX root = paper.tex

\section{Implicit representations for conjunctive query sets} \label{sec:implicit}

The optimization methods we described in the previous section work well for small and modest domain sizes; we were able to run them on domains as large as $n=8192$ (see Section \ref{sec:experiments} for a scalability analysis).  However, these methods are fundamentally limited by the need to represent the workload and strategy explicitly in matrix form.  It requires $0.5$ gigabytes just to store a square matrix of size $8192$ using $4$ byte floats, and it is time consuming to perform nontrivial matrix operations on objects of this size.  This limitation is not unique to our mechanism, but is shared by \emph{all possible} methods that optimize explicitly represented workload matrices.  

To overcome this scalability limitation, we propose \emph{implicit query matrices}, which exploit structure in conjunctive query sets and offer a far more concise representation than materialized explicit matrices, while still being able to encode query sets containing an arbitrary collection of conjunctive queries.  These representations are lossless; they save space by avoiding significant redundancy, rather than making approximations.  As we will show later in this section, many important matrix operations can be performed efficiently using the implicit representation.  This property of the representation will be essential for solving the strategy optimization problem efficiently on large domains. 

\subsection{Implicitly vectorized conjunctions} \label{sec:implicit_conjunctions}

Consider a predicate defined on a single attribute, $A$, where $|dom(A)|=n_A$. This predicate, $\phi_{A}$, can be vectorized with respect to just the domain of $A$ (and not the full domain of all attributes) similarly to \cref{def:vect-query}.
When a predicate is formed from the conjunction of such single-attribute predicates, its vectorized form has a concise implicit representation in terms of the {\em outer product} between vectors.  

\begin{definition}[Outer product]
The outer product between two vectors $\q_A$ and $ \q_B $, denoted $ \q_A \otimes \q_B $, is a vector indexed by pairs $t = (t_A, t_B)$ such that: $(\q_A \otimes \q_B)(t) = \q_A(t_A) \q_B(t_B)$.
\end{definition}

The outer product is useful for representing conjunctions in vector form.

\begin{theorem}[Implicit vectorization] \label{thm:implicit-vec}
The vector representation of the conjunction $\phi = \phi_A \wedge \phi_B $ is $ vec(\phi) = vec(\phi_A) \otimes vec(\phi_B) $.
\end{theorem}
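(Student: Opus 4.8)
The plan is to reduce the claim to an entrywise identity between two vectors indexed by the same set, and then verify that identity by unwinding three definitions in turn: the vectorization of a predicate counting query (Definition~\ref{def:vect-query}), the outer product, and the conjunction of predicates on (disjoint) attribute sets. Since everything in sight is defined pointwise, this will be a short proof by definition-chasing; the only real content is making sure the index sets line up.

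First I would fix the index set. Writing $\attset{A}$ and $\attset{B}$ for the attribute sets on which $\phi_A$ and $\phi_B$ are defined, and using the standing assumption that queries are combined across \emph{disjoint} attribute sets, we have $dom(\attset{A} \cup \attset{B}) = dom(\attset{A}) \times dom(\attset{B})$, so every $t \in dom(\attset{A} \cup \attset{B})$ decomposes uniquely as $t = (t_A, t_B)$ with $t_A \in dom(\attset{A})$ and $t_B \in dom(\attset{B})$. Both $vec(\phi)$ and $vec(\phi_A) \otimes vec(\phi_B)$ are vectors indexed by exactly this set, so it suffices to show they agree at each such $t$.

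Next I would compute each side. On the left, Definition~\ref{def:vect-query} gives $vec(\phi)(t) = \phi(t) = (\phi_A \wedge \phi_B)(t)$. The key elementary observation is that conjunction of $\{0,1\}$-valued functions coincides with multiplication: $(\phi_A \wedge \phi_B)(t) = \phi_A(t_A) \cdot \phi_B(t_B)$, since the product equals $1$ precisely when both factors equal $1$ and $0$ otherwise. On the right, the definition of the outer product gives $(vec(\phi_A) \otimes vec(\phi_B))(t) = vec(\phi_A)(t_A)\, vec(\phi_B)(t_B)$, and applying Definition~\ref{def:vect-query} to each single-set predicate turns this into $\phi_A(t_A)\, \phi_B(t_B)$. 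Chaining these equalities yields $vec(\phi)(t) = (vec(\phi_A) \otimes vec(\phi_B))(t)$ for all $t$, which is the claim. A straightforward induction on the number of conjuncts then extends the identity to $\phi = \phi_{A_1} \wedge \dots \wedge \phi_{A_d}$, giving $vec(\phi) = vec(\phi_{A_1}) \otimes \dots \otimes vec(\phi_{A_d})$.

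There is essentially no hard step: the only point requiring care is the bookkeeping of index sets, namely that $dom(\attset{A} \cup \attset{B})$ really is the Cartesian product $dom(\attset{A}) \times dom(\attset{B})$, which is exactly where disjointness of the attribute sets is used (without it, one would have to reinterpret the statement on a diagonal subset). I would also remark that the same computation, read one level up at the level of matrices rather than individual queries, is precisely what later licenses the Kronecker-product representation of product workloads.
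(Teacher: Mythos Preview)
Your proof is correct and follows essentially the same approach as the paper: the paper's argument is the one-line observation that $\phi(t) = \phi_A(t_A)\,\phi_B(t_B)$ because multiplication and logical AND coincide on binary inputs, which is exactly the defining equation of the outer product. Your write-up simply makes the index-set bookkeeping (and the role of disjointness) more explicit than the paper does.
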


To see why this is true, observe that $ \phi(t) = \phi_A(t_A) \phi_B(t_B)$, since multiplication and logical AND are equivalent for binary inputs, and this is the exact equation that defines the outer product for vectors. 
While the explicit representation of $vec(\phi)$ has size $n_A \cdot n_B$, the implicit representation, $vec(\phi_A) \otimes vec(\phi_B)$, requires storing only $vec(\phi_A)$ and $vec(\phi_B)$, which has size $n_A+n_B$.  
%More generally, for a predicate counting query that is a conjunction of predicates on each of the $d$ attributes, the implicit vectorized representation of the query has size just $\Sigma_{A \in \attset{A}} n_A$ while the explicit representation has size $\Pi_{A \in \attset{A}} n_A$.
%

\begin{example} \label{ex:imp_vec}
Recall that the workload $\Wnt$ consists of 4151 queries, each defined on a data vector of size \num{500480}.  Since explicitly vectorized queries are the same size as the domain, the size of the explicit workload matrix is $4151 \times \num{500480}$, or $8.3$GB.
% \mhref{The SF1 workload consists of mostly conjunctive conditions over the attributes, with the exception of conditions on the six binary race attributes, which can be complex disjunction of conjunctions (such as ``The number of Persons with two or more races'').  To enable implicit representation, we simply merge the six binary race attributes and treat it like a single $2^6=64$ size attribute (called simply {\em Race}). On {\em Race} we can vectorize arbitrarily complex conditions and combine them conjunctively with other conditions on the remaining attributes.  This schema transformation does not change the overall size of the data vector, but allows all of
Using the implicit representation, each query can be encoded using $2+2+64+115+17=200$ values, for a total of 3.3MB.
For $\Wst$, which consists of \num{215852} queries on a data vector of size $\num{25524480}$, the explicit workload matrix would require 22TB of storage. In contrast, the implicit vector representation would require 200MB.
\end{example}

\subsection{Implicitly vectorized products}

Product workloads (as in \cref{def:product-workload} can be encoded even more efficiently using the Kronecker product.

\begin{definition}[Kronecker product]
For two matrices $\A \in \mathbb{R}^{m_A \times n_A}$ and $ \B \in \mathbb{R}^{m_B \times n_B}$, their Kronecker product is $ \A \otimes \B \in \mathbb{R}^{m_A m_B \times n_A n_B} $, where rows are indexed by pairs $q = (q_A, q_B)$ and columns are indexed by pairs $t = (t_A, t_B)$ such that:

$$ (\A \otimes \B)(q, t) = \A(q_A, t_A) \B(q_B, t_B) $$
\end{definition}

The Kronecker product is a generalization of the outer product, where each row is an outer product between a pair of rows from $\A$ and $\B$.  Thus, the same symbol $\otimes$ is used for both operations.

\begin{theorem}[Implicit vectorization]  \label{thm:implicit}
Let $\Phi_A$ and $ \Phi_B $ be two predicate sets defined  on attributes $A$ and $B$ respectively.  Then $ vec(\Phi_A \times \Phi_B) = vec(\Phi_A) \otimes vec(\Phi_B) $.
\end{theorem}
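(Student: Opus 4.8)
The plan is to reduce the statement, which concerns \emph{sets} of predicates and hence whole matrices, to the single-predicate identity already established in \cref{thm:implicit-vec}, and then perform the (routine) index bookkeeping that matches the row ordering of the product predicate set $\Phi_A \times \Phi_B$ with the row ordering of the Kronecker product of matrices. First I would fix notation: $vec(\Phi_A)$ denotes the matrix whose rows are $vec(\phi_A)$ for $\phi_A \in \Phi_A$ (each of length $n_A = |dom(A)|$), and likewise $vec(\Phi_B)$; and recall from \cref{def:product-workload} that the rows of $vec(\Phi_A \times \Phi_B)$ are indexed by pairs $(\phi_A,\phi_B) \in \Phi_A \times \Phi_B$, the row for such a pair being $vec(\phi_A \wedge \phi_B)$.

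Next I would verify the equality entrywise. Both matrices have rows indexed by pairs $(\phi_A,\phi_B)$ and columns indexed by pairs $t = (t_A,t_B) \in dom(A) \times dom(B)$ (which is exactly $dom(\{A,B\})$). On the left, the $((\phi_A,\phi_B),t)$ entry is $vec(\phi_A \wedge \phi_B)(t) = (\phi_A \wedge \phi_B)(t) = \phi_A(t_A)\,\phi_B(t_B)$, where the last step uses that logical AND coincides with multiplication on $\{0,1\}$ — exactly the computation behind \cref{thm:implicit-vec}. On the right, by the definition of the Kronecker product of matrices, the same entry equals $vec(\Phi_A)(\phi_A,t_A)\cdot vec(\Phi_B)(\phi_B,t_B)$, which by \cref{def:vect-query} is $\phi_A(t_A)\cdot\phi_B(t_B)$. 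The two agree for every row and column index, so the matrices are equal.

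Equivalently, and perhaps more transparently, I could argue at the level of whole rows: by \cref{thm:implicit-vec} the row of $vec(\Phi_A \times \Phi_B)$ belonging to $(\phi_A,\phi_B)$ is the \emph{vector} outer product $vec(\phi_A) \otimes vec(\phi_B)$; and by definition the Kronecker product $vec(\Phi_A) \otimes vec(\Phi_B)$ is precisely the matrix whose rows are all such pairwise outer products of a row of $vec(\Phi_A)$ with a row of $vec(\Phi_B)$, listed in the same pair order. Hence the two matrices coincide. Iterating this identity over attributes (a trivial induction using associativity of $\otimes$) then yields $vec(\Phi^{(i)}_{A_1} \times \dots \times \Phi^{(i)}_{A_d}) = vec(\Phi^{(i)}_{A_1}) \otimes \dots \otimes vec(\Phi^{(i)}_{A_d})$, the representation of products used for logical workloads in \cref{sec:implicit}. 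There is no genuine obstacle here beyond the notational hazard of $vec$ and $\otimes$ each doing double duty (vectors vs.\ matrices, outer vs.\ Kronecker product) and keeping the row/column index orderings consistent on the two sides; all of the mathematical content is carried by \cref{thm:implicit-vec}, i.e.\ by the identity $\phi_A \wedge \phi_B = \phi_A \cdot \phi_B$ for boolean-valued functions.
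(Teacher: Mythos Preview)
Your proposal is correct and follows essentially the same approach as the paper: the paper's justification is the one-line observation that the product predicate set is a cartesian product of conjunctions (\cref{def:product-workload}) and each conjunction vectorizes as an outer product (\cref{thm:implicit-vec}), which is exactly what you do, only with the row/column index bookkeeping spelled out in more detail than the paper bothers with.
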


The proof of this claim follows immediately from \cref{def:product-workload}, since it contains a cartesian product of conjunctions, and each conjunction is an outer product.  We implicitly represent a product workload in matrix form by storing the factors of the Kronecker product ($\A = vec(\Phi_A)$ and $ \B = vec(\Phi_B)$).  This requires $m_A n_A + m_B n_B$ space, rather than $m_A m_B n_A n_B$ space, which is required by the explicit representation, and $ m_A m_B (n_A + n_B) $ for the implicit representation using a list of outer products.  Thus, the savings can be quite substantial.  While \cref{thm:implicit} assumes the predicates are conjunctions, we show in \cref{sec:disjuncts} that disjunctions can be handled in a similar manner.

\subsection{Workload encoding algorithm} \label{sec:sub:encoding}

%\mhref{We assume the workload, $\Wlog$, is provided in a logical form based on the concepts introduced in \cref{sec:background}.  \hl{To be precise, we assume the workload is encoded as a \emph{union of products}.}  This is without loss of generality: a product encodes a set of (one or more) predicate counting queries and thus a union of products is a set of sets of predicate counting queries.
% (We also saw in \cref{ex:sql_query} that encoding a query as a product is trival.)
%The benefit of encoding the workload in this way is that sometimes it admits a more succinct representation (recall the \texttt{GROUP BY} query in \cref{ex:groupby}).
%Each product may optionally be weighted by a real value reflecting its importance.}{Should this be moved to the previous section?  This is really about problem formulation, not about the implicit vector representation.}

% We assume the workload, $\Wlog$, is provided in the logical form described in \cref{sec:background}. Each query may consist of arbitrarily complex predicates on individual attributes, combined conjunctively.   Without loss of generality, we assume queries are expressed as a union of products (possibly with one predicate per predicate set, in which case each product is a single query).  Each product may optionally be weighted by real value reflecting its importance.

Given as input a logical workload $\Wlog$ (as in \cref{def:logical_workload}), the $\Imp$ algorithm produces an implicitly represented workload matrix with the following form:
\begin{equation} \label{eq:wsum}
\WW =
\begin{bmatrix} w_1\WW_1 \\ \vdots \\ w_k\WW_k
\end{bmatrix} =
\begin{bmatrix} w_1 (\W_1^{(1)} \otimes & \dots & \otimes \W_d^{(1)})  \\ \vdots & \ddots & \vdots \\ w_k (\W_1^{(k)} \otimes & \dots & \otimes \W_d^{(k)})
\end{bmatrix}
\end{equation}
Here stacking sub-workloads is analogous to union and in formulas we will sometimes abuse notation and write an implicit union-of-products workload as $\WW_{[k]} = w_1\WW_1 \plus \dots \plus w_k\WW_k$.  We use blackboard bold font to distinguish an implicitly represented workload $\WW$ from an explicitly represented workload $\W$.
%which consists of a union of product terms. Each term represents one or many queries and the union operation corresponds to stacking the corresponding sub-matrices.\\
%
%
%
\begin{table}[h]
\begin{tabular}{ll} \hline
\multicolumn{2}{l}{{\bf Algorithm 1}: $\Imp$} \\ \hline
  \multicolumn{2}{l}{ {\bf Input}: Workload $\Wlog=\{q_1 \dots q_k\}$ and weights $w_1 \dots w_k$ }\\
  \multicolumn{2}{l}{ {\bf Output}: Implicit workload $\WW$} \\ \hline
1.  & For each product $q_i \in \Wlog$: $q_i=\Phi^{(i)}_{A_1} \times \dots \times \Phi^{(i)}_{A_d}$ \\
2. & \quad For each $j\in[1..d]$ \\
3. & \quad \quad compute $\W^{(i)}_j = vec(\Phi^{(i)}_{A_j})$ \\
4. & \quad Let $\WW_i = \W_1^{(i)} \otimes  \dots \otimes \W_d^{(i)}$ \\
5. & {\bf Return:} $w_1\WW_1 \plus \dots \plus w_k\WW_k$ \\  \hline
\end{tabular}
\end{table}

Note that line 3 of the $\Imp$ algorithm is \emph{explicit} vectorization, as in \cref{def:vect-query}, of a set of predicates on a single attribute.

\begin{example} \label{ex:implicit_workload_size}
Recall from \cref{ex:product_terms_sf1} that the \num{215852} que\-ries of \Wst can be represented as $k=8032$ products.  If $\Wst$ is represented in this factored form, the $\Imp$ algorithm returns a smaller implicit representation, reducing the 200MB (from \cref{ex:imp_vec}) to 50 MB.  If the workloads are presented in their manually factored format of $k=32$ products, the implicit representation of $\Wnt^*$ requires only 335KB, and $\Wst^*$ only 687KB.
\end{example}

% \begin{example} \label{ex:product_terms_sf1}
% \mh{move earlier}
% \hl{With $\Wnt$ as input to the above algorithm, the output consists of $k=4151$ products, each representing a single query. $\Wst$ includes those queries, plus an additional 4151 Group By queries, with grouping attribute State, for a total of $k=8302$ products, representing 211701 queries.}
% \end{example}
%
% \begin{example} \label{ex:factored_sf1}
% \mhref{Through manual inspection, we found that both $\Wnt$ and $\Wst$ can be implicitly represented as the union of 32 products.}{this part could also be earlier; though the next part is actually about implicit representation}  If $\Wnt$ and $\Wst$ are represented in this factored form, the above algorithm returns smaller implicit representations, each with $k=32$. This further reduces the representation of $\Wnt$ from 3.2MB to 335KB, and $\Wst$ from 200MB to 5.3MB.
% \end{example}

\eat{
The SF1 specification \cite{census-sf1} does not include weights, however accuracy is a high priority for a special subset of SF1 queries; we return to this in \cref{sec:experiments}.
}
%The efficiency of some of the optimization methods described in the next sections benefits from more concise representations (i.e. smaller $k$).  We describe in \cref{sec:running} a performance optimization which can automatically generate approximations of an input workload with small $k$.

%\gmref{Clearly not all workloads are crossproduct workloads. However, any workload can be represented as the union of multiple crossproduct workloads (if nothing else, then trivially, with each crossproduct describing a single query).  The union of crossproduct workloads may be represented simply by ``stacking'' the vector representations of each crossproduct workload.

%In general, we assume the workload is provided to our algorithm in the logical form described in \cref{sec:background}, where  each query may consists of arbitrarily complex predicates on each attribute that are conjunctively combined.  We then vectorize using \cref{thm:implicit-vec}, resulting in workload that is a union of queries represented as outer products. }{merge, add algorithm?}

\subsection{Operations on vectorized objects} \label{sec:sub:kron-identity}
Reducing the size of the workload representation is only useful if critical computations can be performed without expanding them to their explicit representations. Standard properties of the Kronecker product~\cite{van2000ubiquitous} accelerate strategy selection and reconstruction. 

%For strategy selection, a critical object is the matrix product $\WW^T\WW$.  When $\WW=\W_1 \otimes \dots \otimes \W_d$, then $\WW^T\WW=\W_1^T\W_1 \otimes \dots \otimes \W_d^T\W_d$.  For the inference phase of reconstruction, computing the pseudo-inverse $\AA^{+}$ is the main challenge.  If $\AA=\A_1 \otimes \dots \otimes \A_d$, then $\AA^{+}=\A_1^{+} \otimes \dots \otimes \A_d^{+}$.
%
The following properties allow us to perform useful operations on Kronecker products without materializing their full matrices.

\begin{proposition}[Kronecker identities] \label{prop:kron} Kronecker products satisfy the following identities ~\cite{van2000ubiquitous}:\\
\begin{tabular}{llc}
Transpose: & $ (\A \otimes \B)^{T} = \A^{T} \otimes \B^{T} $ \\
Pseudo Inverse: & $ (\A \otimes \B)^+ = \A^+ \otimes \B^+ $ \\
Associativity: & $ (\A \otimes \B) \otimes \C = \A \otimes (\B \otimes \C) $  &  \\
Mixed Product: & $ (\A \otimes \B)(\C \otimes \D) = (\A \C) \otimes (\B \D) $ \\
%Matrix Equations: & $ (\A \otimes \B) flat(\X) = flat(\A \X \B^T) $ \\
\end{tabular}
\end{proposition}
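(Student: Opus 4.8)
The plan is to verify each of the four identities by a direct entrywise computation, using the definition of the Kronecker product in which rows are indexed by pairs $q=(q_A,q_B)$, columns by pairs $t=(t_A,t_B)$, and $(\A\otimes\B)(q,t)=\A(q_A,t_A)\B(q_B,t_B)$; the pseudoinverse identity is the one that benefits from a slightly less naive argument, so I handle it last by reusing the first two.

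First I would do the \textbf{transpose} identity: $(\A\otimes\B)^T(t,q)=(\A\otimes\B)(q,t)=\A(q_A,t_A)\B(q_B,t_B)=\A^T(t_A,q_A)\B^T(t_B,q_B)$, which is precisely the $(t,q)$ entry of $\A^T\otimes\B^T$; since the two matrices agree in every entry they are equal. Next, for the \textbf{mixed product} identity (assuming the dimensions make $\A\C$ and $\B\D$ well-defined), I would expand the $((q_A,q_B),(s_A,s_B))$ entry of the left-hand side as a double sum over the intermediate pair index $(t_A,t_B)$:
\[
\sum_{t_A}\sum_{t_B}\A(q_A,t_A)\B(q_B,t_B)\C(t_A,s_A)\D(t_B,s_B)
=\Big(\sum_{t_A}\A(q_A,t_A)\C(t_A,s_A)\Big)\Big(\sum_{t_B}\B(q_B,t_B)\D(t_B,s_B)\Big),
\]
where the sums separate because the two factors involve disjoint summation indices; this equals $(\A\C)(q_A,s_A)\,(\B\D)(q_B,s_B)$, the corresponding entry of $(\A\C)\otimes(\B\D)$. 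For \textbf{associativity}, both $(\A\otimes\B)\otimes\C$ and $\A\otimes(\B\otimes\C)$ have entries equal to the triple product $\A(q_A,t_A)\B(q_B,t_B)\C(q_C,t_C)$; the only subtlety is that the row and column index sets are nested as $((q_A,q_B),q_C)$ versus $(q_A,(q_B,q_C))$, so I would fix the obvious bijection between these two index sets and note the entries coincide under it — this is the standard identification that lets one drop parentheses and write $\A\otimes\B\otimes\C$ unambiguously.

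For the \textbf{pseudoinverse} identity I would avoid trying to manipulate $(\A\otimes\B)^+$ directly and instead verify that $X:=\A^+\otimes\B^+$ satisfies the four Moore--Penrose conditions relative to $M:=\A\otimes\B$, which determine $M^+$ uniquely. Using the mixed-product identity, $MX=(\A\A^+)\otimes(\B\B^+)$ and $XM=(\A^+\A)\otimes(\B^+\B)$. Then $MXM=(\A\A^+\A)\otimes(\B\B^+\B)=\A\otimes\B=M$ and $XMX=(\A^+\A\A^+)\otimes(\B^+\B\B^+)=\A^+\otimes\B^+=X$, using the defining identities of $\A^+$ and $\B^+$. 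The symmetry conditions $(MX)^T=MX$ and $(XM)^T=XM$ follow from the transpose identity just proved together with the fact that $\A\A^+,\A^+\A,\B\B^+,\B^+\B$ are all symmetric. By uniqueness of the Moore--Penrose pseudoinverse, $X=M^+$, i.e. $(\A\otimes\B)^+=\A^+\otimes\B^+$.

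The only step that needs genuine thought is the pseudoinverse claim: the tempting analogy with $(\A\otimes\B)^{-1}=\A^{-1}\otimes\B^{-1}$ only holds for invertible square factors, so routing through the four Penrose conditions — and recycling the transpose and mixed-product identities established earlier — is what keeps the argument clean. Everything else is routine bookkeeping with pair indices.
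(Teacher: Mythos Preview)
Your argument is correct. Each of the four identities is handled cleanly: the transpose and associativity claims are immediate from the entrywise definition, the mixed-product identity follows from separating the double sum, and the pseudoinverse identity is correctly established by verifying the four Moore--Penrose conditions for $\A^+\otimes\B^+$ relative to $\A\otimes\B$ and invoking uniqueness.

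As for comparison with the paper: the paper does not actually prove this proposition. It is stated as a collection of standard Kronecker-product identities and attributed to the cited reference~\cite{van2000ubiquitous}; no argument appears in the text or the appendix. So you have supplied a proof where the authors simply defer to the literature. Your route through the Penrose conditions for the pseudoinverse claim is the standard textbook argument and exactly what one would expect to find in the cited source.
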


In addition to the standard properties of Kronecker product above, we can prove additional properties about them which are useful for our privacy mechanism.  

\begin{restatable}[Norm of a Kronecker product]{theorem}{thmkronnorm} \label{thm:kron-sens}
The following matrix norms decompose over the factors of the Kronecker product $\A_1 \otimes \dots \otimes \A_d$.
\begin{align*}
\Lone{\A_1 \otimes \dots \otimes \A_d} &= \prod_{i=1}^d \Lone{\A_i} \\
\Ltwo{\A_1 \otimes \dots \otimes \A_d} &= \prod_{i=1}^d \Ltwo{\A_i} \\
\norm{\A_1 \otimes \dots \otimes \A_d}_F &= \prod_{i=1}^d \norm{\A_i}_F \\
\end{align*}
\end{restatable}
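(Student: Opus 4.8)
The plan is to reduce all three claims to a single underlying fact about the Kronecker product: for matrices $\A$ and $\B$, the columns of $\A \otimes \B$ are exactly the Kronecker products of pairs of columns, one from $\A$ and one from $\B$. Concretely, if $\a_j$ denotes the $j$-th column of $\A$ and $\b_\ell$ the $\ell$-th column of $\B$, then the column of $\A \otimes \B$ indexed by the pair $(j,\ell)$ is $\a_j \otimes \b_\ell$. This follows directly from the definition of the Kronecker product, since $(\A \otimes \B)((q_A,q_B),(t_A,t_B)) = \A(q_A,t_A)\B(q_B,t_B)$, so fixing the column index $(t_A,t_B)$ and varying the row index gives exactly the vector $\a_{t_A} \otimes \b_{t_B}$. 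By associativity of $\otimes$ (Proposition~\ref{prop:kron}), the same statement extends by induction to $\A_1 \otimes \dots \otimes \A_d$: every column is $\a_1 \otimes \dots \otimes \a_d$ where $\a_i$ ranges over columns of $\A_i$.

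Next I would record the multiplicativity of the relevant vector norms over Kronecker products of \emph{vectors}. For vectors $\u$ and $\v$, the entries of $\u \otimes \v$ are all products $u_i v_j$, so $\norm{\u \otimes \v}_1 = \sum_{i,j} |u_i v_j| = (\sum_i |u_i|)(\sum_j |v_j|) = \norm{\u}_1 \norm{\v}_1$, and similarly $\norm{\u \otimes \v}_2^2 = \sum_{i,j} u_i^2 v_j^2 = \norm{\u}_2^2 \norm{\v}_2^2$, hence $\norm{\u \otimes \v}_2 = \norm{\u}_2 \norm{\v}_2$. By induction these extend to $d$-fold products: $\norm{\u_1 \otimes \dots \otimes \u_d}_p = \prod_i \norm{\u_i}_p$ for $p \in \{1,2\}$.

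Now I assemble the three statements. For the $L_1$ sensitivity, $\Lone{\A_1 \otimes \dots \otimes \A_d}$ is the maximum $L_1$ norm over columns of the Kronecker product. By the column-structure fact, each column is $\a_1 \otimes \dots \otimes \a_d$ with $\a_i$ a column of $\A_i$, and by vector-norm multiplicativity its $L_1$ norm is $\prod_i \norm{\a_i}_1$. Maximizing a product of nonnegative quantities, where each factor $\norm{\a_i}_1$ can be chosen independently by choosing the column of $\A_i$ independently, gives $\max \prod_i \norm{\a_i}_1 = \prod_i \max_{\a_i} \norm{\a_i}_1 = \prod_i \Lone{\A_i}$. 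The $L_2$ sensitivity case is identical with $L_1$ replaced by $L_2$ throughout. For the Frobenius norm, $\norm{\cdot}_F^2$ is the sum of squared column norms (equivalently, sum of squares of all entries); summing $\prod_i \norm{\a_i}_2^2$ over all choices of columns factors as $\prod_i \big(\sum_{\a_i} \norm{\a_i}_2^2\big) = \prod_i \norm{\A_i}_F^2$, and taking square roots gives the claim. (Alternatively, $\norm{\A \otimes \B}_F^2 = \mathrm{tr}[(\A\otimes\B)^T(\A\otimes\B)] = \mathrm{tr}[(\A^T\A)\otimes(\B^T\B)] = \mathrm{tr}(\A^T\A)\,\mathrm{tr}(\B^T\B)$ using the mixed-product identity and the fact that the trace of a Kronecker product factors, then induct.)

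The only mildly delicate point — the "main obstacle," though it is not a serious one — is the interchange of maximization with the product in the sensitivity cases: one must note that the column indices of the factors $\A_i$ can be selected independently (the columns of $\A_1 \otimes \dots \otimes \A_d$ are indexed by the full Cartesian product of the factors' column index sets) and that all norms involved are nonnegative, so $\sup$ of a product of independently-varying nonnegative terms equals the product of the suprema. Everything else is a routine induction on $d$ using Proposition~\ref{prop:kron} (associativity and mixed product) as the inductive machinery.
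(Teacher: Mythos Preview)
Your proposal is correct and is essentially the same argument as the paper's. The paper proves the two-factor case by direct entrywise manipulation---writing $(\A\otimes\B)(q,t)=\A(q_A,t_A)\B(q_B,t_B)$, factoring the sum over $q=(q_A,q_B)$, and then splitting the $\max$ over $t=(t_A,t_B)$---which is exactly your column-structure plus vector-norm-multiplicativity argument unpacked at the entry level; your explicit remark that the column indices vary independently and the norms are nonnegative is precisely the justification for the paper's passage from $\max_t$ to $\max_{t_A}\max_{t_B}$.
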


The proof is largely routine algabraic manipulation and is given in the appendix.  We will later use these identities to efficiently evaluate TSE for workloads and strategies built with Kronecker products.

\section{Optimizing conjunctive query workloads with conjunctive query strategies} \label{sec:hd_opt}

We now turn our attention to optimizing implicitly-represented conjunctive query workloads.  We assume the workload is a \emph{union of Kronecker products}, and that it takes the form shown in \cref{eq:wsum} (restated below):

\begin{equation} \tag{\ref{eq:wsum}}
\WW =
\begin{bmatrix} w_1\WW_1 \\ \vdots \\ w_k\WW_k
\end{bmatrix} =
\begin{bmatrix} w_1 (\W_1^{(1)} \otimes & \dots & \otimes \W_d^{(1)})  \\ \vdots & \ddots & \vdots \\ w_k (\W_1^{(k)} \otimes & \dots & \otimes \W_d^{(k)})
\end{bmatrix}
\end{equation}

For notational convenience, we will often write $\WW = w_1 \WW_1 + \dots + w_k \WW_k $, where $ \WW_i = \W_1^{(i)} \otimes \dots \otimes \W_d^{(i)}$ and `$+$' serves the role of stacking matrices.  The methods described in this section will exploit the special structure of this class of workloads to scale more effectively than techniques described in \cref{sec:optimization}.

\subsection{Optimizing product workloads.} \label{sec:sub:optk1}

We begin by considering a special case of the workload in \cref{eq:wsum} that is a \emph{single Kronecker product}. We describe an elegant approach to optimize workloads of this form by analyzing the optimization problem and utilizing properties of Kronecker products.  

\begin{definition}[$\optk$] \label{def:optk}
Given a Kronecker product workload $ \WW = \W_1 \otimes \dots \otimes \W_d $ and an optimization oracle $\optgp$, let $ \optk(\WW) = \A_1 \otimes \dots \otimes \A_d $ where $\A_i = \optgp(\W_i)$.
\end{definition}

Above, $\optgp$ may be any strategy optimization routine that consumes an explicitly represented workload and returns a strategy matrix, such as the techniques discussed in the previous section.  $\optk$ is very efficient: it simply requires solving $d$ small subproblems rather than one large problem.  This decomposition of the objective function has a well-founded theoretical justifiication.  Namely, if we restrict the solution space to a (single) Kronecker product strategy of the form $\AA = \A_1 \otimes \dots \otimes \A_d$, then the error of the workload under $\AA$ decomposes over the factors of the Kronecker products as shown in the following theorem:

\begin{theorem}[Error decomposition]\label{thm:err-decomp}
Given a workload $\WW = \W_1 \otimes \dots \otimes \W_d $ and a strategy $\AA = \A_1 \otimes \dots \otimes \A_d$, the error is:
$$ \Lk{\AA}^2 \norm{ \WW \AA^+ }_F^2 = \prod_{i=1}^d \Lk{\A_i}^2 \norm{\W_i \A_i^+}_F^2 $$
\end{theorem}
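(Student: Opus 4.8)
The plan is to prove the claim by factoring each of the three quantities $\Lk{\AA}$, $\norm{\WW\AA^+}_F$, and (implicitly) the matrix product $\WW\AA^+$ over the Kronecker factors, and then multiplying the pieces back together. The main tools are \cref{prop:kron} (the standard Kronecker identities: pseudo-inverse and mixed product) and \cref{thm:kron-sens} (the sensitivity and Frobenius norms of a Kronecker product decompose as a product over factors).

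\textbf{Step 1: Factor the sensitivity term.} By \cref{thm:kron-sens}, applied with $\algG$ being either the Laplace or Gaussian mechanism, we immediately get $\Lk{\AA} = \Lk{\A_1 \otimes \dots \otimes \A_d} = \prod_{i=1}^d \Lk{\A_i}$, hence $\Lk{\AA}^2 = \prod_{i=1}^d \Lk{\A_i}^2$. (The theorem covers $\norm{\cdot}_{\LM}$, $\norm{\cdot}_{\GM}$, and $\norm{\cdot}_F$ simultaneously, so this handles whichever norm $\Lk{\cdot}$ denotes.)

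\textbf{Step 2: Factor the matrix product $\WW\AA^+$.} Using the pseudo-inverse identity from \cref{prop:kron}, $\AA^+ = (\A_1 \otimes \dots \otimes \A_d)^+ = \A_1^+ \otimes \dots \otimes \A_d^+$. Then, by repeated application of the mixed product property $(\A \otimes \B)(\C \otimes \D) = (\A\C) \otimes (\B\D)$ (extended to $d$ factors by associativity), we obtain
\[
\WW \AA^+ = (\W_1 \otimes \dots \otimes \W_d)(\A_1^+ \otimes \dots \otimes \A_d^+) = (\W_1 \A_1^+) \otimes \dots \otimes (\W_d \A_d^+).
\]

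\textbf{Step 3: Factor the Frobenius norm and combine.} Now apply the Frobenius-norm case of \cref{thm:kron-sens} to the Kronecker product from Step 2:
\[
\norm{\WW\AA^+}_F = \norm{(\W_1\A_1^+) \otimes \dots \otimes (\W_d\A_d^+)}_F = \prod_{i=1}^d \norm{\W_i \A_i^+}_F,
\]
so $\norm{\WW\AA^+}_F^2 = \prod_{i=1}^d \norm{\W_i\A_i^+}_F^2$. Multiplying the results of Steps 1 and 3,
\[
\Lk{\AA}^2 \norm{\WW\AA^+}_F^2 = \Big(\prod_{i=1}^d \Lk{\A_i}^2\Big)\Big(\prod_{i=1}^d \norm{\W_i\A_i^+}_F^2\Big) = \prod_{i=1}^d \Lk{\A_i}^2 \norm{\W_i\A_i^+}_F^2,
\]
which is the claimed identity.

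There is essentially no hard part here: the proof is a bookkeeping exercise in chaining together \cref{prop:kron} and \cref{thm:kron-sens}. The only point requiring a modicum of care is justifying that the mixed-product identity extends from two factors to $d$ factors — this follows by an easy induction using associativity of $\otimes$, grouping $\W_1 \otimes (\W_2 \otimes \dots \otimes \W_d)$ and $\A_1^+ \otimes (\A_2^+ \otimes \dots \otimes \A_d^+)$ and applying the two-factor rule, then recursing on the tail. One should also implicitly note that $(\W_i\A_i^+)$ being well-defined requires the inner dimensions to match, which holds since $\A_i$ supports $\W_i$ in each coordinate (and indeed $\AA$ supporting $\WW$ reduces to $\A_i$ supporting $\W_i$ for each $i$, another consequence of the mixed-product and pseudo-inverse identities applied to the constraint $\WW\AA^+\AA = \WW$). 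I would keep the write-up to a few lines, citing \cref{prop:kron} and \cref{thm:kron-sens} explicitly at each step.
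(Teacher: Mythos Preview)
Your proposal is correct and matches the paper's intended argument: the paper does not give an explicit proof of \cref{thm:err-decomp}, treating it as an immediate consequence of \cref{prop:kron} (pseudo-inverse and mixed product) together with \cref{thm:kron-sens} (factorization of the sensitivity and Frobenius norms), which is precisely the chain you wrote out. One small quibble: the product $\W_i\A_i^+$ is well-defined purely by dimension matching ($\W_i$ is $m_i\times n_i$ and $\A_i^+$ is $n_i\times p_i$), not because $\A_i$ supports $\W_i$; the support condition is only needed for the error formula in \cref{prop:error} to be meaningful, not for the algebraic identity itself.
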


The overall error is minimized when $ \A_i $ optimizes $\W_i$ for each $i$, thus it makes sense to optimize each $\A_i$ separately.   If we expect the optimal strategy to be a single Kronecker product, then this approach seems quite appealing.   However it is possible that there exists a strategy that is \emph{not} a single Kronecker product that offers lower error than the best Kronecker product strategy.  The following theorem shows that this is not the case, and gives further justification for the above method, showing that the SVD bound also decomposes over the factors of the Kronecker product.  

\begin{theorem}[SVD bound decomposition]
Given a workload $\WW = \W_1 \otimes \dots \otimes \W_d $, the SVD bound is:
$$ SVDB(\WW) = \prod_{i=1}^d SVDB(\W_i) $$
\end{theorem}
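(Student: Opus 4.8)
The plan is to reduce the statement about $SVDB$ to a statement about singular values of a Kronecker product. Recall that $SVDB(\WW) = \frac{1}{n}\left(\sum_{j=1}^n \lambda_j\right)^2$ where $\lambda_1, \dots, \lambda_n$ are the singular values of $\WW$ and $n$ is its number of columns. So the essential fact I need is: if $\sigma^{(i)}_1, \dots, \sigma^{(i)}_{n_i}$ are the singular values of $\W_i$ (padding with zeros if $\W_i$ has fewer rows than columns), then the singular values of $\WW = \W_1 \otimes \dots \otimes \W_d$ are exactly the products $\sigma^{(1)}_{j_1} \sigma^{(2)}_{j_2} \cdots \sigma^{(d)}_{j_d}$ over all index tuples $(j_1, \dots, j_d)$. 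This is a standard fact about Kronecker products, and it follows quickly from the mixed-product property in \cref{prop:kron}: if $\W_i = U_i \Sigma_i V_i^T$ is the SVD of $\W_i$, then $\bigotimes_i \W_i = \left(\bigotimes_i U_i\right)\left(\bigotimes_i \Sigma_i\right)\left(\bigotimes_i V_i\right)^T$, and since a Kronecker product of orthogonal matrices is orthogonal and a Kronecker product of (rectangular) diagonal matrices is (rectangular) diagonal with entries equal to the products of the diagonal entries, this is an SVD of $\WW$, so its singular values are the claimed products.

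Given this, the computation is purely arithmetic. Let $n = \prod_i n_i$ be the number of columns of $\WW$. Then
\begin{align*}
SVDB(\WW) &= \frac{1}{n}\left(\sum_{j_1, \dots, j_d} \sigma^{(1)}_{j_1}\cdots\sigma^{(d)}_{j_d}\right)^2
= \frac{1}{\prod_i n_i}\left(\prod_{i=1}^d \sum_{j_i=1}^{n_i}\sigma^{(i)}_{j_i}\right)^2 \\
&= \prod_{i=1}^d \frac{1}{n_i}\left(\sum_{j_i=1}^{n_i}\sigma^{(i)}_{j_i}\right)^2
= \prod_{i=1}^d SVDB(\W_i),
\end{align*}
where the crucial middle step is that a sum over a product index set of products of factors factors as a product of sums (distributivity), and then the squaring and the $1/n$ both distribute over the product.

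For the order of steps: first I would state and justify the Kronecker SVD fact (singular values multiply), citing the mixed-product property and the fact that Kronecker products of orthogonal matrices are orthogonal. Second, I would substitute into the definition of $SVDB$ and carry out the factorization above. One point to be careful about is the zero-padding convention: $SVDB(\W_i)$ as defined uses all $n_i$ singular values of $\W_i$ including zeros if $\W_i$ is rank-deficient or has fewer rows than columns, so I must make sure the index sets line up ($j_i$ ranges over $1, \dots, n_i$ in both the Kronecker expansion and in each $SVDB(\W_i)$); since appending zero singular values does not change either side, this is fine, but it should be noted rather than glossed over.

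The main obstacle is really just getting the Kronecker-SVD bookkeeping right — making sure the list of singular values of $\WW$ (with multiplicity, and with the right count, namely $\prod_i n_i$ of them) matches exactly the multiset of products $\{\prod_i \sigma^{(i)}_{j_i}\}$, so that the sum in $SVDB(\WW)$ genuinely equals the full product-of-sums with no missing or extra terms. Once that is pinned down, the remainder is a one-line distributivity computation and there is no analytic difficulty.
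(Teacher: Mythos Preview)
Your argument is correct and is exactly the natural route: use the mixed-product identity to see that an SVD of $\W_1\otimes\cdots\otimes\W_d$ is $(\bigotimes_i U_i)(\bigotimes_i \Sigma_i)(\bigotimes_i V_i)^T$, read off that the singular values are all products $\prod_i \sigma^{(i)}_{j_i}$, and then factor the sum. The paper does not actually include a proof of this theorem (it is stated in the main text but not among the proofs in the appendix), so there is nothing substantive to compare against; your write-up would serve as the missing proof, and the care you take about zero-padding so that each index $j_i$ ranges over $1,\dots,n_i$ is the right thing to flag.
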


If there exist strategies $\A_1, \dots, \A_d$ that achieve the SVD bound for $\W_1, \dots, \W_d$ and we can find them, then we can construct a Kronecker product strategy $\AA = \A_1 \otimes \dots \otimes \A_d$ that achieves the SVD bound for $\WW = \W_1 \otimes \dots \otimes \W_d$.   Since no other strategy can have lower error than the SVD bound, this implies that \emph{the optimal strategy is a Kronecker product}.  This gives excellent justification for optimizing over the space of Kronecker products, especially when the SVD bound is tight.

%The observant reader may note that this approach can be seen as an instance of \emph{parameterized strategies} described earlier, that exploits domain knoweldge about Kronecker product workloads.  Specifically, if each $\A_i$ is a parameterized strategy with parameters $\btheta_i$, then $\AA$ is a parameterized strategy with parameters $\btheta=(\btheta_1, \dots, \btheta_d)$ such that $\AA(\btheta) = \A_1(\btheta_1) \otimes \dots \otimes \A_d(\btheta_d)$.  

\subsection{Optimizing union-of-product workloads.} \label{sec:sub:optk2}

The approach just described is principled and effective when the workload is a single Kronecker product.  We now turn our attention to the more general case where the workload is a union of Kronecker products.  Here, the right approach is less clear. We define three approaches for optimizing implicit workloads in the form of \cref{eq:wsum}.  Each approach restricts the strategy to a different region of the full strategy space for which optimization is tractable. The first computes a strategy consisting of a single product; it generalizes $\optk$.  The second, $\optkk$, can generate strategies consisting of unions of products.  The third, $\optm$, generates a strategy of weighted marginals. The best approach to use will generally depend on the workload, and we will provide some practical guidance and intuition to understand the situations in which each method works best.  

\vspace{1ex}
\noindent {\bf \em Single-product output strategy} \quad
For weighted union of product workloads, if we restrict the optimization problem to a single product strategy, then the objective function decomposes as follows:
\begin{theorem}
Given workload $\WW=w_1\WW_1 \plus \dots \plus w_k\WW_k$ and strategy $\AA = \A_1 \otimes \dots \otimes \A_d$, workload error is:
\begin{equation} \label{eq:unionkron}
\begin{split}
\Lk{\AA}^2 \norm{ \WW \AA^+ }_F^2 &= \Lk{\AA}^2 \sum_{j=1}^k w_j^2 \norm{\WW_j \AA^+}_F^2 \\
&= \sum_{j=1}^k w_j^2 \prod_{i=1}^d \Lk{\A_i}^2 \norm{\W^{(j)}_i \A_i^+}_F^2 
\end{split}
\end{equation}
\end{theorem}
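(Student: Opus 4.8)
The plan is to establish the two equalities separately, each by a short structural argument. For the first equality, I would use the fact that $\WW$ is the vertical stacking of the blocks $w_1\WW_1, \dots, w_k\WW_k$. Right-multiplying a stacked matrix by $\AA^+$ acts blockwise, so $\WW\AA^+$ is the vertical stacking of $w_1\WW_1\AA^+, \dots, w_k\WW_k\AA^+$. Since the squared Frobenius norm of a matrix is the sum of the squares of its entries, it is additive over any partition of the rows into blocks; hence $\norm{\WW\AA^+}_F^2 = \sum_{j=1}^k \norm{w_j\WW_j\AA^+}_F^2 = \sum_{j=1}^k w_j^2 \norm{\WW_j\AA^+}_F^2$, where the scalars $w_j$ are pulled out of the norm. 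Multiplying both sides by $\Lk{\AA}^2$ gives the first equality.

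For the second equality, I would work term by term. Fix $j$ and recall $\WW_j = \W_1^{(j)} \otimes \dots \otimes \W_d^{(j)}$ and $\AA = \A_1 \otimes \dots \otimes \A_d$. By the Pseudo-Inverse identity of \cref{prop:kron}, $\AA^+ = \A_1^+ \otimes \dots \otimes \A_d^+$, and by the Mixed Product identity, $\WW_j \AA^+ = (\W_1^{(j)}\A_1^+) \otimes \dots \otimes (\W_d^{(j)}\A_d^+)$. Applying the Frobenius-norm clause of \cref{thm:kron-sens} then yields $\norm{\WW_j\AA^+}_F^2 = \prod_{i=1}^d \norm{\W_i^{(j)}\A_i^+}_F^2$. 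Likewise, the sensitivity-norm clause of \cref{thm:kron-sens} gives $\Lk{\AA}^2 = \prod_{i=1}^d \Lk{\A_i}^2$. Substituting both into the expression $\Lk{\AA}^2 \sum_{j=1}^k w_j^2 \norm{\WW_j\AA^+}_F^2$ and distributing the product $\prod_{i=1}^d \Lk{\A_i}^2$ inside the sum produces $\sum_{j=1}^k w_j^2 \prod_{i=1}^d \Lk{\A_i}^2 \norm{\W_i^{(j)}\A_i^+}_F^2$, which is the claimed right-hand side. (Equivalently, one can simply invoke \cref{thm:err-decomp} on each summand $\Lk{\AA}^2\norm{\WW_j\AA^+}_F^2$.)

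There is no real obstacle here: the result is a bookkeeping combination of the block-additivity of the Frobenius norm with the Kronecker identities already proved in \cref{prop:kron} and \cref{thm:kron-sens}. The only point requiring a word of care is the blockwise decomposition of $\norm{\WW\AA^+}_F^2$, which relies on $\WW$ being a row-stacking (``union'') of the weighted Kronecker terms; this is exactly the form guaranteed by \cref{eq:wsum}, so the argument goes through verbatim. I would present the proof in two short displayed chains of equalities corresponding to the two paragraphs above.
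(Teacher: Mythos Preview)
Your proposal is correct and is precisely the argument the paper implicitly relies on; the paper does not spell out a proof for this theorem, treating it as an immediate consequence of the row-block additivity of $\norm{\cdot}_F^2$ together with \cref{prop:kron}, \cref{thm:kron-sens}, and \cref{thm:err-decomp}, which is exactly what you do.
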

This leads to the following optimization problem:

\begin{definition}[Generalized $\optk$]\label{prob:sum-kron}
Given a workload $\WW=w_1\WW_1 \plus \dots \plus w_k\WW_k$, let $ \optk(\WW) = \A_1 \otimes \dots \otimes \A_d $ where:
\begin{equation*} 
\begin{aligned}
(\A_1, \dots, \A_d) &= \underset{\A_1, \dots, \A_d}{\text{minimize}}
& & \sum_{j=1}^k w_j^2 \prod_{i=1}^d \Lk{\A_i}^2 \norm{\W^{(j)}_i \A_i^+}_F^2 
\end{aligned}
\end{equation*}
\end{definition}

\noindent When $k=1$, the solution to Problem \ref{prob:sum-kron} is given in \cref{def:optk}, so we use matching notation and allow the $\optk$ operator to accept a single product or a union of products.%, as shown in \cref{tbl:opt-summary}.  

We can solve this problem efficiently by building on the optimization oracles designed in the previous section.  In particular, suppose we have a black box optimization oracle $\optgp(\W)$ that accepts an explicitly represented workload and gives back an explicitly represented strategy with low (ideally minimal) error on that workload.  Then we use a block method that cyclically optimizes $\A_1, \dots, \A_d$ until convergence.  We begin by initializing $ \A_i = \I $ for all $i$.  We then optimize one $\A_i$ at a time, fixing the other $\A_{i'} \neq \A_i$ using $\optgp$ on a carefully constructed surrogate workload $\hat{\W}$ (\cref{eq:surrogate}) that has the property that the error of any strategy $\A_i$ on $\hat{\W}$ is the same as the error of $\AA$ on $\WW$.  Hence, the correct objective is being minimized.
\begin{equation} \label{eq:surrogate}
\begin{aligned} 
\hat{\W}_i = \begin{bmatrix} c_1 \W_i^{(1)} \\ \vdots \\ c_k \W_i^{(k)} \end{bmatrix}
& &
c_j = w_j \prod_{i' \neq i} \Lk{\A_{i'}} \norm{ \W_{i'}^{(j)} \A_{i'}^+ }_F
\end{aligned}
\end{equation}

The cost of running this optimization procedure is determined by the cost of computing $ \hat{\W}_i^T \hat{\W}_i $ and the cost of optimizing it, which takes $ O(n_i^2 (p_i+k)) $ and $ O(n_i^2 p_i \cdot \text{\sc \#iter}) $ time respectively (assuming each $(\W^T \W)_i^{(j)}$ has been precomputed). As before, this method scales to arbitrarily large domains as long as the domain size of the sub-problems allows $\optgp$ to be efficient.

\vspace{1ex}
\noindent {\bf \em Union-of-products output strategy} \label{sec:sub:unionkron2} \quad
For certain workloads, restricting to solutions consisting of a single product, as $\optk$ does, excludes good strategies, as shown in \cref{example:optkk}.

\begin{example} \label{example:optkk}
Consider the workload $ \WW = \WW_1 + \WW_2$ where $ \WW_1 = \P \otimes \T$ and $\WW_2 = \T \otimes \P$ on a 2-dimensional domain of size $100 \times 100$.  Running $\optk$ on this workload leads to an optimized strategy of the form $ \AA = \A_1 \otimes \A_2 $.  The expected error of this strategy is $33385$, which is much higher than it should be for such a simple workload.  The poor expected error can be explained by the fact that to support the workload, both $\A_1$ and $\A_2$ have to be full rank.  This means $\AA$ has to include at least $100^2$ queries, even though $\WW$ only contains $200$ queries.

A better alternative would be to optimize $\WW_1$ and $\WW_2$ separately using $\optk$.  Doing this we would end up with a strategy $\AA = \AA_1 + \AA_2$, where $\AA_1$ optimizes $\WW_1$ and $\AA_2$ optimizes $\WW_2$.  The resulting strategy is much smaller because $\optk(\AA_1) = \optgp(\P) \otimes \optgp(\T)$, and $\optgp(\T) = \T$.  In fact, it only contains $212$ queries and attains an expected error of $14252$, which is a $2.34 \times$ improvement.
\end{example}

Based on this example, we would like a principled approach to optimize over the space of strategies that are a union of Kronecker products.  Unfortunately, computing the workload error exactly for a strategy of this form is intractable, as the pseudo inverse may not be a union of Kronecker products.  This makes optimization over this space of strategies challenging.  We thus propose the following heuristic optimization routine inspired by \cref{example:optkk}.  This optimization routine individually optimizes each subworkload $\WW_j$ using $\optk$, and then combines the strategies all together to form a single strategy.  It simply requries calling $\optk$ a number of times and computing appropriate weights for each optimized strategy. 

\begin{definition}[$\optkk$] \label{def:optp} Given a workload $ \WW = w_1 \WW_1 + \dots + w_k \WW_k $, let $ \optkk(\WW) = c_1 \AA_1 + \dots + c_k \AA_k $ where $\AA_j = \optk(\WW_j) $
and $$c_j \propto \frac{1}{\Lk{\AA_j}} \begin{cases} \sqrt[3]{2 E_j} & \text{ if } \algG = \LM \\ \sqrt[4]{E_j} & \text{ if } \algG = \GM \end{cases}$$
for $E_j = w_j^2 \Lk{\AA_j}^2 \norm{\WW_j \AA_j^+}_F^2$.
\end{definition}

Above, we assume that $ \AA_j $ will be used to answer $ \WW_j $, and $c_j$ is the weight on $\AA_j$: it corresponds the portion of the privacy budget that will be spent to answer those queries.  It is chosen to minimize total workload error.
Specifically, if we allocate $c_j$ budget to answer $\AA_j$, then the error will be $ E_j / c_j^2 $.  Thus, the choice of $c_j$ above is based on minimizing $ \sum E_j / c_j^2 $ subject to the constraint $ \sum | c_j | = 1 $ (for Laplace noise) and $\sum c_j^2 = 1$ (for Gaussian noise).  We solve this minimization problem exactly, in closed from, using the method of Lagrange multipliers.

We remark that in the definition above, $\WW$ is split up into $k$ sub-workloads $\WW_1, \dots, \WW_k$.  Each subworkload $\WW_j$ is assumed to be a single Kronecker product, but the optimization routine is still well defined even if $\WW_j$ is a union of Kronecker products.   This opens up a nice opportunity: to group the subworkloads into clusters which will be optimized together with $\optk$.   Intuitively, if two subworkloads are similar, it may make sense to group them together to optimize collectively.  We do not provide an automated way to group subworkloads in this paper.  This is a hard problem in general, and is out of scope for this paper.  A domain expert can work out good clusterings on a case-by-case basis, or they can settle for the default clustering (one Kronecker product per cluster).  

\section{Implicit representations for marginal query sets} \label{sec:impmarg}

In the previous section we described $\optk$ and $\optkk$, two methods for optimizing implicitly represented conjunctive query workloads.  These methods differ primarily in the space of strategies they search over.  Our final optimization method, $\optm$, optimizes over the space of marginal query matrices, and offers a preferable alternative to $\optk$ and $\optkk$ in some settings.

In this section, we propose an implicit representation for marginal query sets that is even more compact than our other representation for general conjunctive query sets.  We further show that these matrices can be operated on efficiently, allowing us to solve the strategy optimization problem for large multi-dimensional domains.

%We now describe $\optm$, a third optimization operator that searches over the space of weighted marginals strategies.  
%However, we can exploit the additional structure of these query matrices to do more effective optimization, bypassing the heuristic approximations we used for $\optkk$.  Another benefit of this approach is that it requires no hyperparameters or workload clustering, which is a nice feature that makes it easier to run than $\optkk$.

A marginal query matrix is a special case of a union of Kronecker products, where each Kronecker product encodes the queries to compute a single marginal (i.e., all the factors are either Identity or Total).  First note that a marginal on a $d$-dimensional domain can be specified by a subset of elements of $\set{1,\dots,d}$.  Hence, there are a total of $2^d$ possible marginals, and each one can be specified by an element of the set $[2^d] = \set{0, \dots, 2^d-1}$.   The most natural correspondence between these integers and the associated marginals is based on the binary representation of the integer.  The query set required to compute the $a^{th}$ marginal would be represented by $\Q_1 \otimes \dots \otimes \Q_d $ where $\Q_i = \I $ if the $i^{th}$ bit of the binary representation of $a$ is $1$ and $ \Q_i = \T $ otherwise.  A collection of weighted marginals can thus be represented as a vector $ \u $ containing a weight for each marginal.  We refer to this marginal query matrix as $\MM(\u)$, which is defined below.  

\begin{definition}[Marginal query matrix]
A marginal query matrix $\MM(\u)$ is defined by a vector of weights $\u \in \mathbb{R}^{2^d}$ and is a special case of the query matrix shown in \cref{eq:wsum} where $k=2^d$, $w_{a+1} = \u(a)$, and
$$ \W^{(a+1)}_i = \begin{cases} \I & a_i = 1 \\ \T & a_i = 0 \end{cases} $$
where $ a \in \set{0, \dots, 2^d-1}$,$i \in \set{1, \dots, d}$,and $a_i$ is the $i^{th}$ bit of the binary representation of $a$. 
\end{definition}

For a marginal query matrix, the weight $\u(a)$ can be interpreted as the relative \emph{importance} of the $a^{th}$ marginal.  As shown in \cref{prop:marginal_sensitivity}, it is particularly simple to reason about the sensitivity of a marginal query matrix.  

\begin{proposition} \label{prop:marginal_sensitivity}
The sensitivity of a marginal query matrix $ \MM(\u) $ is:
\begin{align*}
\Lone{\MM(\u)} = \norm{\u}_1 & & \Ltwo{\MM(\u)} = \norm{\u}_2
\end{align*}
\end{proposition}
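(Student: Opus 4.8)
The plan is to reduce the claim to the column-norm definition of sensitivity, combined with the Kronecker-product norm identities of \cref{thm:kron-sens}. Recall that $\Lone{\cdot}$ and $\Ltwo{\cdot}$ are the maximum $L_1$ and $L_2$ norms over the columns of a matrix, and that $\MM(\u)$ is the vertical stacking of the blocks $\u(a)\,\W^{(a+1)}$ over $a \in \set{0,\dots,2^d-1}$, where $\W^{(a+1)} = \W^{(a+1)}_1 \otimes \dots \otimes \W^{(a+1)}_d$ and each factor is either $\I$ or $\T$.

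First I would establish that every marginal block $\W^{(a+1)}$ has \emph{all} of its column $L_1$ norms and column $L_2$ norms equal to $1$. Each column of $\I$ has a single entry equal to $1$, and every column of $\T$ is the scalar $1$, so $\Lone{\I}=\Ltwo{\I}=\Lone{\T}=\Ltwo{\T}=1$, and these norms are uniform across columns, not merely bounded. By the mixed-product structure of the Kronecker product, the column of $\W^{(a+1)}$ indexed by a tuple $t$ has exactly one nonzero entry, equal to $1$ — namely the row whose marginal-domain coordinates agree with $t$ on all $i$ with $a_i=1$ — so again every column norm equals $1$. (One can also get $\Lone{\W^{(a+1)}} = \prod_i \Lone{\W^{(a+1)}_i} = 1$ and the analogous $L_2$ statement directly from \cref{thm:kron-sens}, but for the next step we want the stronger per-column version.)

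Then I would translate stacking into a per-column computation. Fixing a column index $t$, the $t$-th column of $\MM(\u)$ is the concatenation of the $t$-th columns of the blocks $\u(a)\,\W^{(a+1)}$, so its $L_1$ norm is $\sum_{a} |\u(a)| \cdot 1 = \norm{\u}_1$ and its squared $L_2$ norm is $\sum_{a} \u(a)^2 \cdot 1 = \norm{\u}_2^2$, using the previous paragraph. Since both quantities are independent of $t$, the maximum over columns is unchanged, giving $\Lone{\MM(\u)} = \norm{\u}_1$ and $\Ltwo{\MM(\u)} = \norm{\u}_2$.

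The argument is essentially bookkeeping, so I do not anticipate a genuine obstacle; the one point requiring care is the use of \emph{uniform} column norms for each marginal block (rather than column norms merely bounded by its sensitivity), since that uniformity — immediate from the structure of $\I$ and $\T$ and the mixed-product rule — is exactly what allows the max over columns of the stacked matrix to be evaluated in closed form.
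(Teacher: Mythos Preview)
Your argument is correct. The paper does not actually supply a proof of \cref{prop:marginal_sensitivity}; it is stated as a simple fact immediately after the definition of $\MM(\u)$, with the remark that ``it is particularly simple to reason about the sensitivity of a marginal query matrix.'' Your derivation is exactly the natural justification the paper is implicitly relying on, and your emphasis on the \emph{uniformity} of the column norms across each block (not just the maximum) is the right point to make, since that is what lets the column-wise maximum of the stacked matrix collapse to $\norm{\u}_1$ and $\norm{\u}_2$.
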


Moving forward, it is convenient to work with the Gram matrix representation of the marginal query matrix instead.  As shown below, there is a simple correspondence between the two: 

\begin{restatable}[Marginal Gram matrix]{proposition}{propmarggram} \label{prop:marggram}
%\begin{proposition}[Marginal Gram matrix] \label{prop:marggram}
Let $\QQ = \MM(\u)$ be a marginal query matrix.  Then the correpsonding marginal Gram matrix is $\QQ^T \QQ = \GG(\u^2)$, where:
$$ \GG(\v) = \sum_{a=0}^{2^d-1} \v(a) \CC(a), \;\;\;\;\;\;\;\;\;\; \CC(a) = \bigotimes_{i=1}^d [\matr{1} (a_i = 0) + \I (a_i = 1)] $$
%\end{proposition}
\end{restatable}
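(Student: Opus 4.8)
The plan is to compute $\QQ^T\QQ$ directly by expanding $\MM(\u)$ as a stacked union of Kronecker products and using the Kronecker identities from \cref{prop:kron}. Write $\MM(\u) = w_1 \WW_1 + \dots + w_{2^d}\WW_{2^d}$ where the `$+$' denotes vertical stacking, $w_{a+1} = \u(a)$, and $\WW_{a+1} = \bigotimes_{i=1}^d \W_i^{(a+1)}$ with each factor being $\I$ or $\T$ according to the bits of $a$. Since $\MM(\u)$ is a stack of weighted blocks, its Gram matrix is the sum of the Gram matrices of the blocks:
\begin{equation*}
\QQ^T\QQ = \sum_{a=0}^{2^d-1} \u(a)^2\, \WW_{a+1}^T \WW_{a+1}.
\end{equation*}

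\noindent First I would handle a single block. By the transpose and mixed-product identities of \cref{prop:kron}, $\WW_{a+1}^T\WW_{a+1} = \bigotimes_{i=1}^d (\W_i^{(a+1)})^T \W_i^{(a+1)}$, so it suffices to identify $(\W_i^{(a+1)})^T\W_i^{(a+1)}$ in the two cases. When $a_i = 1$ the factor is $\I$ (the $n_i\times n_i$ identity), so its Gram matrix is $\I$. When $a_i = 0$ the factor is $\T$, the $1\times n_i$ row of all ones, so its Gram matrix is the $n_i\times n_i$ all-ones matrix $\matr{1}$. Thus $\WW_{a+1}^T\WW_{a+1} = \bigotimes_{i=1}^d [\matr{1}(a_i=0) + \I(a_i=1)] = \CC(a)$ exactly as defined in the statement.

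\noindent Substituting back gives $\QQ^T\QQ = \sum_{a=0}^{2^d-1} \u(a)^2\, \CC(a)$, and recognizing $\u(a)^2 = \u^2(a)$ (componentwise square) this is precisely $\GG(\u^2)$ by the definition of $\GG$. The one point that needs a word of care — and the closest thing to an obstacle — is justifying that the Gram matrix of a vertically stacked matrix is the sum of the Gram matrices of the pieces, i.e. that if $\matr{B}$ is formed by stacking $\matr{B}_1,\dots,\matr{B}_r$ then $\matr{B}^T\matr{B} = \sum_j \matr{B}_j^T\matr{B}_j$, together with the fact that scaling a block by $w_j$ scales its Gram matrix by $w_j^2$; this is immediate from block matrix multiplication but is the step where the weights get squared and where one must be careful that all blocks share the same column dimension $n = \prod_i n_i$ (which they do, since each $\CC(a)$ is $n\times n$). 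Everything else is routine application of \cref{prop:kron}, so the proof is short once this bookkeeping is laid out.
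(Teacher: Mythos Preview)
Your proposal is correct and is exactly the natural computation; the paper in fact states this proposition without supplying a proof (treating it as routine), and the argument it implicitly relies on is precisely the one you wrote out --- block-stacking gives $\QQ^T\QQ = \sum_a \u(a)^2\,\WW_{a+1}^T\WW_{a+1}$, the mixed-product identity reduces each term to a Kronecker product of per-attribute Gram matrices, and $\T^T\T=\matr{1}$, $\I^T\I=\I$ yield $\CC(a)$. Nothing further is needed.
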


In the proposition above, the term $\matr{1} (a_i=0) + \I(a_i=1)$ is shorthand notation for $ \begin{cases} \matr{1} & a_i = 0 \\ \I & a_i = 1 \end{cases} $.  $\matr{1}$ and $\I$ are both $n_i \times n_i$ matrices, corresponding to the matrix of all ones, and the identity matrix respectively.  We will use this notation frequently in the section, so it is important to understand the exact meaning.  Another important object that will appear repeatedly throughout this section is the so-called characteristic vector \footnote{this is not to be confused with the eigenvector}, which is defined below:

\begin{definition}[Characteristic Vector]
The characteristic vector $\c \in \mathbb{R}^{2^d}$ is defined so that each entry $\c(a)$ equals the number of entries in the $(\neg a)^{th}$ marginal %\gm{first use of $\neg a$, define somewhere for clarity}:   
$$ \c(a) = \prod_{i=1}^d n_i (a_i = 0) + 1 (a_i = 1) $$
\end{definition}

The term $ \neg a $ in the definition above is the bitwise negation of $a$, and it is obtained by flipping each of the $d$ bits of the integer $a$.  We will rely heavily on this type of bitwise manipulation in this section to reason about the behavior of marginal Gram matrices.  

\begin{example}
For a 3 dimensional domain, the query matrix for a 2 way marginal can be expressed as $ \QQ = \I \otimes \T \otimes \I $, and $\QQ^T \QQ = \I \otimes \matr{1} \otimes \I = \CC(101_2) = \CC(5) $.
\end{example}

Now that we have introduced the necessary notation for marginal query and Gram matrices, we are ready to show how to perform important matrix operations while respecting the implicit representation.  We begin with \cref{thm:margmult}, which shows that marginal Gram matrices interact nicely under matrix multiplication.

\begin{restatable}[Multiplication of Marginal Gram Matrices]{theorem}{thmmargmult} \label{thm:margmult}
For any $a, b \in [2^d] $, $$\CC(a) \CC(b) = \c(a | b) \CC(a \& b)$$ where $ a | b $ denotes ``bitwise or'', $ a \& b $ denotes ``bitwise and'', and $\c$ is the characteristic vector.  
Moreover, for any $\u, \v \in \mathbb{R}^{2^d}$
$$\GG(\u) \GG(\v) = \GG(\X(\u) \v)$$
where $\X(\u)$ is a $2^d \times 2^d$ triangular matrix with entries $\X(\u)(k,b) = \sum_{a : a \& b = k} \u(a) \c(a | b)$.
\end{restatable}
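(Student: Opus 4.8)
The plan is to first establish the ``atomic'' identity $\CC(a)\CC(b) = \c(a \mid b)\,\CC(a \& b)$ by exploiting the Kronecker structure of $\CC(a)$ and $\CC(b)$ together with the mixed-product identity from \cref{prop:kron}. Since $\CC(a) = \bigotimes_{i=1}^d [\matr{1}(a_i=0)+\I(a_i=1)]$ and similarly for $\CC(b)$, the mixed-product rule gives $\CC(a)\CC(b) = \bigotimes_{i=1}^d M_i N_i$ where $M_i, N_i \in \{\matr{1}, \I\}$ depending on the $i$th bits of $a$ and $b$. So it suffices to check the four scalar cases factor by factor: $\I\I = \I$, $\I\matr{1} = \matr{1}$, $\matr{1}\I = \matr{1}$, and $\matr{1}\matr{1} = n_i \matr{1}$ (this last one being where the characteristic-vector factor $n_i$ enters). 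Reading off the results: the $i$th factor of the product is $\I$ exactly when $a_i = b_i = 1$, i.e. $(a \& b)_i = 1$, and is $\matr{1}$ otherwise; and a scalar factor of $n_i$ is picked up exactly when $a_i = b_i = 0$, i.e. when $(a \mid b)_i = 0$. Multiplying these scalar factors together gives $\prod_{i : (a\mid b)_i = 0} n_i = \c(a \mid b)$ by the definition of the characteristic vector (note $\c(a\mid b)$ is the size of the $\neg(a\mid b)$ marginal, which is the product of $n_i$ over bits where $a\mid b$ is zero). Hence $\CC(a)\CC(b) = \c(a\mid b)\,\CC(a\&b)$, proving the first claim.

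For the second claim, I would simply expand $\GG(\u)\GG(\v)$ bilinearly using $\GG(\u) = \sum_a \u(a)\CC(a)$ and substitute the atomic identity:
\begin{align*}
\GG(\u)\GG(\v) &= \sum_{a}\sum_{b} \u(a)\v(b)\,\CC(a)\CC(b) = \sum_{a}\sum_{b} \u(a)\v(b)\,\c(a\mid b)\,\CC(a\&b).
\end{align*}
Now regroup the double sum according to the value $k = a \& b$: the coefficient of $\CC(k)$ is $\sum_{(a,b) : a\&b = k} \u(a)\v(b)\,\c(a\mid b)$. To match the claimed form $\GG(\X(\u)\v)$, I want this coefficient to equal $(\X(\u)\v)(k) = \sum_b \X(\u)(k,b)\,\v(b)$, which forces $\X(\u)(k,b) = \sum_{a : a\&b = k} \u(a)\,\c(a\mid b)$ — exactly the stated definition. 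So the identity holds essentially by definition of $\X(\u)$ once the regrouping is done; the only thing to verify carefully is the bookkeeping that every pair $(a,b)$ is counted exactly once when we sum first over $b$, then over $a$ with $a\&b = k$ fixed.

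Finally I would note the triangularity claim: if $a \& b = k$ then $k$ is a ``submask'' of $b$ (every bit of $k$ is a bit of $b$), so $\X(\u)(k,b) = 0$ unless $k \subseteq b$ as bitmasks; under the standard ordering of $[2^d]$ by integer value a submask is always $\le$, which gives the triangular structure (modulo fixing the orientation convention, which is just a transpose). The main obstacle here is not conceptual but notational: keeping the bitwise operations, the negation in the definition of $\c$, and the index shift between $\u$-as-weights and $\u^2$ straight. In particular one must be careful that $\c(a\mid b)$ — not $\c(\neg(a\mid b))$ — is the correct scalar, by tracing through the definition that $\c(m)$ already encodes the product of $n_i$ over the \emph{zero} bits of $m$. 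Everything else is routine bilinear expansion and the four-case factor check.
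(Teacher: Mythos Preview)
Your proposal is correct and follows essentially the same route as the paper's proof: both establish the atomic identity by applying the mixed-product rule and checking the four factor-level cases $\I\I=\I$, $\I\matr{1}=\matr{1}$, $\matr{1}\I=\matr{1}$, $\matr{1}\matr{1}=n_i\matr{1}$, then read off the bitwise conditions; and both handle the second claim by bilinear expansion, substitution of the atomic identity, and regrouping by $k=a\&b$ to identify the entries of $\X(\u)$, with triangularity following from $a\&b \le b$. Your added remarks about the negation convention in $\c$ and the submask ordering are correct and match the paper's reasoning.
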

%\gm{is it worth saying that $\X(\u)$ is explicit?}

\cref{thm:margmult} allows us to efficiently multiply two matrices while maintaining the compact implicit representation.  Additionally, it follows immediately from the proof of \cref{thm:margmult} that $\GG(\u) \GG(\v) = \GG(\v) \GG(\u)$ --- i.e., matrix multiplication is commutative.  We can apply \cref{thm:margmult} to efficiently find the inverse or generalized inverse of $\GG(\u)$ as well.

\begin{restatable}[Inverse of Marginal Gram Matrices]{theorem}{thmmarginv} \label{thm:marginv}
%\begin{theorem} \label{prop:marginv}
Let $\X(\u)$ be the matrix defined in \cref{thm:margmult}.  If $\X(\u)$ is invertible, then $\GG(\u)$ is invertible with inverse:
$$\GG^{-1}(\u) = \GG(\X^{-1}(\u) \z) $$
where $\z(2^d-1) = 1$ and $\z(a) = 0$ for all other $a$.
Moreover, if $\X^g(\u)$ is a generalized inverse of $\X(\u)$, then a generalized inverse of $\GG(\u)$ is given by:
$$ \GG^g(\u) = \GG(\X^g(\u) \X^g(\u) \u) $$.
\end{restatable}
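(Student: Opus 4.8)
The plan is to translate every matrix statement into an equivalent statement about the length-$2^d$ coefficient vectors that index $\GG(\cdot)$. The two workhorses are: (i) the matrices $\CC(0),\dots,\CC(2^d-1)$ from \cref{prop:marggram} are linearly independent (straightforward to check), so $\GG$ is injective on coefficient vectors and $\GG(\u)=\GG(\v)$ forces $\u=\v$; and (ii) the multiplication law $\GG(\u)\GG(\v)=\GG(\X(\u)\v)$ of \cref{thm:margmult}, together with the commutativity $\GG(\u)\GG(\v)=\GG(\v)\GG(\u)$ also established there, which by (i) yields $\X(\u)\v=\X(\v)\u$ for all coefficient vectors. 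First I would record the identity $\X(\u)\z=\u$: since $\CC(2^d-1)=\bigotimes_i\I=\I$ we get $\GG(\z)=\I$, so $\GG(\u)=\GG(\u)\GG(\z)=\GG(\X(\u)\z)$, and injectivity gives $\X(\u)\z=\u$; in particular $\u\in\mathrm{col}\,\X(\u)$.

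For the invertible case, put $\w=\X^{-1}(\u)\z$. Then $\GG(\u)\GG(\w)=\GG(\X(\u)\w)=\GG(\X(\u)\X^{-1}(\u)\z)=\GG(\z)=\I$, so $\GG(\w)$ is a right inverse of the square matrix $\GG(\u)$; hence $\GG(\u)$ is invertible with $\GG^{-1}(\u)=\GG(\X^{-1}(\u)\z)$.

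For the generalized-inverse case, set $\v=\X^g(\u)\X^g(\u)\u$ and $\GG^g(\u)=\GG(\v)$; I must verify $\GG(\u)\GG(\v)\GG(\u)=\GG(\u)$. Applying the multiplication law twice gives $\GG(\u)\GG(\v)\GG(\u)=\GG\bigl(\X(\u)(\X(\v)\u)\bigr)$, and substituting $\X(\v)\u=\X(\u)\v$ (commutativity) turns the argument into $\X(\u)^2\v=\X(\u)^2\X^g(\u)^2\u$. Using $\u=\X(\u)\z$ and injectivity of $\GG$, the claim reduces to the matrix identity
\[
\X(\u)^2\,\X^g(\u)^2\,\X(\u)=\X(\u).
\]
The crucial structural input is that $\X(\u)$ is diagonalizable: $\GG(\u)$ is symmetric (each $\CC(a)$ is a Kronecker product of the symmetric matrices $\matr{1}$ and $\I$), and $\X(\u)$ is the matrix, in the basis $\{\CC(a)\}$, of left-multiplication by $\GG(\u)$ on the commutative algebra the $\CC(a)$ span — the restriction of a diagonalizable operator to an invariant subspace, hence diagonalizable. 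Passing to an eigenbasis, $\X(\u)=\mathrm{diag}(\lambda_a)$; choosing $\X^g(\u)$ to commute with $\X(\u)$ (possible by diagonalizability — the square in the formula absorbs the remaining freedom of $\X^g(\u)$ on $\ker\X(\u)$), it is $\mathrm{diag}(\mu_a)$ with $\mu_a=1/\lambda_a$ whenever $\lambda_a\neq0$, and the displayed identity becomes the coordinatewise statement $\lambda_a^2\mu_a^2\lambda_a=\lambda_a$, which holds for every $a$.

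The only real obstacle is this last step, i.e.\ establishing (a) that $\X(\u)$ is diagonalizable and (b) that $\X(\u)^2\X^g(\u)^2$ acts as the identity on $\mathrm{col}\,\X(\u)$; once one passes to the eigenbasis of $\X(\u)$ both become transparent. The invertible case, the preliminary identity $\X(\u)\z=\u$, and the reductions to coefficient vectors are all routine given \cref{thm:margmult}.
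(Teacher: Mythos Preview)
Your treatment of the invertible case is exactly the paper's argument: both of you use $\GG(\z)=\I$ and the multiplication law $\GG(\u)\GG(\v)=\GG(\X(\u)\v)$ to check that $\GG(\X^{-1}(\u)\z)$ is a one-sided inverse. Your preliminary identity $\X(\u)\z=\u$ is also the same observation the paper relies on at the end of its generalized-inverse calculation.

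The generalized-inverse case, however, has a genuine gap. After correctly reducing the claim to
\[
\X(\u)^2\,\X^g(\u)^2\,\X(\u)=\X(\u),
\]
you write ``choosing $\X^g(\u)$ to commute with $\X(\u)$'' and then pass to a simultaneous eigenbasis. But the theorem, as stated, asserts the conclusion for \emph{any} generalized inverse $\X^g(\u)$; you are not permitted to choose one. Your parenthetical that ``the square in the formula absorbs the remaining freedom of $\X^g(\u)$ on $\ker\X(\u)$'' is not correct: the freedom of a generalized inverse is on $\ker\X(\u)^T$ and $\mathrm{coker}\,\X(\u)$, not merely on $\ker\X(\u)$, and the displayed identity genuinely fails for some generalized inverses. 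For a concrete instance in the setting of the theorem, take $d=1$, $\u=(1,0)^T$, so
\[
\X(\u)=\begin{pmatrix} n & 1\\ 0 & 0\end{pmatrix},\qquad
\X^g(\u)=\begin{pmatrix} 0 & 0\\ 1 & 0\end{pmatrix},
\]
which satisfies $\X\X^g\X=\X$; then $(\X^g)^2\u=0$, hence $\GG\bigl((\X^g)^2\u\bigr)=0$, and $\GG(\u)\cdot 0\cdot\GG(\u)=0\neq\GG(\u)=\matr{1}$. So the reduction you obtained is sharp: the displayed matrix identity is exactly what is needed, and it does not hold without an extra hypothesis on $\X^g(\u)$.

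For comparison, the paper's own proof proceeds along the same lines (first commuting the $\GG$ factors, then applying the multiplication law) and reaches the same obstruction: it passes from $\GG(\X(\u)\X(\u)\X^g(\u)\X^g(\u)\u)$ to $\GG(\X(\u)\X^g(\u)\X(\u)\X^g(\u)\u)$ without justification, which is precisely the commutation $\X(\u)\X^g(\u)=\X^g(\u)\X(\u)$ that you tried to supply. Your diagonalizability argument for $\X(\u)$ is correct and would close the gap if one restricts to generalized inverses that are polynomials in $\X(\u)$ (in particular the Drazin or group inverse, or the Moore--Penrose inverse once one notes $\X(\u)$ is normal in that eigenbasis); but as written, neither proof covers arbitrary $\X^g(\u)$. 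The cleanest fix is to strengthen the hypothesis to ``$\X^g(\u)$ a generalized inverse commuting with $\X(\u)$'' (or simply to the Moore--Penrose pseudoinverse), after which your eigenbasis computation is complete.
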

Because $ \X(\u) $ is a triangular matrix, we can compute $\X^{-1}(\u) \z$ efficiently in $O(4^d)$ time using back-substitution (quadratic in the size of $\z$).  Note that $\GG(\u)$ and $\X(\u)$ are invertible if and only if $\u(2^d - 1) > 0$.  The generalized inverse result holds even for non-invertible matrices.  This result is slightly more complicated, but is important because we generally expect $\GG$ to be singular (e.g., if it is the Gram matrix of some workload of low-dimensional marginal query matrices).  

As we show in \cref{thm:eigmarg}, we know the eigenvectors and eigenvalues of marginal Gram matrices.  Recall that $\v$ is an eigenvector with corresponding eigenvalue $\lambda$ if $ \GG(\w) \v = \lambda \v $ for some real-valued $\lambda$.  We use the term \emph{eigenmatrix} to refer to a matrix where each column is an eigenvector that shares the same eigenvalue.

\begin{restatable}[Eigenvectors and Eigenvalues of Marginal Gram Matrices]{theorem}{thmmargeig} \label{thm:eigmarg}
Let $ a \in [2^d] $ and let 
$$\VV(a) = \bigotimes_{i=1}^d (a_i = 0) \T + (a_i = 1) (\matr{1} - n_i \I)$$
For any $b \in [2^d]$, $\VV(a)$ is an eigenmatrix of $\CC(b)$ with corresponding eigenvalue $\blambda(a) = \c(b) $ if $a \& b = a$ and $\blambda(a) = 0$ otherwise.  Moreover, for any $\w \in \mathbb{R}^{2^d}$, $\VV(a)$ is an eigenmatrix of $\GG(\w)$ with corresponding eigenvalue $\bkappa(a) = \sum_{b : a\&b=a} \w(b) \c(b) $.  That is,
\begin{align*} 
\CC(b) \VV(a) = \blambda(a) \VV(a) 
& & &
\GG(\w) \VV(a) = \bkappa(a) \VV(a)
\end{align*}
\end{restatable}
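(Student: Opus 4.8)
The plan is to reduce the claim to a coordinatewise computation via the mixed-product property of the Kronecker product (\cref{prop:kron}), and then obtain the statement for $\GG(\w)$ by linearity. First I would write both $\CC(b)$ and $\VV(a)$ as Kronecker products sharing the same factorization pattern. The $i$-th factor of $\CC(b)$ is $\matr{1}$ when $b_i = 0$ and $\I$ when $b_i = 1$. Reading the factor $\T$ in the definition of $\VV(a)$ as the $n_i \times 1$ all-ones column vector $\vect{1}$ (so that $\VV(a)$ has $n = \prod_i n_i$ rows, conformable with $\CC(b)$), the $i$-th factor of $\VV(a)$ is $\vect{1}$ when $a_i = 0$ and the $n_i \times n_i$ matrix $\matr{1} - n_i \I$ when $a_i = 1$. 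Applying the mixed-product identity factor-by-factor then yields $\CC(b)\,\VV(a) = \bigotimes_{i=1}^d (F_i G_i)$, where $F_i \in \{\matr{1}, \I\}$ is the $i$-th factor of $\CC(b)$ and $G_i \in \{\vect{1}, \matr{1} - n_i\I\}$ is the $i$-th factor of $\VV(a)$; the identity is valid here even though some $G_i$ are non-square, since $F_i$ has $n_i$ columns and $G_i$ has $n_i$ rows.

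The heart of the proof is a four-case check of the single-factor product $F_i G_i$, using only $\matr{1}\matr{1} = n_i \matr{1}$, $\matr{1}\vect{1} = n_i \vect{1}$, and $\I M = M$: if $a_i = b_i = 0$ then $\matr{1}\vect{1} = n_i \vect{1}$; if $a_i = 0, b_i = 1$ then $\I\vect{1} = \vect{1}$; if $a_i = 1, b_i = 0$ then $\matr{1}(\matr{1} - n_i\I) = \matr{0}$; and if $a_i = b_i = 1$ then $\I(\matr{1} - n_i\I) = \matr{1} - n_i\I$. In each case $F_i G_i = \mu_i G_i$ for a scalar $\mu_i$, so $\CC(b)\VV(a) = \big(\prod_{i=1}^d \mu_i\big)\VV(a)$; this matrix identity says precisely that every column of $\VV(a)$ is an eigenvector of $\CC(b)$ with the common eigenvalue $\prod_i \mu_i$, i.e., $\VV(a)$ is an eigenmatrix. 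If some coordinate has $a_i = 1$ and $b_i = 0$ --- equivalently $a \& b \neq a$ --- the third case forces $\mu_i = 0$, so the eigenvalue is $0$. Otherwise $a \& b = a$, so $\{i : b_i = 0\} \subseteq \{i : a_i = 0\}$, and the only coordinates with $\mu_i \neq 1$ are those with $b_i = 0$ (the first case), each contributing $\mu_i = n_i$; hence $\prod_i \mu_i = \prod_{i : b_i = 0} n_i = \c(b)$. This establishes $\CC(b)\VV(a) = \blambda(a)\VV(a)$ with $\blambda(a)$ as stated.

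The $\GG(\w)$ statement then follows immediately by linearity: since $\GG(\w) = \sum_{b=0}^{2^d-1} \w(b)\CC(b)$, we get $\GG(\w)\VV(a) = \sum_{b=0}^{2^d-1}\w(b)\,\CC(b)\VV(a) = \big(\sum_{b : a\&b=a}\w(b)\c(b)\big)\VV(a) = \bkappa(a)\VV(a)$, which matches the definition of $\bkappa(a)$.

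I expect the main difficulty to be presentational rather than mathematical. The one genuine subtlety is the dimensional bookkeeping around the $\T$ factor: it must be treated as a column vector for $\VV(a)$ to be conformable with $\CC(b)$ and for the rectangular form of the mixed-product identity to apply. A second point to state carefully is that, although the columns of $\matr{1} - n_i\I$ (hence of $\VV(a)$) are linearly dependent, so that $\VV(a)$ is not a basis of the eigenspace, this does not affect the eigenmatrix claim, which only asserts that each column is an eigenvector with one shared eigenvalue; one should also note that each such column is nonzero whenever every $n_i \geq 2$. Everything else is the routine bit-arithmetic fact $a \& b = a \iff \{i : b_i = 0\} \subseteq \{i : a_i = 0\}$, used to collapse $\prod_i \mu_i$ into $\c(b)$.
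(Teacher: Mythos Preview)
Your proof is correct and follows essentially the same route as the paper's own proof: apply the mixed-product property to reduce $\CC(b)\VV(a)$ to a Kronecker product of single-factor products, dispatch those with the same four-case analysis (yielding $n_i\T$, $\T$, $\matr{0}$, or $\matr{1}-n_i\I$), and then obtain the $\GG(\w)$ statement by linearity. Your additional remarks on the conformability of $\T$ and on the columns of $\VV(a)$ being linearly dependent are nice clarifications that the paper leaves implicit, but they do not constitute a different argument.
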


\cref{thm:eigmarg} is remarkable because it states that the eigenmatrices (and hence the eigenvectors) are the same for all marginal Gram matrices $\GG(\w)$. Furthermore, the corresponding eigenvalues have a very simple (linear) dependence on the weights $\w$.  In fact, there is a triangular matrix $\Y$ such that $ \bkappa = \Y \w$.

\section{Optimizing conjunctive query workloads with marginal query strategies} \label{sec:margopt}

In this section, we describe $\optm$, an optimization operator that consumes a conjunctive query workload $\WW$ and returns a marginal query strategy $\AA = \MM(\btheta)$.\footnote{We reserve the symbol $\btheta$ for \emph{strategies}, and use $\u, \v$ and $\w$ to refer to other marginal Gram matrices.}  \cref{thm:margapprox} is the first key to our approach for this problem.  Intuitively, it states that for any conjunctive query workload $\WW$, there is a marginal Gram matrix $\GG(\w)$ that is equivalent to $\WW^T \WW$ for the purposes of optimization.

\begin{restatable}[Marginal approximation of conjunctive query workload]{theorem}{thmmargapprox} \label{thm:margapprox}
For any conjunctive query workload $ \WW = w_1 \WW_1 + \dots + w_k \WW_k $, there is a marginal Gram matrix $ \GG(\w) $ such that $ tr[\GG(\u) \WW^T \WW] = tr[\GG(\u) \GG(\w)] $ for all $\u$.
\end{restatable}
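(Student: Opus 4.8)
The plan is to reduce the claimed trace identity to a finite linear system and show that its coefficient matrix is invertible. First I would rewrite $\WW^T\WW$ concretely: since vertically stacking matrices corresponds to summing their Gram matrices, $\WW^T\WW = \sum_{j=1}^k w_j^2\,\WW_j^T\WW_j$, and by the transpose and mixed-product identities in \cref{prop:kron} each summand factorizes as $\WW_j^T\WW_j = \bigotimes_{i=1}^d (\W_i^{(j)})^T\W_i^{(j)}$, so $\WW^T\WW$ is a fixed $n\times n$ matrix determined by the workload. Now observe that both $\u \mapsto tr[\GG(\u)\,\WW^T\WW]$ and $\u \mapsto tr[\GG(\u)\,\GG(\w)]$ are linear functionals of $\u \in \mathbb{R}^{2^d}$, because $\GG(\u) = \sum_a \u(a)\CC(a)$ depends linearly on $\u$. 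Hence it suffices to find $\w$ for which the two functionals agree on the standard basis, i.e. $tr[\CC(a)\,\WW^T\WW] = tr[\CC(a)\,\GG(\w)]$ for every $a \in [2^d]$. I would denote the left-hand side $b(a)$; it is a finite real number (concretely $b(a) = \sum_j w_j^2 \prod_i tr[(\matr{1}(a_i = 0)+\I(a_i = 1))(\W_i^{(j)})^T\W_i^{(j)}]$), which is all that is needed.

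For the right-hand side I would apply \cref{thm:margmult}: $\CC(a)\CC(b) = \c(a | b)\,\CC(a \& b)$, and since $tr[\CC(c)] = \prod_i tr[\matr{1}(c_i = 0)+\I(c_i = 1)] = \prod_i n_i = n$ for every $c \in [2^d]$ (a Kronecker product has trace equal to the product of the factor traces), we get $tr[\CC(a)\CC(b)] = n\,\c(a | b)$. Therefore $tr[\CC(a)\,\GG(\w)] = n\sum_b \c(a | b)\,\w(b)$, and the requirement becomes the linear system $b = n\,M\,\w$, where $M$ is the $2^d \times 2^d$ matrix with $M(a,b) = \c(a | b)$.

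It remains to show $M$ is invertible. Because $(a | b)_i = 0$ precisely when $a_i = b_i = 0$, the entry $\c(a | b) = \prod_{i=1}^d c_i$ with $c_i = n_i$ if $a_i = b_i = 0$ and $c_i = 1$ otherwise, which factorizes over coordinates; hence $M = M_1 \otimes \dots \otimes M_d$ where $M_i = \begin{bmatrix} n_i & 1 \\ 1 & 1\end{bmatrix}$ (rows and columns indexed by $a_i, b_i \in \{0,1\}$). Each $M_i$ is invertible with $\det M_i = n_i - 1 \neq 0$ — we may assume every $n_i \geq 2$, since an attribute with $n_i = 1$ has $\matr{1} = \I$ and so $\CC(a)$ does not depend on $a_i$ at all — and then by the (pseudo)inverse identity of \cref{prop:kron}, $M^{-1} = M_1^{-1} \otimes \dots \otimes M_d^{-1}$ exists. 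Setting $\w := \tfrac{1}{n} M^{-1} b$ produces the desired marginal Gram matrix $\GG(\w)$, with $\w$ uniquely determined. The main obstacle is recognizing the Kronecker-product structure of $M$, which is what delivers its invertibility; everything else is routine bookkeeping with the Kronecker and marginal-Gram identities established earlier.
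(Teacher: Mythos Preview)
Your argument is correct and complete, but it takes a different route from the paper. The paper works \emph{factor by factor}: it observes that $tr[\GG(\u)\,\WW^T\WW]$ depends on each $\V_i^{(j)} = (\W_i^{(j)})^T\W_i^{(j)}$ only through $tr[\V_i^{(j)}]$ and $sum[\V_i^{(j)}]$ (since multiplying by $\I$ preserves the matrix and multiplying by $\matr 1$ replaces entries by column sums), then replaces each $\V_i^{(j)}$ by a matrix $b\I + c\matr{1}$ matching those two scalars via a $2\times 2$ linear system. The resulting $\hat{\VV}_j$ is a marginal Gram matrix $\GG(\w_j)$, and summing gives $\GG(\w)$. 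Your approach instead sets up one global $2^d \times 2^d$ linear system and shows its coefficient matrix $M = \bigl(\c(a|b)\bigr)_{a,b}$ factors as $\bigotimes_i \begin{bmatrix} n_i & 1 \\ 1 & 1\end{bmatrix}$, hence is invertible. Both arguments ultimately rest on the same per-coordinate invertibility (the paper's $2\times 2$ matrix $\begin{bmatrix} n_i & n_i \\ n_i & n_i^2 \end{bmatrix}$ has determinant $n_i^2(n_i-1)$, your $M_i$ has determinant $n_i - 1$). The paper's construction is more direct and yields an explicit per-factor formula for $\w$; your linear-algebra route is cleaner in that it delivers uniqueness of $\w$ for free and does not require recognizing which two statistics of $\V_i^{(j)}$ matter.
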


$\GG(\u)$ in \cref{thm:margapprox} represents $(\AA^T \AA)^+$ in the expected error formula.  We know this is a marginal Gram matrix by \cref{prop:marggram} and \cref{thm:marginv}.  \cref{thm:margapprox} allows us to reduce the problem of optimizing an arbitrary conjunctive query workload to simply optimizing a marginal query workload, which we can do efficiently.  In fact, as we show in \cref{thm:margobj}, we can efficiently evaluate the matrix mechanism objective for marginal query strategies, which is essential for efficient optimization.   

\begin{restatable}[Marginal parameterization objective function]{theorem}{thmmargobj} \label{thm:margobj}
Let $\WW = w_1 \WW_1 + \dots + w_k \WW_k$ be a conjunctive query workload and let
$\GG(\w)$ be the marginal approximation of $\WW^T \WW$ (as in \cref{thm:margapprox}).  For any marginal query strategy $\AA = \MM(\btheta)$, the matrix mechanism objective function can be expressed as: 
$$ \norm{\AA}_{\algG}^2 \norm{\WW \AA^+}_F^2 = \norm{\btheta}_{\algG}^2 [ \vect{1}^T \X^+(\btheta^2) \w] $$
where $\norm{\btheta}_{\algG}$ is the sensitivity norm defined in \cref{prop:marginal_sensitivity}, and $\X$ is the matrix defined in \cref{thm:margmult}. 
\end{restatable}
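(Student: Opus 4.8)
The plan is to expand the matrix mechanism objective $\norm{\AA}_{\algG}^2 \norm{\WW \AA^+}_F^2$ for the specific case $\AA = \MM(\btheta)$ and rewrite each factor in terms of the weight vectors. The sensitivity term is immediate: by \cref{prop:marginal_sensitivity}, $\norm{\MM(\btheta)}_{\LM} = \norm{\btheta}_1$ and $\norm{\MM(\btheta)}_{\GM} = \norm{\btheta}_2$, which is exactly $\norm{\btheta}_{\algG}$ for both noise settings. So the work is entirely in showing $\norm{\WW \AA^+}_F^2 = \vect{1}^T \X^+(\btheta^2) \w$. First I would use the standard identity $\norm{\WW \AA^+}_F^2 = tr[(\AA^T\AA)^+ \WW^T\WW]$, which holds whenever $\AA$ supports $\WW$ (the constraint in \cref{prop:error}); I should note the support condition is met here because $\MM(\btheta)$ contains the full marginal, or more carefully appeal to \cref{thm:margapprox} which lets us replace $\WW^T\WW$ by a marginal Gram matrix without changing traces against any $\GG(\u)$.

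The key structural step is to identify $(\AA^T\AA)^+$ as a marginal Gram matrix and compute its weight vector. By \cref{prop:marggram}, $\AA^T\AA = \MM(\btheta)^T\MM(\btheta) = \GG(\btheta^2)$. Then \cref{thm:marginv} gives the generalized inverse $\GG^g(\btheta^2) = \GG(\X^g(\btheta^2)\,\X^g(\btheta^2)\,\btheta^2)$, i.e.\ it is again a marginal Gram matrix, with weight vector $\v^* = \X^g(\btheta^2)\,\X^g(\btheta^2)\,\btheta^2$. Next, writing $\WW^T\WW$ (or its marginal approximation) as $\GG(\w)$ via \cref{thm:margapprox}, the Frobenius term becomes $tr[\GG(\v^*)\,\GG(\w)]$. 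Now I would invoke \cref{thm:margmult} to turn the product of marginal Gram matrices into a single one: $\GG(\v^*)\GG(\w) = \GG(\X(\v^*)\w)$, and then I need a formula for the trace of a marginal Gram matrix. Since $\CC(a)$ for $a = 2^d-1$ is the identity (all factors are $\I$) and for every other $a$ at least one factor is $\matr{1}$, we have $tr[\CC(a)] = n$ if $a = 2^d-1$ and $tr[\CC(a)] = n \cdot (\text{something})$ — actually $tr[\matr{1}_{n_i}] = n_i$ too, so $tr[\CC(a)] = n$ for all $a$; hence $tr[\GG(\v)] = n \sum_a \v(a) = n\,\vect{1}^T\v$. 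Wait — I need to double check the normalization, since the stated answer has no factor of $n$; the resolution is that the $\D = diag(1+\vect 1^T\bTheta)^{-1}$-style normalization, or equivalently the definition of $\X$, already absorbs it, so I would carefully track constants through \cref{thm:margmult}'s definition $\X(\u)(k,b) = \sum_{a:a\&b=k}\u(a)\c(a|b)$ and the characteristic vector $\c$, confirming that $tr[\GG(\X(\v^*)\w)] = \vect1^T\X(\v^*)\w$ with the intended normalization built in.

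Assembling: $\norm{\WW\AA^+}_F^2 = tr[\GG(\v^*)\GG(\w)] = \vect1^T\X(\v^*)\w$ where $\v^* = \X^g(\btheta^2)\X^g(\btheta^2)\btheta^2$, and it remains to simplify $\X(\v^*)$ to $\X^+(\btheta^2)$. Here I would use the linearity of the map $\u \mapsto \X(\u)$ together with the commutativity/algebra-homomorphism property noted after \cref{thm:margmult} — the map $\GG(\u)\mapsto \X(\u)$ respects multiplication, so $\X(\X^g(\btheta^2)\X^g(\btheta^2)\btheta^2)$ corresponds to the operator $(\X(\btheta^2))^g(\X(\btheta^2))^g\X(\btheta^2) = (\X(\btheta^2))^g = \X^+(\btheta^2)$ (using that a generalized inverse satisfying the pseudoinverse identities composes this way, and for the objective's purposes any generalized inverse agrees on the relevant subspace). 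Then $\vect1^T\X(\v^*)\w = \vect1^T\X^+(\btheta^2)\w$, giving the claimed $\norm{\btheta}_{\algG}^2[\vect1^T\X^+(\btheta^2)\w]$.

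I expect the main obstacle to be the last paragraph: carefully justifying that the generalized-inverse bookkeeping in the ``weight space'' $\mathbb{R}^{2^d}$ faithfully mirrors the generalized inverse in matrix space, i.e.\ that replacing $\GG$ by $\X$ is an algebra isomorphism onto its image and that the particular generalized inverse chosen in \cref{thm:marginv} pushes forward to $\X^+$ (or at least to something that agrees with $\X^+$ against $\w$, which lies in the right range because $\w$ comes from a genuine workload Gram matrix and $\btheta$ must support it). The trace normalization is a secondary thing to get exactly right, but it is purely mechanical once the definition of $\c$ and $\X$ is unpacked.
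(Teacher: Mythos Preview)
Your proposal follows essentially the same route as the paper: handle the sensitivity factor via \cref{prop:marginal_sensitivity}, rewrite $\norm{\WW\AA^+}_F^2$ as $tr[\GG^+(\btheta^2)\,\GG(\w)]$ using \cref{prop:marggram}, \cref{thm:marginv}, and \cref{thm:margapprox}, collapse the product with \cref{thm:margmult}, and read off the trace. The one substantive difference is in the final simplification you flag as the main obstacle. Rather than attempting the operator identity $\X(\v^*)=\X^+(\btheta^2)$ via a homomorphism argument, the paper applies \cref{thm:margmult} in the other order to get $tr[\GG(\X(\w)\v^*)]$, commutes $\X(\w)$ past the two $\X^+(\btheta^2)$ factors (using that all the relevant $\X$-operators commute), replaces $\X(\w)\btheta^2$ by $\X(\btheta^2)\w$, and then invokes the support constraint $\X^+(\btheta^2)\X(\btheta^2)\w=\w$ directly to land on $\X^+(\btheta^2)\w$. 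That is exactly the ``agrees with $\X^+$ against $\w$'' reduction you mention parenthetically, so you had the right fix; making the constraint explicit is what turns it from a heuristic into a proof. Your worry about the factor of $n$ is also justified: since $tr[\CC(a)]=\prod_i n_i=n$ for every $a$, one has $tr[\GG(\v)]=n\,\vect{1}^T\v$, and the paper's stated formula (and its last proof line) silently drops this constant --- harmless for the optimization in \cref{prob:marginals}, but your bookkeeping is the careful one.
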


\cref{thm:margobj} shows that we can efficiently calculate the objective function in terms of $\w$ and $\btheta$, without ever explicitly materializing $ \GG(\w)$ or $\MM(\btheta)$.  This key idea will allow us to solve the strategy selection problem efficiently.  Problem \ref{prob:marginals} states the main optimization problem that underlies $\optm$, which immediately follows from \cref{thm:margobj}.  

\begin{definition}[Marginals parameterization] \label{prob:marginals}
Given a conjunctive query workload $\WW = w_1 \WW_1 + \dots + w_k \WW_k $, let $\optm(\WW) = \MM(\btheta^*) $ where
%$$ \btheta^* = \argmin_{\btheta} \norm{\btheta}^2 [\vect{1}^T \X^+(\btheta^2) \w] $$
\begin{equation*}
\begin{aligned}
\btheta^* =& \argmin_{\btheta}
& & \norm{\btheta}^2_{\algG} [ \vect{1}^T \X^+(\btheta^2) \w]  \\
& \text{subject to} & & \X^+(\btheta^2) \X(\btheta^2) \w = \w
\end{aligned}
\end{equation*}
and $\GG(\w)$ is the marginal approximation of $\WW^T \WW$ (as in \cref{thm:margapprox}).
\end{definition}

Above, the constraint ensures that the strategy supports the workload.  In practice, this constraint can usually be ignored, and the resulting unconstrained optimization problem can be solved instead.  The constraint can then be verified to hold at the end of the optimization.  Intuitively, this is because strategies that move closer to the boundary of the constraint will have higher error, so the optimization will never approach it as long as sufficiently small step sizes are taken.  We use scipy.optimize to solve this problem in practice.  

We note that the number of parameters in the above optimization problem is $2^d$ and that we can evaluate the objective in $O(4^d)$ time (quadratic in the number of parameters).  Thus, it is feasible to solve this problem as long as $ d \leq 15 $.  Importantly, this means that the runtime complexity is independent of the domain size of each attribute, so it will take the same amount of time for $n_i = 2$ (binary features), $n_i = 10$, or any other values of $n_i$.  

In addition to being able to efficiently optimize over the space of marginal query strategies, we can also efficiently compute the SVD bound for marginal query workloads.  \cref{thm:margsvdb} gives a remarkably simple formula for computing the SVD bound for marginal query workloads.

\begin{restatable}[SVD Bound for Marginal Query Workloads]{theorem}{thmmargsvdb} \label{thm:margsvdb}
The SVD bound for a marginal query workload $\WW$ with Gram matrix $\GG(\w)$ is:

$$ SVDB(\WW) = \frac{1}{n} \Big( \sum_a \c(\neg a) \sqrt{\sum_{b : a \& b = a} \w(b) \c(b)} \Big)^2 $$
\end{restatable}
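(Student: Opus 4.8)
The plan is to combine the definition of the SVD bound with the explicit eigendecomposition of marginal Gram matrices provided by \cref{thm:eigmarg}. By definition $SVDB(\WW) = \frac{1}{n}(\lambda_1 + \dots + \lambda_n)^2$, where the $\lambda_j$ are the singular values of $\WW$. Since $\WW$ is a marginal query workload, \cref{prop:marggram} identifies its Gram matrix: $\WW^T\WW = \GG(\w)$. Hence each $\lambda_j$ is the square root of an eigenvalue of $\GG(\w)$, and since $\GG(\w) = \WW^T\WW$ is positive semidefinite these eigenvalues are nonnegative, so $\sum_{j=1}^n \lambda_j = \mathrm{tr}\big[\GG(\w)^{1/2}\big]$ is precisely the sum of the square roots of the eigenvalues of $\GG(\w)$, counted with multiplicity. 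It therefore suffices to evaluate this trace and then square it and divide by $n$.

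To evaluate the trace I would read the eigenvalues and their multiplicities straight off \cref{thm:eigmarg}: for every $a \in [2^d]$, $\VV(a)$ is an eigenmatrix of $\GG(\w)$ with eigenvalue $\bkappa(a) = \sum_{b : a \& b = a}\w(b)\,\c(b)$, and ranging over all $a$ the eigenmatrices $\VV(a)$ furnish a complete system of eigenvectors of $\GG(\w)$. Writing $m(a)$ for the number of eigenvectors attached to $\VV(a)$, this gives $\mathrm{tr}\big[\GG(\w)^{1/2}\big] = \sum_{a \in [2^d]} m(a)\sqrt{\bkappa(a)}$ (if distinct values of $a$ yield the same numerical eigenvalue, the sum over $a$ simply pools their contributions, so no care is needed there). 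Substituting this expression into the SVD bound definition reduces the whole statement to a single combinatorial identity: $m(a) = \c(\neg a)$ for all $a$.

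Establishing $m(a) = \c(\neg a)$ is the heart of the proof and the step I expect to be the main obstacle. It is a counting argument on the Kronecker factorization $\VV(a) = \bigotimes_{i=1}^d \big[(a_i = 0)\,\T + (a_i = 1)(\matr 1 - n_i \I)\big]$: the count factors over the attributes, with the factor used when $a_i = 0$ contributing a single direction and the factor used when $a_i = 1$ contributing the appropriate number, and one must check that collecting this product over attributes reproduces exactly $\c(\neg a)$ from the definition of the characteristic vector — while also confirming that these per-$a$ counts assemble into a genuine (non-redundant) eigendecomposition of $\GG(\w)$, i.e., that they are consistent with the total dimension $n = \prod_i n_i$. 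With $m(a) = \c(\neg a)$ established, $\sum_j \lambda_j = \sum_a \c(\neg a)\sqrt{\sum_{b : a \& b = a}\w(b)\c(b)}$, and squaring and dividing by $n$ gives the claimed formula. As a bonus, combining this with \cref{thm:margapprox} extends the efficient SVD-bound computation to arbitrary conjunctive query workloads, which is how the paper applies it.
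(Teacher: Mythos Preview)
Your proposal is correct and follows essentially the same route as the paper: invoke \cref{thm:eigmarg} to read off the eigenvalues $\bkappa(a)$ of $\GG(\w)=\WW^T\WW$, count the eigenvectors attached to each $\VV(a)$ as $\c(\neg a)$ via the Kronecker factor sizes, and substitute into the SVD-bound definition. The paper's argument is in fact terser than yours---it simply asserts that $\VV(a)$ contributes $\c(\neg a)$ eigenvectors and plugs in---so your explicit flag about verifying completeness/non-redundancy of the eigendecomposition is, if anything, more careful than what the paper writes down.
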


Additionally, as a byproduct of this analysis, we give a similarly simple formula to \emph{find the optimal marginal query strategy in closed form} in \cref{thm:optsvdbmarg}, allowing us to bypass the need for numerical optimization in some settings.  

\begin{restatable}[Closed form solution to Problem \ref{prob:marginals}]{theorem}{thmoptsvdbmarg} \label{thm:optsvdbmarg}
Let $\WW$ be a workload with Gram matrix $\GG(\w)$ and let $\btheta = \sqrt{\Y^{-1} \sqrt{\Y \w}}$, where $\Y$ is the $2^d \times 2^d$ matrix with entries:

$$ \Y(a,b) = \begin{cases}
\c(b) & a\&b=a \\
0 & otherwise \\
\end{cases} $$

If $\btheta$ contains real-valued entries then the strategy $\AA = \MM(\btheta)$ attains the SVDB bound when $\algG = \GM$, and is thus an optimal strategy.  That is, $ \norm{\AA}^2_{\GM} \norm{\WW \AA^+}_F^2 = SVDB(\WW)$.
\end{restatable}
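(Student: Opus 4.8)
The plan is to show that the closed-form strategy $\AA = \MM(\btheta)$ is exactly the Gaussian SVD-bound-achieving strategy in disguise. Recall that, after relaxing the constraint $\mathrm{diag}(\X)=\vect{1}$ to $\mathrm{tr}(\X)=n$ in the convex reformulation (\cref{prob:convex}), a Lagrangian argument shows the relaxed optimum is $\X^* \propto (\WW^T\WW)^{1/2}$ with optimal value $\tfrac{1}{n}(\mathrm{tr}[(\WW^T\WW)^{1/2}])^2 = SVDB(\WW)$; and if $\X^*$ happens to have constant diagonal then it is already feasible for the unrelaxed problem, so the SVD bound is attained. So I would (i) identify $\AA^T\AA$ with $\GG(\w)^{1/2}$, (ii) observe that every marginal Gram matrix has constant diagonal, and (iii) compute the resulting error and check it equals $SVDB(\WW)$.

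For step (i): by \cref{thm:eigmarg} all marginal Gram matrices are simultaneously diagonalized by the eigenmatrices $\VV(a)$, and the eigenvalue of $\GG(\v)$ on the eigenspace spanned by $\VV(a)$ is $(\Y\v)(a)$, where $\Y$ is precisely the triangular matrix of \cref{thm:optsvdbmarg}; since its diagonal entries $\c(a)$ are strictly positive, $\Y$ is invertible. The matrix $\GG(\w)$ is positive semidefinite (for a genuinely marginal workload it equals $\WW^T\WW$ by \cref{prop:marggram}; in general it is the marginal approximation of \cref{thm:margapprox}, and testing $\mathrm{tr}[\GG(\u)\WW^T\WW] \geq 0$ against projectors onto the $\VV(a)$ forces $\Y\w \ge \vect{0}$), so $\sqrt{\Y\w}$ is real and, under the hypothesis of the theorem, $\btheta^2 = \Y^{-1}\sqrt{\Y\w} \ge \vect{0}$ and $\btheta$ is real. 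Applying \cref{thm:eigmarg} with weight vector $\btheta^2$, the matrix $\GG(\btheta^2)$ shares the eigenspaces of $\GG(\w)$ and has eigenvalues $(\Y\btheta^2)(a) = \sqrt{(\Y\w)(a)}$ — the componentwise square roots of those of $\GG(\w)$ — hence $\GG(\btheta^2) = \GG(\w)^{1/2}$, and by \cref{prop:marggram}, $\AA^T\AA = \MM(\btheta)^T\MM(\btheta) = \GG(\w)^{1/2}$.

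For steps (ii) and (iii): since $\mathrm{diag}(\CC(a)) = \vect{1}$ for every $a$, $\GG(\btheta^2)$ has constant diagonal $\vect{1}^T\btheta^2$, so $\norm{\AA}_\GM^2 = \Ltwo{\AA}^2 = \max_j (\AA^T\AA)_{jj} = \tfrac{1}{n}\mathrm{tr}[\AA^T\AA] = \tfrac{1}{n}\mathrm{tr}[\GG(\w)^{1/2}]$. For the reconstruction term, $\norm{\WW\AA^+}_F^2 = \mathrm{tr}[(\AA^T\AA)^+\WW^T\WW]$; by \cref{thm:marginv} the pseudoinverse $(\AA^T\AA)^+$ is again a marginal Gram matrix (it inverts the nonzero eigenvalues of $\GG(\w)^{1/2}$), so \cref{thm:margapprox} lets me replace $\WW^T\WW$ by $\GG(\w)$, giving $\mathrm{tr}[(\GG(\w)^{1/2})^+\GG(\w)] = \mathrm{tr}[\GG(\w)^{1/2}]$ (zero eigenvalues contribute nothing). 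The support condition $\WW\AA^+\AA = \WW$ — the constraint of Problem \ref{prob:marginals} — holds because $\GG(\btheta^2)$ and $\GG(\w)$ have the same kernel. Multiplying the two factors, the error is $\tfrac{1}{n}(\mathrm{tr}[\GG(\w)^{1/2}])^2$, which is the Gaussian SVD optimum identified above, i.e. $SVDB(\WW)$ (in agreement with \cref{thm:margsvdb}). Since the SVD bound lower-bounds the error of every supporting strategy, $\AA$ is optimal for $\algG = \GM$.

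The step I expect to be most delicate is the square-root identity $\GG(\btheta^2) = \GG(\w)^{1/2}$: it relies on \cref{thm:eigmarg} furnishing a \emph{complete}, mutually orthogonal eigendecomposition of $\mathbb{R}^n$ (so that agreeing eigenvalues on agreeing eigenspaces genuinely force the two matrices to coincide), and on careful bookkeeping in the rank-deficient case — verifying that the generalized inverse of $\GG(\w)^{1/2}$ produced by \cref{thm:marginv} is still a marginal Gram matrix so that \cref{thm:margapprox} can be invoked, and that the kernels line up so the support constraint is satisfied. The remaining ingredients (constant diagonals, the Lagrangian characterization of the relaxed SVD optimum, and the trace manipulations) are routine.
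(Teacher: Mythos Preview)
Your proposal is correct and shares the paper's core argument: both use \cref{thm:eigmarg} to observe that all marginal Gram matrices are simultaneously diagonalized, then compute that the eigenvalues of $\GG(\btheta^2)$ are $\Y\btheta^2 = \Y(\Y^{-1}\sqrt{\Y\w}) = \sqrt{\Y\w}$, i.e.\ the square roots of the eigenvalues of $\GG(\w)$, which is exactly the SVD-bound-achieving condition.

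The difference is in how you \emph{conclude} that this eigenvalue relationship forces the error to equal $SVDB(\WW)$. The paper simply invokes the characterization of Li and Miklau (the SVD bound is met with equality under Gaussian noise when $\AA$ and $\WW$ share singular vectors and the singular values of $\AA$ are square roots of those of $\WW$), and stops there. You instead make the argument self-contained: you recall the Lagrangian derivation of the relaxed optimum $\X^* \propto (\WW^T\WW)^{1/2}$, note that every marginal Gram matrix has constant diagonal (so the relaxed optimum is automatically feasible for the unrelaxed problem), and then compute the error explicitly via $\tfrac{1}{n}\mathrm{tr}[\GG(\w)^{1/2}]\cdot \mathrm{tr}[(\GG(\w)^{1/2})^+\GG(\w)]$. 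Your route is longer but buys you independence from the cited result and a cleaner treatment of the rank-deficient case (kernel alignment, the support constraint, and the pseudoinverse being again a marginal Gram matrix) --- points the paper's proof glosses over.
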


While $\btheta$ may sometimes contain imaginary entries, we can always fall back on numerical optimization to solve Problem \ref{prob:marginals}.  The formula in \cref{thm:optsvdbmarg} can still be used to initialize the optimization if the imaginary entries of $\btheta$ are replaced with zeros.  Li and Miklau derived sufficient conditions for the SVD bound to be realizable \cite{li13optimal}, and marginal query workloads satisfy those sufficient conditions.  This implies that the SVD bound should always be attainable for workloads of this form.  If the parameters in \cref{thm:optsvdbmarg} contain imaginary entries, this suggests that the optimal strategy that is \emph{not} a marginal query strategy.  It is an interesting open question to determine what the structure of the optimal strategy is when \cref{thm:optsvdbmarg} does not apply.  In practice, even when the SVD bound is not attained exactly by $\optm$, we get very close to it for marginal query workloads, as we show empirically in \cref{table:hd2} of the experiments.

\section{The $\metaopt$ strategy selection algorithm} \label{sec:sub:prog_space}

\begin{table*}[h]
\resizebox{\textwidth}{!}{
\begin{tabular}{ll|c|c|c|c}
\multicolumn{2}{c}{\bf Definition} &
{\bf Operator} &
\multicolumn{1}{c}{{\bf Input workload}} &
\multicolumn{1}{c}{{\bf Output strategy}} &
{\bf Complexity} \\  \hline\hline
%\textsection \ref{sec:sub:generalpurpose} & Problem \ref{prob:gp} &
\textsection \ref{sec:optimization} & \cref{prob:convex,prob:gp} &
$\optgp$ &
Explicit matrix $\W$ &
Explicit matrix $\A$ &
$O(n^3)$ \\ \hline

\textsection \ref{sec:sub:optk1} & \cref{def:optk}  &
$\optk$ &
Kronecker Product &
Kronecker product  &
$O(\sum_{i=1}^d n_i^3)$ \\ \hline

\textsection \ref{sec:sub:optk2} & \cref{prob:sum-kron}  &
$\optk$ &
Union of Kronecker Products &
Kronecker Product &
$O(k \sum_{i=1}^d n_i^3)$ \\ \hline

\textsection \ref{sec:sub:optk2} & \cref{def:optp}  &
$\optkk$ &
Union of Kronecker Products &
Union of Kronecker Products &
$O(k \sum_{i=1}^d n_i^3)$ \\ \hline

\textsection \ref{sec:margopt} & \cref{prob:marginals}  &
$\optm$ &
Union of Kronecker Products &
Marginal Query Strategy &
$O(4^d)$ \\ \hline \hline
\end{tabular} }
\caption{\label{tbl:opt-summary} Summary of optimization operators: input and output types, and the time complexity of objective/gradient calculations.}
\end{table*}

In this paper, we proposed four optimization routines: $\optgp$, $\optk$, $\optkk$, and $\optm$.  In this section, we summarize these different approaches, discuss the pros and cons of each one, and propose a meta-optimization algorithm $\metaopt$ that automatically chooses the best one based on the workload.  \cref{tbl:opt-summary} summarizes the basic inputs and outputs of each operator.  $\optgp$ is designed to optimize an explicitly represented workload, and returns an explicitly represented strategy.  The other optimization operators all operate in an implicit space however.  

The time complexity of $\optgp$ is $O(n^3)$ (where $n$ is the domain size), and it generally feasible to run as long as $n \leq 10^4$.   The time complexity of $\optk$ and $\optkk$ is $O(k \sum n_i^3)$, where $k$ is the number of union terms in the workload, and $n_i$ is the domain size of attribute $i$.  It is generally feasible to run as long as $\optgp$ is feasible on each of the individual attributes (i.e., $n_i \leq 10^4$).  In contrast to $\optgp$, the total domain size for these operators can be arbitrarily large.  The time complexity of $\optm$ is $O(4^d)$, which interestingly does not depend on the domain size of individual attributes, only the number of attributes.  It is generally feasible to run as long as $d \leq 15$.  

Each of the operators searches over a different space of strategies, and the best one to use will ultimately depend on the workload.  We illustrate the behavior of each optimization operator on the simple workload of all 2-way marginals in \cref{example:margexample}.  This example highlights and summarizes the key differences between $\optk$, $\optkk$, and $\optm$.  In this case, $\optm$ is the best, which is not surprising because it is the most suitable for marginal workloads.  However, in general predicting which optimization operator will yield the lowest error strategy requires domain expertise and may be challenging for complex workloads.  
Since strategy selection is independent of the input data and does not consume the privacy budget, we can just run each optimization operator, keeping the output strategy that offers the smallest expected error.  Additionally, since the strategies found by each optimization operator may depend on the initialization, we recommend running several random restarts of each optimization operator, returning the best one.

By default, $\metaopt$ invokes all three high-dimensional optimization operators $\optk$, $\optkk$, and $\optm$.  ($\optgp$ may also be included for lower-dimensional workloads).  For $\optk$ and $\optkk$ invoked with the p-Identity strategy we use the following convention for setting the $p$ parameters: if an attribute's subworkload is completely defined in terms of $\T$ and $\I$, we set $p=1$ (this is a fairly common case where more expressive strategies do not help), otherwise we set $p=n_i/16$ for each attribute $A_i$ with size $n_i$.

\newpage
\begin{example}[Optimizing Marginal Query Workload] \label{example:margexample}
Consider the workload containing queries to compute all 2-way marginals on a domain of size $(2,5,50,100)$.  This workload can be represented as a union of $\binom{4}{2}=6$ Kronecker products.  \cref{table:margexample} gives the precise representation of this workload, together with the optimized strategies found by $\optk$, $\optkk$, and $\optm$.  
All optimized strategies can be expressed in terms of the ``Identity'' ($\I$) and ``Total'' ($\T$) building blocks.  $\optk$ tries to find the best single Kronecker product strategy, while $\optkk$ tries to find the optimal weight to assign to each of the six Kronecker products that make up the workload.  $\optm$ identifies a different set of marginal queries from which all 2-way marginals can be derived.  Among the three optimization operators, $\optm$ is the best, followed by $\optkk$ and then $\optk$.  $\optm$ offers a $4.8\times$ reduction in Expected TSE over the Identity baseline, and a $3.3\times$ reduction over the Workload baseline.  
\end{example}

\begin{table}[h]
\begin{tabular}{c|c
p{0.85cm}@{$\otimes$}>{\centering\arraybackslash}p{1.5cm}@{$\otimes$}>{\centering\arraybackslash}p{1.5cm}@{$\otimes$}>{\centering\arraybackslash}p{1.5cm}|c}
%p{1cm}@{$\otimes$}p{1cm}@{$\otimes$}c@{$\otimes$}c|}
&\multicolumn{5}{c}{Query Matrix} & Expected TSE\\\hline
&\multicolumn{5}{c|}{}& \\
\multirow{6}{*}{$\WW$}
&& $\textcolor{Purple}{\T}$ & $\textcolor{Purple}{\T}$ & $\textcolor{Red}{\I}$ & $\textcolor{Red}{\I}$&\multirow{6}{*}{$206,964$} \\
&& $\textcolor{Purple}{\T}$ & $\textcolor{Red}{\I}$ & $\textcolor{Purple}{\T}$ & $\textcolor{Red}{\I}$& \\
&& $\textcolor{Purple}{\T}$ & $\textcolor{Red}{\I}$ & $\textcolor{Red}{\I}$ & $\textcolor{Purple}{\T}$& \\
&& $\textcolor{Red}{\I}$ & $\textcolor{Purple}{\T}$ & $\textcolor{Purple}{\T}$ & $\textcolor{Red}{\I}$& \\
&& $\textcolor{Red}{\I}$ & $\textcolor{Purple}{\T}$ & $\textcolor{Red}{\I}$ & $\textcolor{Purple}{\T}$& \\
&& $\textcolor{Red}{\I}$ & $\textcolor{Red}{\I}$ & $\textcolor{Purple}{\T}$ & $\textcolor{Purple}{\T}$& \\
&\multicolumn{5}{c|}{}& \\\hline
&\multicolumn{5}{c|}{}& \\
$\mathbb{I}$ & & $\textcolor{Red}{\I}$ & $\textcolor{Red}{\I}$ & $\textcolor{Red}{\I}$ & $\textcolor{Red}{\I}$&$300,000$ \\
&\multicolumn{5}{c|}{}& \\\hline
&\multicolumn{5}{c|}{}& \\
$\optk(\WW)$ && $\textcolor{Red}{\I}$ & $\textcolor{Red}{\I}$ & $\begin{bmatrix} 0.80\: \textcolor{Red}{\I} \\ 0.20 \: \textcolor{Purple}{\T} \end{bmatrix}$ & $\begin{bmatrix} 0.82\: \textcolor{Red}{\I} \\ 0.18\: \textcolor{Purple}{\T} \end{bmatrix}$& $213,270$\\
&\multicolumn{5}{c|}{}& \\\hline
&\multicolumn{5}{c|}{}& \\
\multirow{6}{*}{$\optkk(\WW)$}
& 0.39 & $\textcolor{Purple}{\T}$ & $\textcolor{Purple}{\T}$ & $\textcolor{Red}{\I}$ & $\textcolor{Red}{\I}$&\multirow{6}{*}{$85,070$} \\
& 0.18 & $\textcolor{Purple}{\T}$ & $\textcolor{Red}{\I}$ & $\textcolor{Purple}{\T}$ & $\textcolor{Red}{\I}$& \\
& 0.14 & $\textcolor{Purple}{\T}$ & $\textcolor{Red}{\I}$ & $\textcolor{Red}{\I}$ & $\textcolor{Purple}{\T}$& \\
& 0.13 & $\textcolor{Red}{\I}$ & $\textcolor{Purple}{\T}$ & $\textcolor{Purple}{\T}$ & $\textcolor{Red}{\I}$& \\
& 0.11 & $\textcolor{Red}{\I}$ & $\textcolor{Purple}{\T}$ & $\textcolor{Red}{\I}$ & $\textcolor{Purple}{\T}$& \\
& 0.05 & $\textcolor{Red}{\I}$ & $\textcolor{Red}{\I}$ & $\textcolor{Purple}{\T}$ & $\textcolor{Purple}{\T}$& \\
&\multicolumn{5}{c|}{}& \\\hline
&\multicolumn{5}{c|}{}& \\
\multirow{3}{*}{$\optm(\WW)$}
& 0.44 & $\textcolor{Purple}{\T}$ & $\textcolor{Purple}{\T}$ & $\textcolor{Red}{\I}$ & $\textcolor{Red}{\I}$&\multirow{3}{*}{62,886} \\
& 0.31 & $\textcolor{Red}{\I}$ & $\textcolor{Red}{\I}$ & $\textcolor{Purple}{\T}$ & $\textcolor{Red}{\I}$& \\
& 0.25 & $\textcolor{Red}{\I}$ & $\textcolor{Red}{\I}$ & $\textcolor{Red}{\I}$ & $\textcolor{Purple}{\T}$& \\
&\multicolumn{5}{c|}{}& \\\hline
\end{tabular}
\caption{
\label{table:margexample}
A workload containing queries to compute \emph{all 2-way marginals} on a four-dimensional domain of size $(2,5,50,100)$.  The optimal strategy found by each parameterization, and the respective error, is also shown.  The Identity and Total query matrices $\textcolor{Red}{\I}$ and $\textcolor{Purple}{\T}$ are color coded for readability.}
\end{table}

\section{Efficient \measure and \reconstruct} \label{sec:running}

Now that we have fully described how HDMM solves the strategy selection problem, we are ready to discuss how to run the remainder of the mechanism.  Recall from \cref{prop:matrixmech} that the matrix mechanism is defined as $\mathcal{M}_{\A,\algG}(\W,\x) = \W \A^+ \algG(\A, \x)$.  With explicitly represented matrices, this computation can be done directly without problem.  However, HDMM replaces the explicitly represented matrices $\W$ and $\A$ with implicitly represented ones $\WW$ and $\AA$, and it is no longer obvious how to run the mechanism.  Conceptually, these steps can be broken down as follows.  In the \measure step, we have to compute the noisy strategy query answers,  $\y = \AA \x + \bm{\xi}$.  In the \reconstruct step, we have to estimate the data vector and workload query ansewrs, i.e., compute $\hat{\x} = \AA^+ \y$ and return $\WW \hat{\x}$.  A necessary key subroutine to solve these problems is to compute matrix-vector products where the matrix is a Kronecker product.  Importantly, we must do this without ever materializing $\AA$ explicitly, as that is infeasible for large domains.   

\begin{restatable}[Efficient matrix-vector multiplication]{theorem}{thmmatvec} \label{thm:matvec}
Let $ \AA = \A_1 \otimes \dots \otimes \A_d $ and let $\x$ be a data vector of compatible shape.  Then \cref{alg:fastikron} computes the matrix-vector product $\AA \x$.  Furthermore, if $\A_i \in \mathbb{R}^{n_i \times n_i}$ and $ n = \prod n_i $ is the size of $\x$ then \cref{alg:fastikron} runs in $O(n \sum n_i)$ time.  

%defined over a domain of size $ n = \prod n_i $.  Then the matrix-vector product $\AA \x$ can be evaluated in $ O(n \sum n_i)$ time.where each $\A_i$ is $n_i \times n_i$ 
\end{restatable}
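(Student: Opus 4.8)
The plan is to prove both claims — correctness and the $O(n \sum_i n_i)$ running time — by reducing a multiplication by the full Kronecker product to a sequence of $d$ cheap ``single-mode'' multiplications.

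\textbf{Correctness.} First I would use the mixed-product identity from \cref{prop:kron} to factor
\[
\AA = \A_1 \otimes \dots \otimes \A_d = \prod_{i=1}^{d} \big( \I_{m_i} \otimes \A_i \otimes \I_{p_i} \big),
\]
where $m_i = \prod_{j<i} n_j$, $p_i = \prod_{j>i} n_j$, and $\I_\ell$ is the $\ell \times \ell$ identity; this follows by repeatedly applying $(\B \otimes \C)(\D \otimes \E) = (\B\D) \otimes (\C\E)$ together with $\A_i = \I \A_i = \A_i \I$, and the factors in the product commute since they act on disjoint modes. Thus $\AA \x$ is obtained by applying the $d$ operators $\I_{m_i} \otimes \A_i \otimes \I_{p_i}$ to $\x$ one at a time. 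The second step is to show that \cref{alg:fastikron} implements exactly one such application per iteration. Viewing the length-$n$ vector $\x$ as a $d$-way tensor $X$ indexed by $(t_1,\dots,t_d)$ with $t_j \in dom(A_j)$, multiplication by $\I_{m_i} \otimes \A_i \otimes \I_{p_i}$ touches only index $t_i$: it produces $X'$ with $X'(\dots,t_i,\dots) = \sum_{s} \A_i(t_i,s)\, X(\dots,s,\dots)$ and leaves the other indices unchanged. Concretely this is realized by permuting axes so mode $i$ is outermost, reshaping to an $n_i \times (n/n_i)$ matrix whose rows are indexed by $t_i$, left-multiplying by $\A_i$, and reshaping/permuting back — which is what the algorithm does. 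An induction on $i$ then shows the algorithm's output equals $\big( \prod_{i=1}^d \I_{m_i} \otimes \A_i \otimes \I_{p_i} \big) \x = \AA \x$.

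\textbf{Complexity.} At iteration $i$ the work is a single matrix--matrix product of an $n_i \times n_i$ matrix with an $n_i \times (n/n_i)$ matrix, plus $O(n)$ bookkeeping for the reshape and axis permutation, for a cost of $O(n_i^2 \cdot n/n_i) = O(n\, n_i)$. Summing over $i = 1, \dots, d$ gives total cost $O\!\big(n \sum_{i=1}^d n_i\big)$, as claimed. (If the $\A_i$ were rectangular the same argument works with $n$ replaced by the running product of dimensions, but the stated bound assumes $\A_i \in \mathbb{R}^{n_i \times n_i}$, so the product stays $n$ throughout.)

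\textbf{Main obstacle.} Everything reduces mechanically to \cref{prop:kron} and an operation count except the index bookkeeping in the second step of the correctness argument: one must verify carefully that the permute--reshape--multiply--unreshape operation on a flat vector genuinely implements left-multiplication by $\I_{m_i} \otimes \A_i \otimes \I_{p_i}$, and that composing these operations in the loop order used by \cref{alg:fastikron} matches the factorization above (in particular the placement of axis permutations and the direction of the iteration must be consistent). This is routine but error-prone Kronecker/tensor index arithmetic, and is where the proof in the appendix will need the most care.
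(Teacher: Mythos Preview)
Your argument is correct, but it takes a different route from the paper's. The paper works entrywise: it writes $\y(q)=\sum_t \A_1(q_1,t_1)\cdots\A_d(q_d,t_d)\,\x(t)$, nests the sums as $\sum_{t_1}\A_1(q_1,t_1)\cdots\sum_{t_d}\A_d(q_d,t_d)\,\x(t)$, and defines intermediate vectors $\vect{f}_k$ by summing out one index at a time, matching the loop in \cref{alg:fastikron} directly. You instead use an algebraic factorization $\AA=\prod_i(\I_{m_i}\otimes\A_i\otimes\I_{p_i})$ via the mixed-product identity and interpret each factor as a mode-$i$ tensor multiplication. Your approach is the standard one in the numerical multilinear algebra literature and makes the structure (and commutativity of the factors) explicit; the paper's nested-sum argument is more elementary and avoids invoking the factorization, lining up with the recursion $\vect{f}_k\leftarrow\vect{f}_{k+1}$ without needing to talk about mode products or axis permutations. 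Both arrive at the same $O(n\,n_i)$ cost per step by recognizing the $n_i\times n_i$ by $n_i\times(n/n_i)$ matrix product, and both leave the reshape/reindexing bookkeeping at roughly the same level of informality (the paper simply says ``reorganizing the entries of $\vect{f}_{k+1}$ into a matrix where rows are indexed by $t_k$''), so your flagged obstacle is real but is not treated more rigorously in the paper either.
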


\cref{alg:fastikron} is correct even if the factors of $\AA$ are not square, although the time complexity is not as clean when written down.  

\begin{algorithm}
\caption{Kronecker Matrix-Vector Product} \label{alg:fastikron}
{\small
\begin{algorithmic}[1]
\Procedure{kmatvec}{$\A_1, \dots, \A_d, \x$}
\State $m_i, n_i = $ \Call{shape}{$\A_i$}
\State $r = \prod_{i=1}^d n_i$
\State $\vect{f}_{d+1} = \x$
\For{$i = d, \dots, 1$}
\State $\matr{Z} = $ \Call{reshape}{$\vect{f}_{i+1}$, $n_i$, $r/n_i$}
\State $r = r \cdot m_i / n_i$
\State $\vect{f}_i = $ \Call{reshape}{$\A_i \matr{Z}$, $r$, $1$}
\EndFor
\State \Return $\vect{f}_1$
\EndProcedure
\end{algorithmic}}
\end{algorithm}

Since all of the strategies found by our optimization routines are either Kronecker products or unions of Kronecker products, we can directly apply \cref{alg:fastikron} to efficiently implement the \measure step of HDMM.  Note that computing the matrix-vector product for a union of Kronecker products is a trivial extension of \cref{alg:fastikron}: it simply requires calling \cref{alg:fastikron} for each Kronecker product and concatenating the results into a single vector.  

We can also use \cref{alg:fastikron} to efficiently implement the \reconstruct step of HDMM.  The main challenge is to compute $\AA^+ \y$, or a pseudoinverse of $\AA$ together with a matrix-vector product.  This is done slightly differently for each type of strategy:

\begin{enumerate}
\item $\AA = \optk(\WW) = \A_1 \otimes \dots \times \A_d $.  From \cref{prop:kron} we know that $ \AA^+ = \A_1^+ \otimes \dots \otimes \A_d^+ $.  That is, the pseudoinverse of a Kronecker product is still a Kronecker product.  Thus, we can compute $\hat{\x} = \AA^+ \y$ efficiently using \cref{alg:fastikron}.  
\item $\AA = \optm(\WW) = \MM(\btheta)$.  From basic linear algebra, we know that $ \A^+ = (\A^T \A)^+ \A^T$ for any matrix $\A$.  Applied to this setting, we have $\MM^+(\btheta) = \GG^+(\btheta^2) \MM^T(\btheta)$, since we know $\MM^T \MM(\btheta) = \GG(\btheta^2)$ by \cref{prop:marggram}.  From \cref{thm:marginv} we know how to compute $\GG^+(\btheta^2)$ efficiently, and we know that it equals $\GG(\vect{\eta})$ for some $\vect{\eta}$.  We aim to compute $\MM^+(\btheta) \y = \GG^+(\btheta^2) \MM^T(\btheta) \y$.  We can easily compute $\v = \MM^T(\btheta) \y$ using a sequence of calls to \cref{alg:fastikron} by observing that $\MM^T(\btheta)$ is a just a bunch of Kronecker products \emph{horizontally} stacked together.  In a similar fashion, we can compute $\hat{\x} = \GG^+(\btheta^2) \v$ because $\GG^+(\btheta^2)$ is just the sum of a bunch of Kronecker products, which we can handle efficiently with repeated calls to \cref{alg:fastikron}. 
\item $\AA = \optkk(\WW) = c_1 \AA_1 + \dots + c_k \AA_k$.  Unfortunately, for a strategy of this form, we do not have a way to efficiently compute $\AA^+ \y$. While $\AA$ is a union of Kronecker products, the pseudoinverse is not necessarily, and we are not aware of a simple formula for the pseudoinverse of $\AA$ at all.  However, we can still produce an unbiased estimate of $\WW \x$ by using \emph{local least squares}.  To do this, we will compute $\WW_j \AA^+_j \y_j$ for each $j = 1, \dots, k$, where $\y_j$ is the answers produced for sub-strategy $\AA^+_j$.  Since $\WW_j$ and $\AA^+_j$ are both assumed to be Kronecker products, this can be easily achieved using \cref{alg:fastikron}.  This formula is an unbiased estimator for $\WW_j \x$, but the main drawback is that the workload query answers will not necessarily be consistent between sub-workloads.  
\end{enumerate}

\subsection{Improving scalability to large data vectors} \label{sec:sub:extension}

Thus far, HDMM has addressed the fundamental limitation of the matrix matrix mechanism --- replacing explicit matrix representations with implicit ones, and deriving efficient algorithms to solve the strategy optimization problem in the implicit space.  Our innovations allow HDMM to run in much higher-dimensional settings than the matrix mechanism, but HDMM still has trouble scaling to very high-dimensional settings, when the data vector no longer fits in memory.  Representing the data in vector form requires storing $ n = \prod n_i $ entries, which grows exponentially with the number of dimensions, and quickly becomes infeasible for truly high-dimensional data.  For example, a $30$-dimensional dataset with binary attributes ($n_i=2$) would require storing a data vector with $ 2^{30}$ entries, which is equivalent to approximately $ 4 \text{ GB}$ of space.   Scaling beyond this point would be quite challenging for HDMM.

It is important to note that the bottleneck of HDMM is \measure and \reconstruct, as these steps access and estimate the data vector.  In the matrix mechanism the main bottleneck is \select, as strategy optimization is the most expensive step.  HDMM can often still perform the \select step efficiently even when \measure and \reconstruct are intractable.  In some special-but-common cases, it may be possible for HDMM to bypass this bottleneck on \measure and \reconstruct, even scaling to settings where the data vector no longer fits in memory, making it suitable for arbitrarily large domains. 

The settings where HDMM can bypass this limitation depends crucially on the strategy, and consequently the workload as well.  If the workload is Identity over the whole domain (or any other full rank workload), then very little can be done because the vector of workload query answers (the output of HDMM) is just as large as the data vector itself, and simply enumerating those answers would require too much space.  Thus, the number of queries in the workload cannot be too large.  A special-but-common case occurs when the workload contains conjunctive queries over \emph{small subsets of attributes}.  The workload may cover all attributes of the dataset, but it will generally consist of a number of subworkloads, each which only cover a handful of attributes at a time.  With workloads of this form, strategies produced by HDMM ($\optkk$ and $\optm$ in particular\footnote{strategies produced by $\optk$ will generally be defined over the whole domain, rather than over a small subset of attributes}) will generally contain queries that are also defined over small subsets of attributes.  When this is the case, \measure can be done by keeping the data in its natural tabular format, and only vectorizing the data with respect to the relevant attributes for each sub-workload or sub-strategy.  Since these are assumed to be defined over small subsets of attributes, these smaller data vectors can easily be materialized explicitly and operated on accordingly.  Thus, the main remaining challenge is to \reconstruct the workload query answers while avoiding an explicit representation for $\hat{\x}$.  This can be done using a recently developed technique for efficient inference in differential privacy called ``\texttt{Private-PGM}'' \cite{mckenna2019graphical}.  \texttt{Private-PGM} consumes as input a set of noisy measurements defined over low-dimensional marginals, and produces a compact implicit representation of $\hat{\x}$.  It leverages \emph{probabilistic graphical models} to compactly represent $\hat{\x}$ in terms of a product of low-dimensional factors, and is able to scale to arbitrarily large domains as long as the measurements allow it.  

Using \texttt{Private-PGM} with HDMM does change the mechanism in some subtle but important ways.  The default HDMM method for \reconstruct is based on standard ordinary least squares, as it computes $ \hat{\x} = \AA^+ \y$, which is the solution to the minimization problem $\hat{\x}=\argmin_{\x} \norm{\AA \x - \y}_2^2$.   In contrast, \texttt{Private-PGM} is based on the related non-negative least squares problem: $\hat{\x} = \argmin_{\x > 0} \norm{\AA \x - \y}_2^2$.  We know that the true data vector is non-negative, so for this reason it seems like the \texttt{Private-PGM} approach is more natural.  However, non-negativity comes at the cost of \emph{bias}.  An appealing property of the ordinary least squares solution is that it produces an unbiased estimate of $\x$ under mild conditions.  Non-negative least squares does not share this same guarantee.  However, the introduction of bias often comes with reduced variance, and overall error is usually better when enforcing non-negativity \cite{li2015matrix,mckenna2019graphical}. 
Thus \texttt{Private-PGM} can be used not only to improve scalability of HDMM, but also utility.  In practical settings where some bias can be tolerated for reduced variance, we generally recommend incorporating \texttt{Private-PGM} post-processing into HDMM to improve utility, even when it is not necessary for scalability reasons.   

%\input{deploy}
%!TEX root = paper.tex

\section{Experimental Evaluation} \label{sec:experiments}

In this section we evaluate the accuracy and scalability of \sys.  We perform a comprehensive comparison of \sys with a variety of other mechanisms on low and high-dimensional workloads, showing that it consistently offers lower error than competitors and works in a broader range of settings than other algorithms.  We also evaluate the scalability of key components of HDMM, showing that it is capable of scaling effectively to high-dimensional settings.

In accuracy experiments, we report the Root Mean Squared Error (RMSE), which is defined as $ RMSE = \sqrt{\frac{1}{m}\Error(\W, \algG)} $ for an algorithm $\algG $.  We compute this value analytically using the formulas from Proposition \ref{prop:error} whenever possible.  We separately report results for pure differential privacy with Laplace noise and approximate differential privacy with Gaussian noise.  We use $\epsilon = 1.0$ and $\delta = 10^{-6}$ in all experiments, but note that the ratio of errors between two data-independent algorithms remains the same for all values of $\epsilon$ and $\delta$.

These experiments are meant to demonstrate that HDMM offers the best accuracy in the \emph{data-independent} regime.  It is possible that some data-dependent mechanisms will outperform even the best data-independent mechanism, and this will typically depend on the amount of data available and the privacy budget \cite{hay2016principled,vietri2020new}.  Data-independent mechanisms (like HDMM) are generally preferable when there is an abundance of data and/or the privacy budget is not too small, such as the U.S. Census decennial data release \cite{abowd2018us}.

\subsection{Evaluating $\optgp$ on Low Dimensional Workloads}

\begin{table}[] 
\resizebox{\textwidth}{!}{
\begin{tabular}{cc|ccccccc|c}
\multicolumn{10}{c}{$\boldsymbol{\epsilon}$-\textbf{differential privacy (Laplace noise)}}                                                                                                                                                       \\
\textbf{workload}                   & \textbf{domain} & \textbf{Identity} & \textbf{H2} & \textbf{Wavelet} & \textbf{HB} & \textbf{GreedyH} & \textbf{LRM} & \textbf{OPT\textsubscript{0}} & \textbf{SVDB} \\\hline
\multirow{4}{*}{\textbf{all-range}} & \textbf{64}     & 6.63              & 11.28       & 10.11            & 6.63        & 6.34             & 7.02         & \textbf{5.55}          & 3.22          \\
                                    %& \textbf{128}    & 9.31              & 13.70       & 12.41            & 9.31        & 7.82             & 11.33        & 6.72          & 3.64          \\
                                    & \textbf{256}    & 13.11             & 16.27       & 14.87            & 8.90        & 9.72             & 15.73        & \textbf{8.07}          & 4.07          \\
                                    %& \textbf{512}    & 18.51             & 18.98       & 17.49            & 10.70       & 11.94            & 20.13        & 9.58          & 4.50          \\
                                    & \textbf{1024}   & 26.15             & 21.83       & 20.26            & 12.82       & 14.70            & -             & \textbf{11.08}         & 4.94          \\
                                    %& \textbf{2048}   & 36.97             & 24.80       & 23.15            & 15.33       & 17.74            & -             & 12.62         & 5.38          \\
                                    & \textbf{4096}   & 52.27             & 27.90       & 26.18            & 16.19       & 22.21            & -             & \textbf{14.38}         & 5.82          \\\hline
\multirow{4}{*}{\textbf{prefix}}    & \textbf{64}     & 8.06              & 9.42        & 9.37             & 8.06        & 6.04             & 7.67         & \textbf{5.32}          & 2.89          \\
                                    %& \textbf{128}    & 11.36             & 11.24       & 11.18            & 11.36       & 7.55             & 9.93         & 6.29          & 3.19          \\
                                    & \textbf{256}    & 16.03             & 13.16       & 13.09            & 8.97        & 9.13             & 12.64        & \textbf{7.35}          & 3.50          \\
                                    %& \textbf{512}    & 22.65             & 15.18       & 15.10            & 10.79       & 11.92            & 15.40        & 8.44          & 3.80          \\
                                    & \textbf{1024}   & 32.02             & 17.29       & 17.20            & 12.87       & 14.32            & 15.43        & \textbf{9.58}          & 4.11          \\
                                    %& \textbf{2048}   & 45.27             & 19.49       & 19.39            & 15.43       & 18.57            & -             & 10.83         & 4.43          \\
                                    & \textbf{4096}   & 64.01             & 21.77       & 21.67            & 14.91       & 22.40            & -             & \textbf{12.20}         & 4.74          \\\hline
\multirow{4}{*}{\textbf{width32}}   & \textbf{64}     & 8.00              & 12.02       & 11.09            & 8.00        & 7.32             & 9.44         & \textbf{5.88}          & 2.75          \\
% & 128  & 8.00 & 13.76 & 12.18 & 8.00 & 7.95 & 25.50 & 6.24 & 3.11 \\
 & \textbf{256}  & 8.00 & 15.50 & 13.57 & 7.41 & 8.00 & 25.81 & \textbf{6.34} & 3.26 \\
% & 512  & 8.00 & 17.25 & 15.06 & 8.23 & 8.00 & 15.20 & 6.39 & 3.32 \\
                                    & \textbf{1024}   & 8.00              & 18.98       & 16.56            & 9.50        & 8.00             & 16.98        & \textbf{6.41}          & 3.36          \\
                                    %& \textbf{2048}   & 8.00              & 20.71       & 18.07            & 10.90       & 8.00             & -            & 6.43          & 3.37          \\
                                    & \textbf{4096}   & 8.00              & 22.45       & 19.58            & 10.96       & 8.00             & -            & \textbf{6.46}             & 3.38          \\\hline
\multirow{4}{*}{\textbf{permuted}}  & \textbf{64}     & 6.63              & 25.02       & 18.97            & 6.63        & 6.83             & 7.02         & \textbf{5.55}          & 3.22          \\
                                    %& \textbf{128}    & 9.31              & 42.26       & 31.40            & 9.31        & 9.06             & 11.33        & 6.72          & 3.64          \\
                                    & \textbf{256}    & 13.11             & 66.25       & 49.09            & 18.48       & 13.02            & 15.73        & \textbf{8.06}          & 4.07          \\
                                    %& \textbf{512}    & 18.51             & 104.61      & 77.27            & 26.24       & 17.00            & 20.13        & 9.59          & 4.50          \\
                                    & \textbf{1024}   & 26.15             & 157.50      & 117.06           & 37.07       & 23.94            & -             & \textbf{11.08}         & 4.94          \\
                                    %& \textbf{2048}   & 36.97             & 244.52      & 181.27           & 52.27       & 32.45            & -             & 12.66         & 5.38          \\
                                    & \textbf{4096}   & 52.27             & 374.29      & 277.42           & 107.83      & 45.77            & -             & \textbf{14.37}         & 5.82          \\\hline
\end{tabular}}
\caption{Error of strategies for 1D workloads with $\epsilon = 1.0$.} \label{table:opt01}
\end{table}

\begin{table}[]
\resizebox{\textwidth}{!}{
\begin{tabular}{cc|ccccccc|c}
\multicolumn{10}{c}{$(\boldsymbol{\epsilon}, \boldsymbol{\delta})$-\textbf{differential privacy (Gaussian noise)}}                                                                                                                                              \\
\textbf{workload}                   & \textbf{domain} & \textbf{Identity} & \textbf{H2} & \textbf{Wavelet} & \textbf{HB} & \textbf{GreedyH} & \textbf{COA} & \textbf{OPT\textsubscript{0}} & \textbf{SVDB} \\\hline
\multirow{4}{*}{\textbf{All Range}} & \textbf{64}     & 19.82             & 12.74       & 11.42            & 19.82       & 14.64            & \textbf{9.73}             & \textbf{9.73}          & 9.62          \\
                                    %& \textbf{128}    & 27.81             & 14.47       & 13.10            & 27.81       & 23.35            & 10.98             & 10.98         & 10.87         \\
                                    & \textbf{256}    & 39.18             & 16.20       & 14.81            & 18.80       & 23.34            & \textbf{12.26}             & \textbf{12.26}         & 12.15         \\
                                    %& \textbf{512}    & 55.30             & 17.93       & 16.52            & 22.60       & 35.67            & 13.56             & 13.55         & 13.45         \\
                                    & \textbf{1024}   & 78.13             & 19.66       & 18.24            & 27.07       & 36.20            & 14.89             & \textbf{14.85}         & 14.75         \\
                                    %& \textbf{2048}   & 110.44            & 21.39       & 19.97            & 32.38       & 44.08            & 16.27             & 16.16         & 16.07         \\
                                    & \textbf{4096}   & 156.14            & 23.12       & 21.69            & 27.92       & 56.21            & 17.92             & \textbf{17.46}         & 17.38         \\\hline
                                    %& \textbf{8192}   & 220.79            & 24.84       & 23.42            & 31.64       & 68.63            & ?             & ?             & 18.70         \\\hline
\multirow{4}{*}{\textbf{prefix}}    & \textbf{64}     & 24.08             & 10.64       & 10.58            & 24.08       & 14.04            & \textbf{8.87}             & \textbf{8.87}          & 8.62          \\
                                    %& \textbf{128}    & 33.93             & 11.87       & 11.81            & 33.93       & 22.54            & 9.77             & 9.76          & 9.52          \\
                                    & \textbf{256}    & 47.89             & 13.11       & 13.03            & 18.95       & 22.11            & 10.70             & \textbf{10.66}         & 10.44         \\
                                    %& \textbf{512}    & 67.66             & 14.34       & 14.26            & 22.79       & 35.61            & 13.09             & 11.57         & 11.36         \\
                                    & \textbf{1024}   & 95.64             & 15.57       & 15.49            & 27.18       & 35.59            & 16.29             & \textbf{12.49}         & 12.29         \\
                                    %& \textbf{2048}   & 135.22            & 16.80       & 16.72            & 32.59       & 46.14            & 20.80             & 13.40         & 13.22         \\
                                    & \textbf{4096}   & 191.21            & 18.03       & 17.95            & 25.72       & 56.70            & 26.50             & \textbf{14.32}         & 14.15         \\\hline
                                    %& \textbf{8192}   & 270.40            & 19.26       & 19.18            & 29.35       & 74.33            & ?             & ?             & 15.08         \\\hline
\multirow{4}{*}{\textbf{width32}}   & \textbf{64}     & 23.90             & 13.57       & 12.52            & 23.90       & 17.30            & 8.79             & \textbf{8.74}          & 8.23          \\
                                    %& \textbf{128}    & 23.90             & 14.53       & 12.86            & 23.90       & 19.94            & 9.72             & 9.67          & 9.30          \\
                                    & \textbf{256}    & 23.90             & 15.44       & 13.52            & 15.65       & 23.90            & 12.24             & \textbf{9.93}          & 9.73          \\
                                    %& \textbf{512}    & 23.90             & 16.29       & 14.23            & 17.39       & 23.90            & 14.22             & 10.03         & 9.93          \\
                                    & \textbf{1024}   & 23.90             & 17.10       & 14.92            & 20.08       & 23.90            & 16.00             & \textbf{10.08}         & 10.02         \\
                                    %& \textbf{2048}   & 23.90             & 17.86       & 15.59            & 23.03       & 23.90            & 15.27             & 10.23         & 10.07         \\
                                    & \textbf{4096}   & 23.90             & 18.60       & 16.22            & 18.90       & 23.90            & 18.38             & \textbf{10.11}         & 10.09         \\\hline
                                    %& \textbf{8192}   & 23.90             & 19.30       & 16.84            & 19.87       & 23.90            & ?             & 10.53         & 10.10         \\\hline
\multirow{4}{*}{\textbf{permuted}}  & \textbf{64}     & 19.82             & 28.26       & 21.42            & 19.82       & 16.13            & \textbf{9.73}             & \textbf{9.73}          & 9.62          \\
                                    %& \textbf{128}    & 27.81             & 44.63       & 33.16            & 27.81       & 21.95            & 10.98             & 10.98         & 10.87         \\
                                    & \textbf{256}    & 39.18             & 65.97       & 48.88            & 39.04       & 35.22            & \textbf{12.26}             & \textbf{12.26}         & 12.15         \\
                                    %& \textbf{512}    & 55.30             & 98.82       & 72.99            & 55.43       & 42.64            & 13.56             & 13.55         & 13.45         \\
                                    & \textbf{1024}   & 78.13             & 141.86      & 105.44           & 78.30       & 60.60            & 14.89             & \textbf{14.85}         & 14.75         \\
                                    %& \textbf{2048}   & 110.44            & 210.87      & 156.32           & 110.40      & 83.68            & 16.27             & 16.16         & 16.07         \\
                                    & \textbf{4096}   & 156.14            & 310.11      & 229.85           & 185.98      & 118.03           & 17.92             & \textbf{17.45}         & 17.38         \\\hline
                                    %& \textbf{8192}   & 220.79            & 455.67      & 337.63           & 263.73      & 164.15           & ?             & ?             & 18.70         \\\hline
\end{tabular}}
\caption{Error of strategies for 1D workloads with $\epsilon = 1.0$ and $\delta=10^{-6}$.} \label{table:opt02}
\end{table}

We begin by studying the effectiveness of $\optgp$ in the one-dimensional setting.  Specifically, we evaluate the quality of the strategies found by our optimization oracle compared with other \emph{data-independent} mechanisms designed for this setting.  It is important to understand the accuracy in the one-dimensional setting well, because $\optgp$ is used as a sub-routine for the higher-dimensional optimization operators $\optk$ and $\optkk$.

\paragraph*{\textbf{Workloads}}

We consider four different workloads: \textbf{All Range}, \textbf{Prefix}, \textbf{Width 32 Range}, and \textbf{Permuted Range}, each defined over domain sizes ranging from $64$ to $4096$.  
\textbf{All Range} contains every possible range query over the specified domain; \textbf{Prefix} contains range queries defining an empirical CDF; \textbf{Width 32 Range} contains all range queries of width 32.  
While the first three workloads are subsets of range queries, the last workload, \textbf{Permuted Range}, is the result of right-multiplying the workload of all range queries by a random permutation matrix.  Many proposed strategies have targeted workloads of range queries and tend to work fairly well on subsets of range queries.  \textbf{Permuted Range} poses a challenge because the structure of the workload is hidden by the permutation, requiring a truly adaptive method to find a good strategy.

%\ry{perhaps we can refer to earlier sections on the definition of these workloads}

Note the large size of some of these workloads: \textbf{All Range} and \textbf{Permuted Range} have $\frac{n(n+1)}{2} $ queries. For large $n$ it is infeasible to write down $\W$ in matrix form, but we can still compute the expected error since it only depends on the workload through its Gram matrix, $\W^T \W $, which is $n \times n$ and has special structure, allowing it to be computed directly without materializing $\W$.

\paragraph*{\textbf{Mechanisms}}

We consider 8 competing mechanisms: \textbf{Identity}, \textbf{Laplace}, \textbf{Gaussian}, \textbf{LRM} \cite{yuan2012low}, \textbf{COA} \cite{yuan2016convex}, \textbf{H2} \cite{hay2010boosting}, \textbf{HB} \cite{qardaji2013understanding}, \textbf{Privelet} \cite{xiao2011differential}, and \textbf{GreedyH} \cite{li2014data}.  The first five mechanisms are general purpose mechanisms, designed to support virtually any workload.  The last four mechanisms were specifically designed to offer low error on range query workloads.
We also report \textbf{SVDB} to understand the gap between the error of the computed strategies and the best lower bound on error we have (via the SVD bound).  

\paragraph*{\textbf{Results and Findings}}

\cref{table:opt01} and \cref{table:opt02} report the error of various mechanisms in each setting, for both Laplace and Gaussian noise respectively.  We remind the reader that these values do not depend on the true data $\x$, and thus they hold for all $\x$.  We report numbers for fixed $\epsilon = 1.0$ and $\delta = 10^{-6}$, but we note that these privacy parameters only impact the error by a constant factor, and hence the relationship between the errors of every pair of mechanisms remains the same for all $(\epsilon, \delta)$.  We have four main findings from these results, enumerated below:

\begin{enumerate}
\item $\optgp$ offers lower error than all competitors in all settings, and the magnitude of the improvement offered by HDMM (over the next best competitor) is as large as $3.18$ for Laplace noise (on Permuted Range) and $1.61$ for Gaussian noise (on Width 32 Range).  Interestingly, $\optgp$ offers lower error than \textbf{H2}, \textbf{HB}, \textbf{Privelet}, and \textbf{GreedyH} on range query workloads, even though these four mechanisms were designed specifically for range queries.  In additional, the second best method after $\optgp$ differs in each setting, which shows that some competing algorithms have specialized capabilities that allow them to perform well in some settings, while \sys performs well in a variety of settings as it does not make strict assumptions about the workload.% improving error and simplifying the number of algorithms that must be implemented for deployment. %\gm{this paragraph could perhaps be shortened.}
\item $\optgp$ gets within a factor of $2.57$ of the SVD bound for Laplace noise and $1.01$ of the SVD bound for Gaussian noise on every tested workload.  The gap between $\optgp$ and \textbf{SVDB} is quite small for Gaussian noise, suggesting that $\optgp$ is finding the best possible strategy.  Note that \textbf{COA} also finds a optimal strategy in many of the settings, but it fails on the Prefix and Width 32 Range workloads when $n \geq 1024 $.  Thus, even though it is solving the same problem underlying $\optgp$ in theory, the implementation is not as robust as ours.  The gap between $\optgp$ and \textbf{SVDB} is larger for Laplace noise, however, and it is unclear if this gap is primarily due to looseness of the SVD bound or suboptimality of the strategy.  Nevertheless, even with Laplace noise the ratio between $\optgp$ and \textbf{SVDB} is at most $2.57$.
\item The error of $\optgp$ (and \textbf{COA} for $(\epsilon, \delta)$ privacy) is the same on the All Range and Permuted Range workloads.  Permuting the workload doesn't impact achievable error or our optimization algorithm in any meaningful way.  However, many of the methods we compared against perform well on All Range but poorly on Permuted Range because they were specifically designed for range queries.  This shows that they exploit specific structure of the input workload and have limited adaptivity.
\item On these workloads, Laplace noise offers better error than Gaussian noise (for appropriately conservative settings of $\delta$).  This is because with Gaussian noise there is an additional $\approx \sqrt{\log(1/\delta)}$ term in the standard deviation of the noise, and this outweighs the benefit using the $L_2$ sensitivity norm instead of the $L_1$ sensitivity norm, despite the fact that we may be finding strategies that are further from optimal in the $L_1$ case.
\end{enumerate}

%\ry{where does Permuted Range finding fit in?}

\paragraph*{\textbf{Scalability}}

\begin{figure} 
\includegraphics[width=0.5\textwidth]{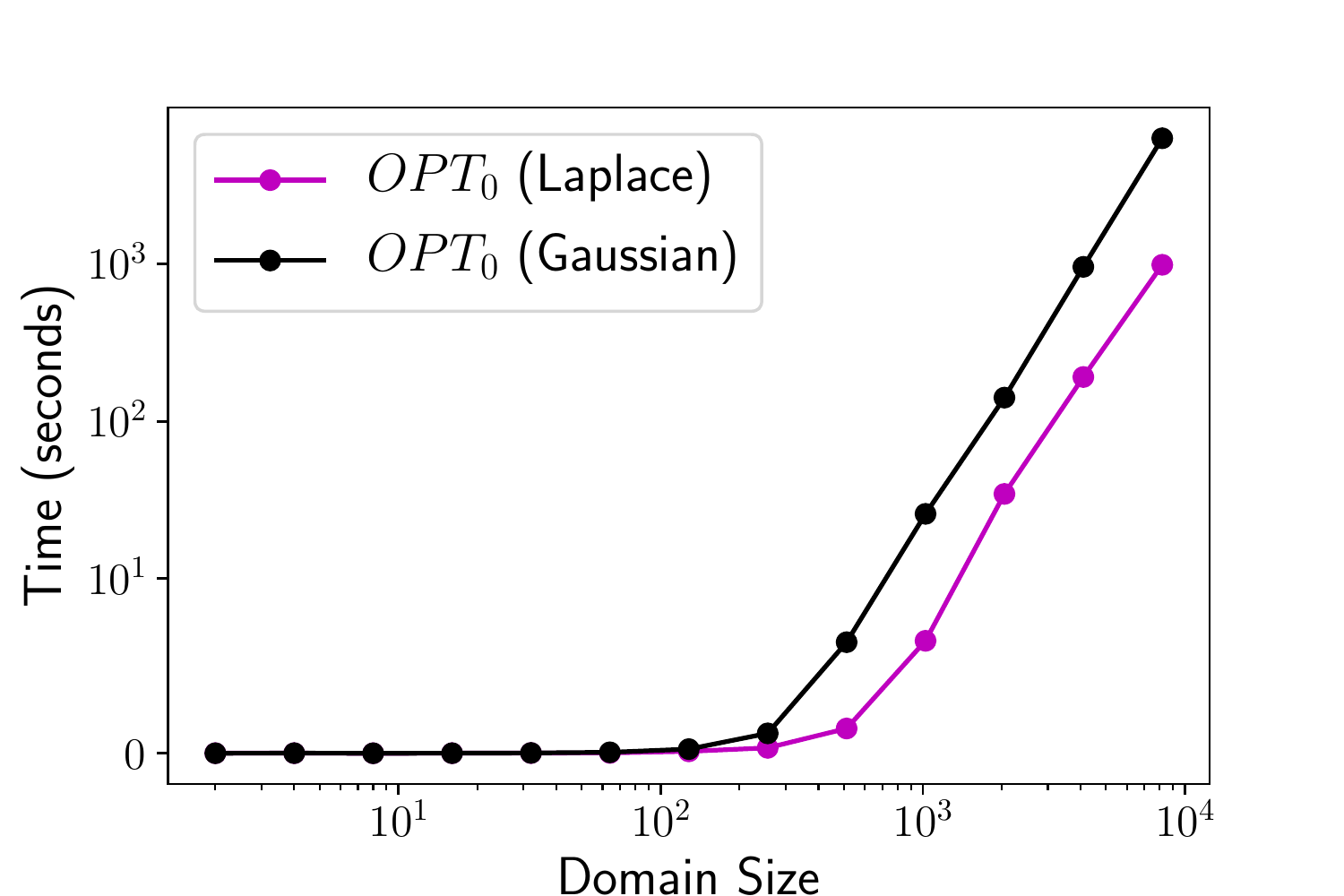}
\caption{ \label{fig:1d} Time required to run $\optgp$ for $100$ iterations on the AllRange workload for increasing domain sizes.}
\end{figure}

We now demonstrate the scalability of $\optgp$.  Note that optimization time dominates in the low-dimensional setting, and the time for \measure and \reconstruct is small in comparison to that.  The per-iteration time complexity only depends on the domain size, and not the contents of the workload.  While the number of iterations required for convergence may differ slightly based on the queries in the workload, for simplicity we measure the time required to run the optimization for 100 iterations on the All Range workload.

\cref{fig:1d} shows the amount of time required to run $\optgp$ for various domain sizes.  It shows that $\optgp$ scales up to $n=8192$, and runs for $n=1024$ in under 10 seconds for Laplace noise and 1 minute for Gaussian noise.  This difference occurs because the per-iteration time complexity is $O(p n^2)$ under Laplace noise but $O(n^3)$ under Gaussian noise.  For $n=8192$ it takes considerably longer, but is still feasible to run.  We remark that trading a few hours of computation time for a meaningful reduction in error is typically a welcome trade-off in practice, especially since workloads can be optimized once and the resulting strategies reused many times.  Additionally, we have a prototype implementation that uses GPUs and PyTorch, and we found that it is possible (although very time consuming) to scale up to $n=16384$.  Beyond this point, it quickly becomes infeasible to even represent the workload (or its Gram matrix) in matrix form, let alone optimize it.

\subsection{Evaluating $\optk$, $\optkk$, and $\optm$ on Multi-Dimensional Workloads}
\begin{table}[]
\resizebox{\textwidth}{!}{
\begin{tabular}{cc|ccccccc|c}
\multicolumn{10}{c}{$\boldsymbol{\epsilon}$-\textbf{differential privacy (Laplace noise)}} \\
\textbf{dataset} & \textbf{workload}                     & \textbf{Identity} & \textbf{Laplace} & \textbf{DataCube} & \textbf{OPT\textsubscript{$\boldsymbol{\otimes}$}} & \textbf{OPT\textsubscript{$\boldsymbol{+}$}} & \textbf{OPT\textsubscript{M}} & \textbf{HDMM} & \textbf{SVDB} \\\hline
\multirow{2}{*}{\textbf{Census (5D)}}  & \textbf{SF1}                          & 23.20             & 70.71            & -                 & 7.30            & 30.55           & 9.56            & 7.30          & -             \\
& \textbf{SF1+}                         & 32.50             & 141.42           & -                 & 10.23           & 42.31           & 12.88           & 10.23         & -             \\
\multirow{2}{*}{\textbf{CPS (5D)}}     & \textbf{All Marginals}                & 5.38              & 45.25            & 18.49             & 4.85            & 4.85            & 4.84            & 4.84          & 2.63          \\
& \textbf{All Prefix-Marginals}         & 98.06             & 56568.54         & -                 & 40.59           & 40.59           & 69.38           & 40.59         & 9.32          \\\hline
\multirow{2}{*}{\textbf{Adult (14D)}}   & \textbf{$\leq$ 3D Marginals} & 5352117.26        & 664.68           & 494.06            & 872.58          & 306.33          & 225.35          & 225.35        & 15.08         \\
& \textbf{2D Prefix-Marginals}          & 475602516.60      & 138602.83        & -                 & 1119.16         & 484.07          & 553.56          & 484.07        & -             \\
\multirow{2}{*}{\textbf{Loans (12D)}}   & \textbf{Small Marginals}              & 3330650.46        & 265.87           & 113.98            & 654.35          & 204.17          & 100.92          & 100.92        & 11.61         \\
& \textbf{Small Prefix-Marginals}       & 15340082.96       & 11013.90         & -                 & 1707.67         & 485.67          & 288.29          & 288.29        & -            
\end{tabular}}
\captionsetup{width=0.9\textwidth}
\caption{RMSE of HDMM strategies and baseline strategies on multi-dimensional workloads (ranging from 5D to 14D) for $\epsilon = 1.0$ with Laplace noise.} \label{table:hd1}
\end{table}

\begin{table}[]
\resizebox{\textwidth}{!}{
\begin{tabular}{cc|ccccccc|c}
\multicolumn{10}{c}{$(\boldsymbol{\epsilon}, \boldsymbol{\delta})$-\textbf{differential privacy (Gaussian noise)}}  \\
\textbf{dataset} & \textbf{workload}                     & \textbf{Identity} & \textbf{Laplace} & \textbf{DataCube} & \textbf{OPT\textsubscript{$\boldsymbol{\otimes}$}} & \textbf{OPT\textsubscript{$\boldsymbol{+}$}} & \textbf{OPT\textsubscript{M}} & \textbf{HDMM} & \textbf{SVDB} \\\hline
\multirow{2}{*}{\textbf{Census (5D)}}  & \textbf{SF1}                          & 69.33             & 29.87             & -                 & 9.80            & 75.66           & 14.31           & 9.80          & -             \\
& \textbf{SF1+}                         & 97.08             & 42.25             & -                 & 10.90           & 84.16           & 15.88           & 10.90         & -             \\
\multirow{2}{*}{\textbf{CPS (5D)}}     & \textbf{All Marginals}                & 16.08             & 23.90             & 19.53             & 7.85            & 7.85            & 7.86            & 7.85          & 7.85          \\
& \textbf{All Prefix-Marginals}         & 292.93            & 844.94            & -                 & 29.48           & 29.49           & 104.09          & 29.48         & 27.85         \\\hline
\multirow{2}{*}{\textbf{Adult (14D)}}   & \textbf{$\leq $ 3D Marginals} & 15988375.02       & 91.59             & 77.36             & 82.42           & 899.04          & 46.44           & 46.44         & 45.06         \\
& \textbf{2D Prefix-Marginals}          & 1420766966.19     & 1322.58           & -                 & 126.17          & 639.43          & 296.12          & 126.17        & -             \\
\multirow{2}{*}{\textbf{Loans (12D)}}   & \textbf{Small Marginals}              & 9949649.11        & 57.92             & 37.37             & 81.51           & 631.40          & 34.91           & 34.91         & 34.67         \\
& \textbf{Small Prefix-Marginals}       & 45825415.90       & 372.83            & -                 & 132.43          & 994.04          & 99.72           & 99.72         & -            
\end{tabular}}
\captionsetup{width=0.9\textwidth}
\caption{RMSE of HDMM strategies and baseline strategies on multi-dimensional workloads (ranging from 5D to 14D) for $\epsilon = 1.0$ and $\delta=10^{-6}$ with Gaussian noise.} \label{table:hd2}
\end{table}

We now shift our attention to the multi-dimensional setting.  

\paragraph*{\textbf{Workloads}}

We consider four multi-dimensional schemas and two workloads for each schema.  The first schema, Census of Population and Housing (\textbf{Census}), has been used as a running example throughout the paper.  The second schema, Current Population Survey (\textbf{CPS}), is another Census product.  These schemas have $5$ attributes each and domain sizes of about $1$ million.  The last two schemas, \textbf{Adult} and \textbf{Loans} are much higher-dimensional, having $15$ and $12$ attributes respectively.

For the \textbf{Census} schema, we use the SF1 and SF1+ workloads introduced in the paper.  For the other schemas, we use workloads based on Marginals and Prefix-Marginals, as defined in \cref{ex:marg}.  For \textbf{CPS} we use the workload of All Marginals and All Prefix-Marginals.  For Adult, we use All $0,1,2,$ and $3$-way marginals and all $2$-way Prefix-Marginals.  For Loans, we use All Small Marginals and All Small Prefix-Marginals.  A ``Small'' Marginal can be any $k$-way Marginal whose size is less than 5000.  This means the workload will be an interesting combination of $0, 1, 2, \dots, k$-way marginals.

%\ry{it would be useful if Prefix-Marginals could be defined somewhere in the paper.  Perhaps it can go with an example of Marginals.}

We note that for the Adult and Loans schema, the domain is far too large to allow $\x$ to be represented in vector form.  
Thus, to actually run the mechanism on these high-dimensional datasets, we would have to use the extension discussed in \cref{sec:sub:extension}.
However, we remind the reader that in this section we are simply reporting \emph{expected errors}, which we can compute efficiently without ever materializing $\x$.

\paragraph*{\textbf{Mechanisms}}

In the high-dimensional setting, there are far fewer data-independent mechanisms to choose from.  We thus compare against Identity, Laplace, and Gaussian, which are the only methods from the previous section which are applicable and scalable to high-dimensional settings.  In addition to these simple baselines, we also compare against DataCube, which is applicable in this setting, but only for (unweighted) marginal query workloads. 

\paragraph*{\textbf{Results and Findings}}

\cref{table:hd1} and \cref{table:hd2} report the RMSE of the baselines as well as each optimization operator.  We compute the SVD bound when possible (i.e., the workload is either a single Kronecker product or a marginal query workload).  We have four main findings which we enumerate below:

\begin{enumerate}
\item HDMM is better than all competitors on all tasks, and the magnitude of the improvement is as large as $38$ for Laplace noise and $29$ for Gaussian noise.  
\item HDMM gets within a factor $1.06$ of the SVD bound when it is possible to compute it for Gaussian noise.  This is consistent with the theoretical result in \cref{sec:hd_opt} which justifies the defintion of $\optk$.  For Laplace noise, the ratio is as high as $14$, however.
\item Gaussian noise offers lower error than Laplace noise for the two highest dimensional schemas, and comparable error for the two five-dimensional schemas.  In contrast to the one-dimensional setting, this occurs because the savings from using the $L_2$ sensitivity norm outweighs the cost of $\approx \sqrt{\log(1 / \delta)}$ to use Gaussian noise with $(\epsilon, \delta)$-differential privacy. 
\item The parameterization that offers the lowest error differs based on the workload and the type of noise added.  For example, $\optm$ is always the best for workloads consisting of Marginals, but it is also sometimes the best for other workloads too.  $\optk$ is the best for the CPH and CPS workloads, but not as good for the Adult and Loans workloads.  $\optkk$ is best for the low-dimensional Prefix Marginals workloads.  
\end{enumerate}

\paragraph*{\textbf{Scalability}}

\begin{figure} 
\subcaptionbox{Time to run \select \\operators on $5$-dimensional \\ domains of size $(c, c, c, c, c)$.}{\includegraphics[width=0.325\textwidth]{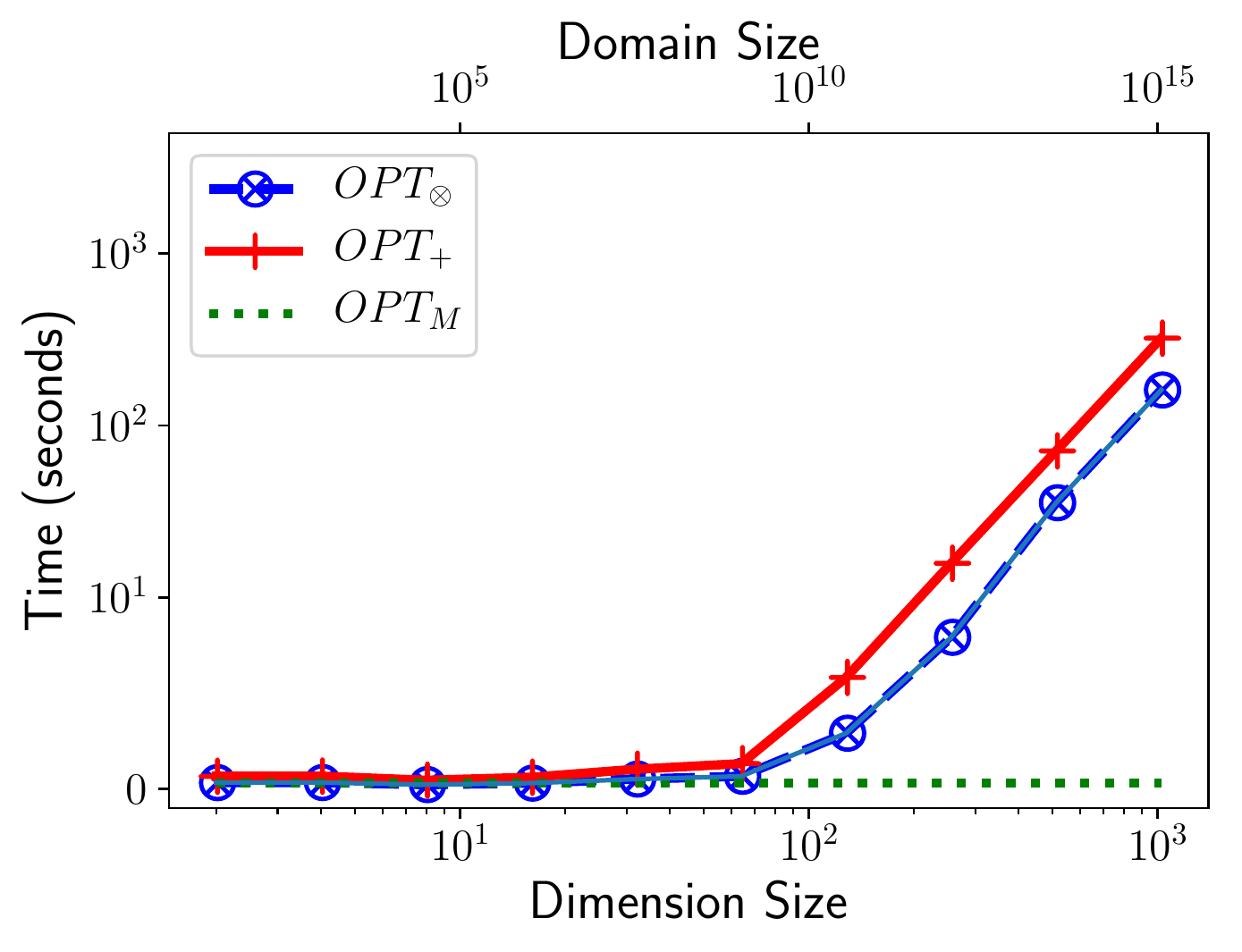}}
\subcaptionbox{Time to run \select \\ operators on $d$-dimensional \\ domains of size $(10, \dots, 10)$.}{\includegraphics[width=0.325\textwidth]{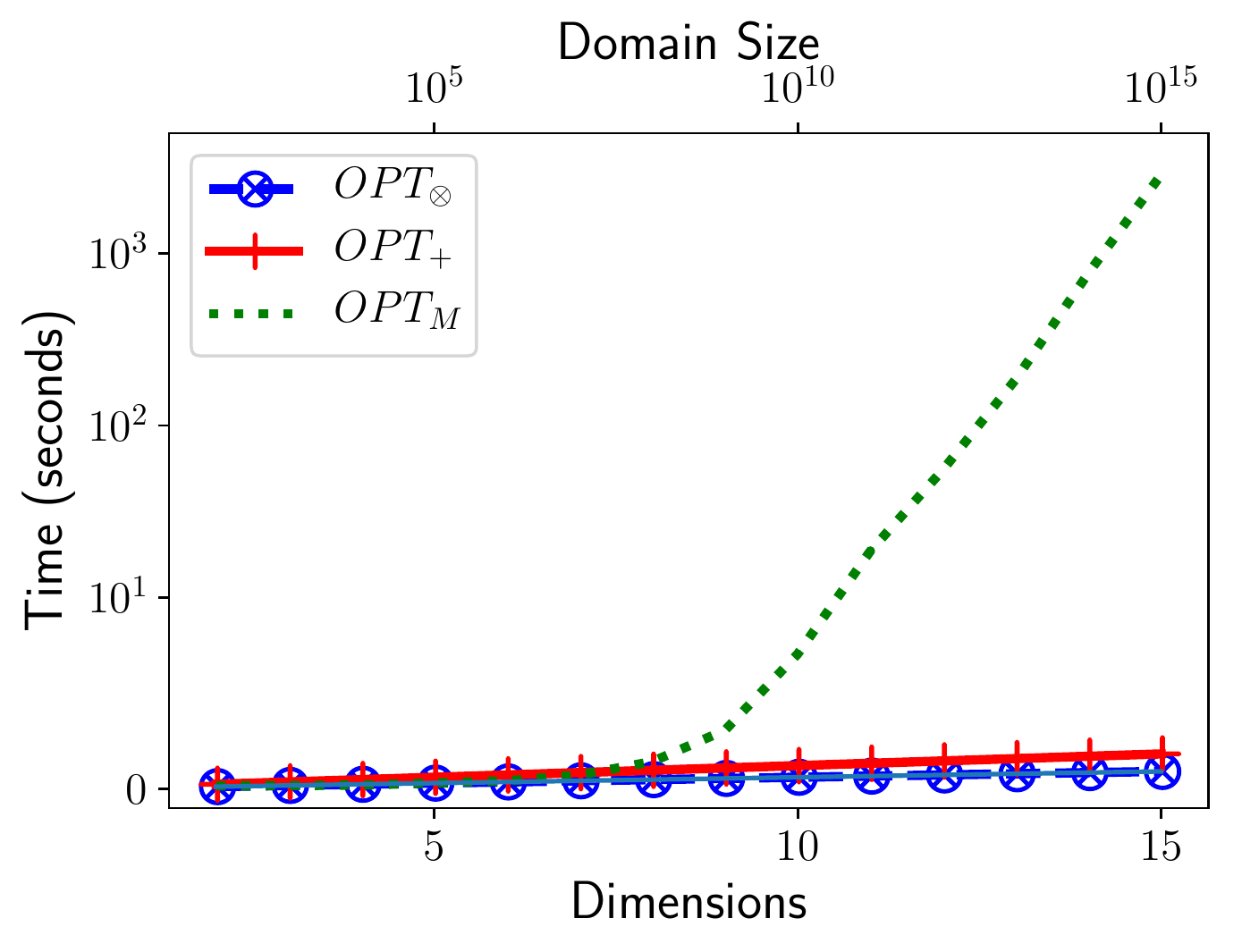}}
\subcaptionbox{Time to run \reconstruct for each type of strategy with varying domain sizes.}{\includegraphics[width=0.33\textwidth]{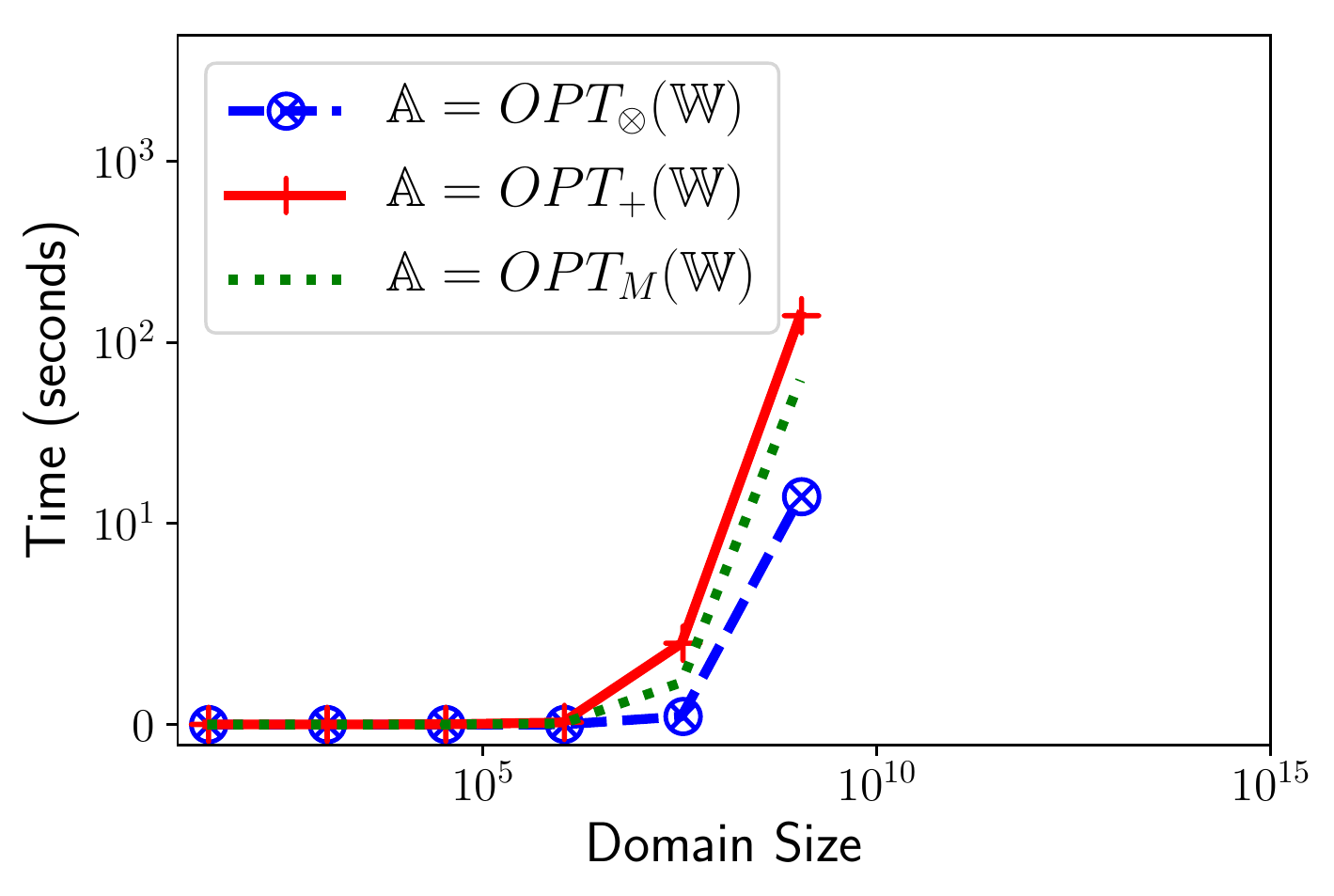}}
\caption{Scalability of different components of HDMM when run on multi-dimensional domains of varying size and shape.}\label{fig:hd}
\end{figure}

We now evaluate the scalability HDMM.  The main factor that influences the scalability of HDMM is the domain.  
%Specifically, the properties of the domain that affect scalability are the number of dimensions, the domain size of the largest dimension, and the total domain size.  
The optimization time primarily depends on the number of dimensions and the size of each dimension, while reconstruction time primarily depends on the total domain size.  Thus, the bottleneck of HDMM depends on all of these factors in a nuanced way, and for some domains optimization will be the bottleneck, while for others reconstruction will be.  We show how the key components scale with respect to these properties of the domain in \cref{fig:hd}.

In \cref{fig:hd} (a), we fix the number of dimensions of the domain at $d=5$ and vary the size of each dimension from $n_i=2$ to $n_i=1024$.  We measure and report the optimization time for $\optk, \optkk,$ and $\optm$.  We run $\optk$ for 100 inner iterations (in calls to $\optgp$) and 5 outer iterations.  We use a workload consisting of a union of 10 Kronecker products, where each subworkload is All Range queries.
In \cref{fig:hd} (b), we fix the domain size of each dimension at $n_i = 10$ and vary the number of dimensions from $ d=2 $ to $d=15$.  We again use the same workload as before.
In \cref{fig:hd} (c), we use the strategies produced from \cref{fig:hd} (a), and measure the time required to perform the \reconstruct step of HDMM.

From the figure we can see that the optimization time of $\optk$ and $\optkk$ primarily depends on the size of each dimension, rather than the number of dimensions.  In contrast, the optimization time of $\optm$ primarily depends on the number of dimensions and not the size of each dimension.  This confirms the theoretical complexity results.  All three optimization operators are capable of running in settings where the total domain size is far too large to allow $\x$ to be represented in vector form.
The figure also shows that we can solve the \reconstruct step up to domains as large as $10^9$.  Beyond this point, it is infeasible to even represent $\x$ in vector form on the machine used for experiments.  Further scalability is possible by using the extension described in \cref{sec:running}.

\section{Related Work} \label{sec:related}

Much research has been done to develop differentially private algorithms for accurately answering linear queries \cite{zhang16privtree,yuan2016convex,li2015matrix,zhangtowards,xiao2014dpcube,qardaji2014priview,li2014data,Yaroslavtsev13Accurate,xu2013differential,qardaji2013understanding,qardaji2013differentially,yuan2012low,xu12histogram,li2012adaptive,cormode2012differentially,Acs2012compression,xiao2011differential,ding2011differentially,li2010optimizing,hay2010boosting,barak2007privacy,qardaji2014priview,Zhang2014}.  These algorithms are either data-dependent (such as DAWA~\cite{li2014data}) or data-independent (such as HB~\cite{qardaji2013understanding}).  Hay et al. found that in the high signal setting (number of records is large relative to $\epsilon$ and $n$), data-independent algorithms dominate, while in the low signal setting, data-dependent algorithms dominate~\cite{hay2016principled}. 
Most of the data-independent mechanisms belong to the \textbf{select-measure-reconstruct} paradigm, and much research has been done on the strategy selection problem for particular (usually fixed) workloads such as range queries or marginals.  Some research has been done on the strategy selection problem that is automatically tuned to a user-specified workload.  However, none of the existing approaches offer the scalability, generality, and utility of HDMM.  

\paragraph{\textbf{Mechanisms for Range Query Workloads}}

One notable class of workloads that has recieved considerable attention in the literature is range queries.  For these workloads, Xiao et al. propose a strategy based on Wavelet transforms~\cite{xiao2011differential}, Hay et al. propose a hierarchical strategy~\cite{hay2010boosting}, Cormode et al propose similar hierarchical strategies for multi-dimensional domaains~\cite{cormode2012differentially}, and Qardaji et al. generalize and improve the hierarchical approach~\cite{qardaji2013understanding}.  All of these strategies are designed for workloads of range queries, and they are not workload-adaptive.  The HB approach proposed by Qardaji et al. chooses a branching factor for a hierarchical strategy by optimizing an analytically computable approximation of $ \Error $.  Li et al. propose an approach called GreedyH~\cite{li2014data} that is workload adaptive, but based on a template strategy designed for range query-like workloads.  GreedyH can be seen as an instance of \sys, since it optimizes over strategies parameterized by a small set of weights, but the parameterization is only reasonable for 1D range query workloads, and is not expressive enough to capture the strategies produced by our $p$-Identity parameterization.

\paragraph{\textbf{Mechanisms for Marginal Query Workloads}}

Another notable class of workloads that has recieved special attention in the literature is marginal query workloads.  Barak et al. propose a strategy based on Fourier basis vectors~\cite{barak2007privacy} for answering low-dimensional marginals over a binary domain.  This method offers some workload adaptivity, in that the set of Fourier basis vectors in the strategy depends on the marginals in the workload.  

Ding et al. propose a strategy of marginals that adapts to the workload through a greedy heuristic~\cite{ding2011differentially} that approximately solves a combinatoric optimization problem.  This space of strategies considered by this approach is a subset of those representable by our marginals parameterization, where $ \vect{\theta} \in \set{0,1}^{2^d} $.  They give a method for efficiently doing least squares estimation for consistency, but unlike HDMM, their objective function doesn't account for this in the strategy selection phase. 

Qardaji et al. also propose a strategy of marginals that adapts to a workload through the solution to an optimization problem~\cite{qardaji2014priview}, but they don't require that the strategy support the workload.  They approximate the answers to unsupported queries by doing maximum entropy estimation.  Hence, this method is data-dependent.

\paragraph{\textbf{Workload-Adaptive Mechanisms}}

Some existing methods are truly workload-adaptive, such as the low rank mechanism~\cite{yuan2012low} and COA \cite{yuan2016convex}.  These methods both rely on the matrix reperesentation of the workload and hence suffer from the same scalability limitations of the matrix mechanism.  There are a few notable workload-adaptive mechanisms that do not rely on a matrix representation of the workload, however they do require some structural assumptions about the workload, just like HDMM assumes the workload contains conjunctive queries.  Some notable examples include MWEM \cite{hardt2010simple}, DualQuery \cite{gaboardi2014dual}, and FEM \cite{vietri2020new}.  These mechanisms are all data-dependent and can run for marginal query workloads.  

%!TEX root=paper.tex

\section{Discussion and conclusions}  \label{sec:conclusion}

In this paper, we introduce HDMM, a general and scalable method for privately answering collections of counting queries over high-dimensional data.  \sys is capable of running on multi-dimensional datasets with very large domains.  This is primarily enabled by our implicit workload representation in terms of Kronecker products, and our optimization routines for strategy selection that exploit this implicit representation.  Because \sys provides state-of-the-art error rates in both low- and high-dimensions, and fully automated strategy selection, we believe it will be broadly useful to algorithm designers. 

In this paper, we extend \sys to handle Gaussian noise in addition to Laplace noise, showing that in several cases, strategy optimization is actually simpler and more effective.  We also study the SVD bound with implicitly represented workloads, and used it to reason about the effectiveness of our optimization operators theoretically and empirically.  
%We also exploit the structure of the strategies produced by the optimization to efficiently solve the least squares problem.
While \sys was previously limited to cases for which it is possible to materialize and manipulate the data vector, we show that we can sometimes bypass this limitation by integrating \texttt{Private-PGM} for the \reconstruct step.

%!TEX root = paper.tex
\vspace{1ex}
{\footnotesize
\noindent\textbf{Acknowledgements:}  This work was supported by the National Science Foundation under grants 1253327, 1408982, 1409125, 1443014, 1421325, and 1409143; and by DARPA and SPAWAR under contract N66001-15-C-4067. The U.S. Government is authorized to reproduce and distribute reprints for Governmental purposes not withstanding any copyright notation thereon. The views, opinions, and/or findings expressed are those of the author(s) and should not be interpreted as representing the official views or policies of the Department of Defense or the U.S. Government. \par
}

\clearpage
%\interlinepenalty=10000
\bibliographystyle{abbrv}
\bibliography{bib/refs}

\clearpage
\appendix
\section{Implicit Vectorization of Disjunctive Queries} \label{sec:disjuncts}

HDMM can also optimize workloads containing disjunctive queries with no modification to the underlying optimization operators being necessary.  The theorems below show how different logical operators on predicates impact the vector representation of the queries.  

\begin{theorem}
The vector representation of the negation $\neg \phi$ is $vec(\neg \phi)=\T - vec(\phi)$.
\end{theorem}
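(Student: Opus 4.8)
The plan is to argue entirely pointwise, unfolding \cref{def:vect-query} and using the fact that logical negation of a $\{0,1\}$-valued function is arithmetic complementation. First I would fix an arbitrary tuple $t \in dom(R)$ and recall that, by \cref{def:vect-query}, $vec(\neg\phi)(t) = (\neg\phi)(t)$ and $vec(\phi)(t) = \phi(t)$. Since $\phi$ is a predicate, i.e. a boolean function $\phi : dom(R) \to \set{0,1}$, its negation satisfies $(\neg\phi)(t) = 1 - \phi(t)$: when $\phi(t)=0$ we get $(\neg\phi)(t)=1$ and when $\phi(t)=1$ we get $(\neg\phi)(t)=0$. This is the only genuinely substantive step, and it is immediate from a two-case check.

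Next I would identify the symbol $\T$ appearing in the statement as the vectorization of the $\mbox{Total}$ predicate from \cref{sec:sub:logical_workloads}, namely $\T = vec(\mathbb{I}[\text{True}])$, which by \cref{def:vect-query} is the all-ones vector indexed by $t \in dom(R)$, so $\T(t) = 1$ for every $t$. Combining this with the previous paragraph gives, for every $t \in dom(R)$,
\[
vec(\neg\phi)(t) = 1 - \phi(t) = \T(t) - vec(\phi)(t) = (\T - vec(\phi))(t).
\]
Since $t$ was arbitrary and all three vectors are indexed by the same set $dom(R)$, the vectors $vec(\neg\phi)$ and $\T - vec(\phi)$ agree coordinatewise, hence are equal.

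The only thing to be careful about is purely notational: making explicit that $\T$ here denotes the all-ones \emph{vector} (the vectorized Total predicate over $dom(R)$), as opposed to the Total \emph{matrix} used elsewhere in the paper, and that all indexing is over the full domain $dom(R)$ so that the subtraction $\T - vec(\phi)$ is well defined. There is no real obstacle: the result is a direct consequence of the definitions, and the analogous identities for $\wedge$ (already handled by \cref{thm:implicit-vec}) and $\vee$ (obtained by combining this lemma with De Morgan's law, $\phi_A \vee \phi_B = \neg(\neg\phi_A \wedge \neg\phi_B)$) follow by the same elementary reasoning.
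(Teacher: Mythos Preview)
Your proposal is correct and is exactly the natural argument: unfold \cref{def:vect-query} pointwise and use that negation of a $\{0,1\}$-valued function is $1-\phi(t)$, identifying $\T$ with the all-ones vector $vec(\mathbb{I}[\text{True}])$. The paper in fact states this theorem without proof (treating it as immediate from the definitions), so there is nothing further to compare; your remark about the notational overloading of $\T$ is a useful clarification that the paper leaves implicit.
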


\begin{theorem}
The vector representation of the disjunction $\phi = \phi_A \vee \phi_B$ is 

$$ vec(\phi) = vec(\neg(\neg \phi_A \wedge \neg \phi_B)) = \T \otimes \T - (\T - vec(\phi_A)) \otimes (\T - vec(\phi_B)). $$
\end{theorem}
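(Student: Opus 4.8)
The plan is to reduce the claim to the two identities already in hand: the negation rule $vec(\neg \phi) = \T - vec(\phi)$ and the conjunction rule $vec(\phi_A \wedge \phi_B) = vec(\phi_A) \otimes vec(\phi_B)$ from \cref{thm:implicit-vec}. The bridge between disjunction and these two operations is De Morgan's law. Since predicates are boolean functions, the identity $\phi_A \vee \phi_B = \neg(\neg \phi_A \wedge \neg \phi_B)$ holds pointwise at every tuple $t$, and hence as an identity of predicates on $dom(\attset{A} \cup \attset{B})$. Applying $vec(\cdot)$ to both sides immediately gives $vec(\phi_A \vee \phi_B) = vec(\neg(\neg \phi_A \wedge \neg \phi_B))$, which is the middle expression in the statement.

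Next I would apply the negation rule to the outer negation: writing $\psi = \neg \phi_A \wedge \neg \phi_B$, we obtain $vec(\neg \psi) = \T - vec(\psi)$, where here $\T = vec(\text{True})$ is the all-ones vector indexed by tuples in $dom(\attset{A} \cup \attset{B})$. A small point to nail down is that this all-ones vector factors as $\T \otimes \T$, the outer product of the all-ones vectors on $dom(\attset{A})$ and $dom(\attset{B})$; this is immediate from the definition of the outer product (coordinatewise $1 = 1 \cdot 1$), or equivalently from $\text{True} = \text{True} \wedge \text{True}$ together with \cref{thm:implicit-vec}.

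Finally I would expand $vec(\psi)$ using the conjunction rule, $vec(\neg \phi_A \wedge \neg \phi_B) = vec(\neg \phi_A) \otimes vec(\neg \phi_B)$, and then apply the negation rule to each factor separately, $vec(\neg \phi_A) = \T - vec(\phi_A)$ on $dom(\attset{A})$ and $vec(\neg \phi_B) = \T - vec(\phi_B)$ on $dom(\attset{B})$. Substituting everything back yields $vec(\phi_A \vee \phi_B) = \T \otimes \T - (\T - vec(\phi_A)) \otimes (\T - vec(\phi_B))$, as claimed.

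I do not expect any real obstacle here: the argument is a direct chain of substitutions into previously proved identities. The only care required is bookkeeping of which domain each occurrence of $\T$ lives on, and the one-line verification that the all-ones vector on the product domain is the outer product of the all-ones vectors on the factor domains. If one wanted to be fully self-contained, one could alternatively prove the statement by evaluating both sides at an arbitrary tuple $t = (t_A, t_B)$ and checking the boolean identity $\mathbb{I}[\phi_A(t_A) \vee \phi_B(t_B)] = 1 - (1 - \phi_A(t_A))(1 - \phi_B(t_B))$, which is the same computation unwound coordinatewise.
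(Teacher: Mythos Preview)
Your proposal is correct and matches the paper's approach exactly: the paper does not give a separate proof but simply remarks that ``the theorem above uses DeMorgan's law to show that the disjunctive query can be converted into the negation of a conjunctive query,'' with the intermediate De~Morgan expression already displayed in the statement itself. Your write-up is, if anything, more careful than the paper's, since you explicitly track which domain each occurrence of $\T$ lives on and verify that the all-ones vector on the product domain factors as $\T \otimes \T$.
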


The theorem above uses DeMorgan's law to show that the disjunctive query can be converted into the negation of a conjunctive query.  We can also similarly define a Cartesian product of disjunctive queries, as follows:

$$ \WW = \mathbf{1} \otimes \mathbf{1} - \W_1 \otimes \W_2 $$

where $\W_1 = \mathbf{1} - vec(\Phi_A)$ and $\W_2 = \mathbf{1} - vec(\Phi_B)$ and $\mathbf{1}$ is a matrix of ones having the same shape as $\W_1$ and $\W_2$ respectively.  Thus, we can represent a Cartesian product of disjunctive queries as a \emph{difference} of two Kronecker products.  

Taking the gram matrix of $\WW$ we observe:

$$ \WW^T \WW = (\mathbf{1} \otimes \mathbf{1})^T (\mathbf{1} \otimes \mathbf{1}) - (\mathbf{1} \otimes \mathbf{1})^T (\W_1 \otimes \W_2) - (\W_1 \otimes \W_2)^T (\mathbf{1} \otimes \mathbf{1}) + (\W_1 \otimes \W_2)^T (\W_1 \otimes \W_2)$$

Each term of the above expression simplifies to a single Kronecker product, so $\WW^T \WW$ is actually a sum of four Kronecker products.  Note that the standard conjunctive query workloads, containing a union of Kronecker products, also have a Gram matrix that is a sum of Kronecker products.  Furthermore, the optimization operators only depend on $\WW$ through $\WW^T \WW$, and they expect the Gram matrix to be a sum of Kronecker products.  Thus, they can run without modification on workloads having the above disjunctive form.  Additionally, the workloads may contain arbitrary conbinations of conjunctive and disjunctive queries.

\section{Missing Proofs}

\thmkronnorm*

\begin{proof}
We prove these statements directly with algebraic manipulation:
\begin{align*}
\Lone{\A \otimes \B} &= \max_{t} \sum_{q} | \A(q_A, t_A) \B(q_B, t_B) | \\
&= \max_{t} \sum_{q} | \A(q_A, t_A) | | \B(q_B, t_B) | \\
%&= \max_{t_A, t_B} \sum_{q_A} | \A(q_A, t_A) | \sum_{q_B} | \B(q_B, t_B) | \\
&= \max_{t_A} \sum_{q_A} | \A(q_A, t_A) | \max_{t_B} \sum_{q_B} | \B(q_B, t_B) | \\
&= \Lone{\A} \Lone{\B}
\end{align*}
\begin{align*}
\Ltwo{\A \otimes \B}^2 &= \max_{t} \sum_{q} ( \A(q_A, t_A) \B(q_B, t_B) )^2 \\
&= \max_{t} \sum_{q} \A(q_A, t_A)^2 \B(q_B, t_B)^2 \\
%&= \max_{t_A, t_B} \sum_{q_A} \A(q_A, t_A)^2 \sum_{q_B} \B(q_B, t_B)^2 \\
&= \max_{t_A} \sum_{q_A} \A(q_A, t_A)^2 \max_{t_B} \sum_{q_B}  \B(q_B, t_B)^2 \\
&= \Ltwo{\A}^2 \Ltwo{\B}^2
\end{align*}

\begin{align*}
\norm{\A \otimes \B}_F^2 &= \sum_{q, t} ( \A(q_A, t_A) \B(q_B, t_B) )^2 \\
&= \sum_{q, t} \A(q_A, t_A)^2 \B(q_B, t_B)^2 \\
&= \sum_{q_A, t_A} \A(q_A, t_A)^2 \sum_{q_B, t_B} \B(q_B, t_B)^2 \\
&= \norm{\A}_F^2 \norm{\B}_F^2
\end{align*}
\end{proof}

\begin{restatable}[Complexity of $\optgp$]{theorem}{thmoptgp}\label{thm:optgp}
Given any $p$-Identity strategy $\A(\bTheta)$, both the objective function $C(\A(\matr{\Theta}))$
and the gradient $ \frac{\partial C}{\partial \A}$ can be evaluated in $ O(p n^2) $ time.
\end{restatable}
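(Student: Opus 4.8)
The plan is to exploit two structural features of a $p$-Identity strategy (\cref{def:gp}). First, $\A(\bTheta)$ has full column rank $n$, because its top block is the $n\times n$ identity rescaled by the strictly positive diagonal $\D$ (strict positivity follows from $\bTheta\ge 0$). Consequently $\A^T\A$ is invertible, every Moore--Penrose pseudoinverse appearing in the objective $C(\A)=\norm{\W\A^+}_F^2=\mathrm{tr}[(\A^T\A)^{-1}\,\W^T\W]$ and in its gradient $\partial C/\partial\A=-2\,\A\,(\A^T\A)^{-1}(\W^T\W)(\A^T\A)^{-1}$ becomes an ordinary inverse, and the support constraint of \cref{prob:gp} holds automatically. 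Second, $\A^T\A$ is a diagonal matrix plus a term of rank at most $p$: a one-line computation gives $\A^T\A=\D(\I+\bTheta^T\bTheta)\D$, so $(\A^T\A)^{-1}=\D^{-1}(\I+\bTheta^T\bTheta)^{-1}\D^{-1}$, and by the Woodbury identity
\[
(\I+\bTheta^T\bTheta)^{-1}=\I-\bTheta^T\S\bTheta,\qquad \S:=(\I_p+\bTheta\bTheta^T)^{-1},
\]
where $\I_p$ is the $p\times p$ identity. Forming $\bTheta\bTheta^T$ costs $O(p^2n)$ and inverting it $O(p^3)$, both within the $O(pn^2)$ budget since $p\le n$. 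The workload enters only through the $n\times n$ matrix $\M:=\W^T\W$, precomputed once and cached.

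For the objective, I would substitute the factored inverse and use cyclicity of the trace to get $C=\mathrm{tr}[(\I+\bTheta^T\bTheta)^{-1}\widetilde{\M}]$, where $\widetilde{\M}:=\D^{-1}\M\D^{-1}$ is obtained from $\M$ by an entrywise rescaling determined by the diagonal of $\D$ ($O(n^2)$). Expanding with Woodbury,
\[
C=\mathrm{tr}[\widetilde{\M}]-\mathrm{tr}\big[\S\,(\bTheta\widetilde{\M}\bTheta^T)\big].
\]
The first trace is $O(n)$. The dominant step is forming $\bTheta\widetilde{\M}$, a $p\times n$ by $n\times n$ product costing $O(pn^2)$ --- which can in fact be evaluated as $\big((\bTheta\D^{-1})\M\big)\D^{-1}$ without materializing $\widetilde{\M}$ at all; then $\bTheta\widetilde{\M}\bTheta^T$ is $p\times p$ ($O(p^2n)$), multiplying by $\S$ is $O(p^2)$, and the final trace is $O(p^2)$. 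Hence the objective is evaluable in $O(pn^2)$.

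For the gradient, write $\G:=(\A^T\A)^{-1}=\D^{-1}\B\D^{-1}$ with $\B:=\I-\bTheta^T\S\bTheta$, so that $\partial C/\partial\A=-2\,\A\,\G\M\G=-2\,\A\,\D^{-1}(\B\widetilde{\M}\B)\D^{-1}$. I would compute the $n\times n$ matrix $\B\widetilde{\M}\B$ from its four-term expansion
\[
\B\widetilde{\M}\B=\widetilde{\M}-\bTheta^T\S(\bTheta\widetilde{\M})-(\bTheta^T\S(\bTheta\widetilde{\M}))^T+\bTheta^T\big(\S(\bTheta\widetilde{\M}\bTheta^T)\S\big)\bTheta,
\]
where symmetry of $\widetilde{\M}$ (hence of $\B\widetilde{\M}\B$) lets me write the two cross terms as transposes of one another. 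Evaluated inside-out, every intermediate is either $p\times n$, $p\times p$, the cached $\widetilde{\M}$, or an $n\times n$ matrix of the form $\bTheta^T$ times a $p\times n$ matrix, so each term costs $O(pn^2)$; the two diagonal rescalings by $\D^{-1}$ cost $O(n^2)$. Finally, since $\A=\left[\begin{smallmatrix}\I\\\bTheta\end{smallmatrix}\right]\D$, letting $\Q:=\D\,\G\M\G$ (a row rescaling, $O(n^2)$) the answer is $-2\left[\begin{smallmatrix}\Q\\\bTheta\Q\end{smallmatrix}\right]$, whose bottom $p\times n$ block costs $O(pn^2)$. Summing, the gradient is also $O(pn^2)$. (If one instead wants $\partial C/\partial\bTheta$, one more chain-rule step through the linear map $\bTheta\mapsto\A(\bTheta)$ suffices and is likewise $O(pn^2)$.)

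The only real content beyond careful bookkeeping is the parenthesization in the two displayed identities: a naive evaluation of $C$ or of $\partial C/\partial\A$ multiplies dense $n\times n$ matrices and costs $O(n^3)$, so the crux is to arrange that every subexpression is either reduced to an entrywise/diagonal operation on the cached matrix $\M$ or carries a $\bTheta$ (or $\bTheta^T$) factor that pins one dimension to $p$. Verifying that the four-term expansion of $\B\widetilde{\M}\B$ can indeed be bracketed this way --- and spelling out the resulting $O(pn^2)$ algorithm that evaluates $C$ and its gradient --- is what I expect to be the main (though routine) obstacle; the underlying linear algebra is elementary.
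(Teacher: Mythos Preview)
Your proposal is correct and follows essentially the same approach as the paper: both derive $\A^T\A=\D(\I+\bTheta^T\bTheta)\D$, apply the Woodbury identity to obtain $(\A^T\A)^{-1}=\D^{-1}(\I-\bTheta^T(\I_p+\bTheta\bTheta^T)^{-1}\bTheta)\D^{-1}$, and then parenthesize the resulting products so that one dimension is always pinned to $p$. The only difference is organizational: the paper packages the computation as a reusable subroutine ``apply $(\A^T\A)^{-1}$ to an $n\times n$ matrix in $O(pn^2)$'' (evaluated right-to-left) and invokes it twice to get $\X=(\A^T\A)^{-1}\M(\A^T\A)^{-1}$, whereas you expand $\B\widetilde{\M}\B$ into four terms and bound each; both yield the same $O(pn^2)$ cost.
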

%\thmoptgp*

\begin{proof}
Assume $\W^T \W$ has been precomputed
%\gm{do we need to say something about this upfront?}
and now express $ \A^T \A $ in terms of $\matr{\Theta}$ and $\D$:
$$ \A^T \A = \D^T \D + \D^T \matr{\Theta}^T \matr{\Theta} \D = \D[\I_n + \matr{\Theta}^T \matr{\Theta}] \D $$
Applying the identity $ (\X \matr{Y})^{-1} = \matr{Y}^{-1} \X^{-1} $ together with the Woodbury identity~\cite{hager1989updating} yields an expression for the inverse:
\begin{align*}
(\A^T \A)^{-1} &= \D^{-1} [\I_n + \matr{\Theta}^T \matr{\Theta}]^{-1} \D^{-1} \\
&= \D^{-1} [\I_n - \matr{\Theta}^T (\I_p + \matr{\Theta} \matr{\Theta}^T)^{-1} \matr{\Theta}] \D^{-1}
\end{align*}

We can compute $ (\A^T \A)^{-1} (\W^T \W) $ in $O(n^2 p)$ time by evaluating the following expression from right to left:
$$ (\A^T \A)^{-1} (\W^T \W) = \D^{-2} (\W^T \W) - \D^{-1} \matr{\Theta}^T (\I_p + \matr{\Theta} \matr{\Theta}^T)^{-1} \matr{\Theta} \D^{-1} (\W^T \W)  $$

By carefully looking at the dimensionality of the intermediate matrices that arise from carrying out the matrix multiplications from right-to-left, we see that the most expensive operation is the matrix-matrix product between an $ n \times p$ matrix and a $p \times n$ matrix, which takes $O(n^2 p)$ time.  The inverse $ (\I_p + \matr{\Theta} \matr{\Theta}^T)^{-1} $ takes $O(p^3)$ time and the operations involving $D$ take $O(n^2)$ time since it is a diagonal matrix.

The result still holds even if $\W^T \W$ is replaced with an arbitrary $n \times n$ matrix, so $ \X = (\A^T \A)^{-1} (\W^T \W) (\A^T \A)^{-1} $ can be computed in $O(n^2 p)$ time as well.  The gradient is $ -2 \A \X $ whose components can be calculated separately as $ -2 \D \X $ and $ -2 \bTheta \X $.  $ -2 \D \X $ takes $O(n^2)$ time and $ \matr{\Theta} \X$  takes $O(n^2 p)$ time, so the overall cost of computing the gradient is $O(n^2 p)$.
\end{proof}

\thmmargmult*

\begin{proof}
First observe how the matrices $\I$ and $\matr{1}$ interact under matrix multiplication:
\begin{align*}
\I \I = \I && \I \matr{1} = \matr{1} && \matr{1} \I = \matr{1} && \matr{1} \matr{1} = n_i \matr{1} \\
\end{align*}
Now consider the product $\CC(a) \CC(b)$ which is simplified using Kronecker product identities, logical rules, and bitwise manipulation.
\begin{align*}
= &\bigotimes_{i=1}^d [\matr{1} (a_i = 0) + \I (a_i = 1)] [\matr{1} (b_i = 0) + \I (b_i = 1)]  \\
= &\prod_{i=1}^d [n_i (a_i = 0 \text{ and } b_i = 0) + 1 (a_i = 1 \text{ or } b_i = 1)]
  \bigotimes_{i=1}^d [\matr{1} (a_i = 0 \text{ or } b_i = 0) + \I (a_i = 1 \text{ and } b_i = 1)] \\
= &\prod_{i=1}^d [ n_i ( (a | b)_i = 0) + 1 ( (a | b)_i = 1) ]
  \bigotimes_{i=1}^d [\matr{1} ( (a \& b)_i = 0) + \I ( (a \& b)_i = 1)] \\
= &\c(a|b) \CC(a \& b)
\end{align*}

Now let $\u, \v \in \mathbb{R}^{2^d}$ and consider the following product:
\begin{align*}
\GG(\u) \GG(\v) &= \Big( \sum_a \u(a) \CC(a) \Big) \Big( \sum_b \v(b) \CC(b) \Big) \\
&= \sum_{a,b} \u(a) \v(b) \CC(a) \CC(b) \\
&= \sum_{a,b} \u(a) \v(b) \c(a | b) \CC(a \& b) \\
\end{align*}
Observe that $ \GG(\u) \GG(\v) = \GG(\w) $ where
$$ \w(k) = \sum_{a \& b = k} \u(a) \v(b) \c(a | b) $$
The relationship between $\w$ and $\v$ is clearly linear, and by carefully inspecting the expression one can see that $ \w = \X(\u) \v $ where $ \X(\u)(k,b) = \sum_{a : a \& b = k} \u(a) \c(a | b) $.  $\X(\u)$ is an upper triangular matrix because $k = a\&b$, and $ a \& b \leq b$ for all $a$.
\end{proof}

\thmmarginv*

\begin{proof}
First note that $ \GG(\z) = \mathbb{I} $ (the identity matrix).  By \cref{thm:margmult},

\begin{align*}
\GG(\u) \GG^{-1}(\u) &= \GG(\u) \GG(\X^{-1}(\u) \z) \\
&= \GG(\X(\u) \X^{-1}(\u) \z) \\
&= \GG(\I \z) = \GG(\z) = \mathbb{I} \\
\end{align*}

This proves the first part of the theorem statement.  For the second part, note that if $ \GG(\u) \GG(\v) \GG(\u) = \GG(\u) $, then $\GG(\v)$ is a generalized inverse of $\GG(\u)$.  Using $ \v = \X^g(\u) \X^g(\u) \u $, we have

\begin{align*}
\GG(\u) \GG(\v) \GG(\u) &= \GG(\u) \GG(\u) \GG(\v) \\
&= \GG(\u) \GG(\u) \GG(\X^g(\u) \X^g(\u) \u) \\
&= \GG(\X(\u) \X(\u) \X^g(\u) \X^g(\u) \u) \\
&= \GG(\X(\u) \X^g(\u) \X(\u) \X^g(\u) \u) \\
&= \GG(\I \X(\u) \X^g(\u) \u) \\
&= \GG(\I \u) = \GG(\u) \\
\end{align*}

Thus, $\GG(\v)$ is a generalized inverse as desired.  This completes the proof.
\end{proof}

\thmmargeig*

\begin{proof}

Recall that $ \CC(b) = \bigotimes_{i=1}^d [\matr{1} (b_i = 0) + \I (b_i = 1)] $ and $\c(k) = \prod_{i = 1}^d [n_i (k_i=0) + 1 (k_i=1)] $.
The proof follows from direct calculation:

\begin{align*}
\CC(b) \VV(a) &= \bigotimes_{i=1}^d [(b_i = 0) \matr{1} + (b_i = 1) \I] \bigotimes_{i=1}^d [(a_i = 0) \T + (a_i = 1) (\matr{1} - n_i \I)] \\
&= \bigotimes_{i=1}^d [(b_i = 0) \matr{1} + (b_i = 1) \I] [(a_i = 0) \T + (a_i = 1) (\matr{1} - n_i \I)] \\
&= \bigotimes_{i=1}^d [(a_i = 0 \text{ and } b_i = 0) n_i \T + (a_i=0 \text{ and } b_i = 1) \T \\
&+ (a_i=1 \text{ and } b_i=0) \matr{0} + (a_i=1 \text{ and } b_i=1) (\matr{1} - n_i \I)] \\
&= \begin{cases}
\prod_{i=1}^d n_i (b_i=0) + 1 (b_i=1) \bigotimes_{i=1}^d [(a_i=0) \T + (a_i=1) (\matr{1}-n_i \I)] & a\&b=a\\
\mathbbl{0} & otherwise
\end{cases} \\
&= \begin{cases}
\c(b) \VV(a)  & a\&b=a\\
0 \VV(a) & otherwise
\end{cases} \\
&= \blambda(a) \VV(a)
\end{align*}

This completes the first part of the proof.  For the second part, we have:

\begin{align*}
\GG(\w) \VV(a) &= \sum_b \w(b) \CC(b) \VV(a) \\
&= \sum_b \w(b) \blambda(a) \VV(a) \\
&= \sum_{b : a \& b = a} \w(b) C(b) \VV(a) \\
&= \bkappa(a) \VV(a) \\
\end{align*}

\end{proof}

\thmmargapprox*

\begin{proof}
Let $\VV = \WW^T \WW$ be the Gram matrix of $\WW$:
$ \VV = \sum_{j=1}^k w_j^2 \bigotimes_{i=1}^d \V^{(j)}_i $
 where $ \V^{(j)}_i = (\W^T \W)^{(j)}_i $.  Now consider the following quantity:

\begin{align*}
tr[\GG(\u) \VV] &= tr\Big[\Big(\sum_{a = 0}^{2^d-1} \u(a) \bigotimes_{i=1}^d [\matr{1} (a_i = 0) + \I (a_i = 1)] \Big) \Big( \sum_{j=1}^k w_j^2 \bigotimes_{i=1}^d \V^{(j)}_i \Big) \Big] \\
 &= tr\Big[\sum_{a = 0}^{2^d-1} \u(a) \sum_{j=1}^k w_j^2 \bigotimes_{i=1}^d [\matr{1} (a_i = 0) + \I (a_i = 1)] \V^{(j)}_i \Big] \\
 &= \sum_{a = 0}^{2^d-1} \u(a) \sum_{j=1}^k w_j^2 \prod_{i=1}^d tr[\matr{1} \V^{(j)}_i] (a_i = 0) + tr[\I \V^{(j)}_i] (a_i = 1) \\
 &= \sum_{a = 0}^{2^d-1} \u(a) \sum_{j=1}^k w_j^2 \prod_{i=1}^d sum[\V^{(j)}_i] (a_i = 0) + tr[\V^{(j)}_i] (a_i = 1) \\
\end{align*}

And observe that it only depends on $\V^{(j)}_i$ through its $sum$ and $trace$.  Thus, we could replace $\V^{(j)}_i$ with any matrix that has the same $sum$ and $trace$.  In particular, we could use $ \hat{\V}^{(j)}_i = b \I + c \matr{1} $, where $b$ and $c$ are chosen to satisfy the following linear system:

$$ \begin{bmatrix} n_i & n_i \\ n_i & n_i^2 \\ \end{bmatrix} 
\begin{bmatrix} b \\ c \end{bmatrix}
=
\begin{bmatrix} tr[\V^{(j)}_i] \\ sum[\V^{(j)}_i] \end{bmatrix}
$$

The matrix $\hat{\VV}_j = w_j^2 (\hat{\V}^{(j)}_1 \otimes \dots \otimes \hat{\V}^{(j)}_d) $ is nothing more than the Gram matrix for a collection of weighted marginals, or $\GG(\w_j)$.  This is because each factor in the Kronecker product is a weighted sum of $\I$ and $\matr{1}$, and by using the distributive property it can be converted into the canonical representation.% given in Definition ???.

Thus, the matrix $\sum_j \GG(\w_j) = \GG(\sum_j \w_j) = \GG(\w) $ satisfies $tr[\GG(\u) \VV] = tr[\GG(\u) \GG(\w)]$ as desired.
\end{proof}

\thmmargobj*

\begin{proof}
\begin{align*}
\norm{\MM(\btheta)}_{\algG}^2 \norm{\WW \MM(\btheta)^+}_F^2
&= \norm{\btheta}^2 \norm{\WW \MM(\btheta)^+}_F^2 && \text{by \cref{prop:marginal_sensitivity}} \\
&= \norm{\btheta}^2 tr[\GG^+(\btheta^2) \WW^T \WW] && \text{} \\
&= \norm{\btheta}^2 tr[\GG^+(\btheta^2) \GG(\w)] && \text{by \cref{thm:margapprox}} \\
&= \norm{\btheta}^2 tr[\GG(\X^+(\btheta^2) \X^+(\btheta^2) \btheta^2) \GG(\w)] && \text{by \cref{thm:marginv}} \\
&= \norm{\btheta}^2 tr[\GG(\X(\w) \X^+(\btheta^2) \X^+(\btheta^2) \btheta^2)] && \text{by \cref{thm:margmult}} \\
&= \norm{\btheta}^2 tr[\GG(\X^+(\btheta^2) \X^+(\btheta^2) \X(\btheta^2) \w)] && \text{by commutativity} \\
&= \norm{\btheta}^2 tr[\GG(\X^+(\btheta^2) \w)] && \text{by constraint} \\
&= \norm{\btheta}^2 [\vect{1}^T \X^+(\btheta^2) \w] && \text{} \\
\end{align*}
\end{proof}

\thmmargsvdb*

\begin{proof}
From \cref{thm:eigmarg} we know all $2^d$ unique eigenvalues and corresponding eigenmatrices.  The number of rows in each eigenmatrix corresponds to the number of eigenvectors with that eigenvalue.  To compute the SVD bound, we need to take the square root of each unique eigenvalue (which is a singular value of $\WW$) and multiply that by it's multiplicity, then sum across all unique eigenvalues.  Note that the eigenmatrix $\VV(a)$ has $\c(\neg a)$ rows.  Hence, the SVD  bound is:

\begin{align*}
SVDB(\WW) &= \frac{1}{n} \Big(\sum_a \c(\neg a) \sqrt{\bkappa(a)} \Big)^2 \\
&= \frac{1}{n} \Big( \sum_a \c(\neg a) \sqrt{\sum_{b : a \& b = a} \w(b) \c(b)} \Big)^2
\end{align*}
\end{proof}

\thmoptsvdbmarg*

\begin{proof}
We will prove optimality by showing that $\AA = \MM(\btheta)$ matches the SVD bound.  Li et al. showed that the SVD bound is satisfied with equality if $\AA$ and $\WW$ share the same singular vectors and the singular values of $\AA$ are the square root of the singular values of $\WW$, at least in the case of Gaussian noise.  Recall from \cref{thm:eigmarg} we know that all marginal Gram matrices share the same eigenvectors.  
The unique eigenvalues of $\GG(\w)$  are $ \bkappa  = \Y \w$.  The gram matrix of $\AA=\MM(\btheta)$ is $\AA^T \AA = \GG(\btheta^2)$.  The eigenvalues of this are $\Y \btheta^2 = \Y (\Y^{-1} \sqrt{\Y \w}) = \sqrt{\Y \w}$.  Thus, the eigenvalues are exactly the square root of the eigenvalues of $\GG(\w)$, as desired.  This certifies that $\AA = \MM(\btheta)$ matches the SVD bound and is optimal.
\end{proof}

\thmmatvec*

\begin{proof}
Let $\y = \AA \x $. Then

\begin{align*}
\y(q) &= \sum_{t} \AA(q, t) \x(t) \\
&= \sum_{t} \A_1(q_1, t_1) \dots \A_d(q_d, t_d) \x(t) \\
&= \sum_{t_1} \A_1(q_1, t_1) \dots \sum_{t_d} \A_d(q_d, t_d) \x(t_1, \dots, t_d) \\
\end{align*}

Now define $\vect{f}_k$ to be the vector indexed by tuples $(t_1, \dots, t_{k-1}, q_k, \dots, q_d)$ such that $\vect{f}_{d+1} = \x$ and:

$$\vect{f}_k(t_{1:k-1}, q_{k:d}) = \sum_{t_k} \A_k(q_k, t_k) \vect{f}_{k+1}(t_{1:k}, q_{k+1:d}) $$

and observe that $\y = \vect{f}_1$.  We can efficiently compute $\vect{f}_{k}$ from $\vect{f}_{k+1}$ by observing that it is essentially computing a matrix-matrix product between the $ n_k \times n_k $ matrix $\A_k$ and the $ n_k \times n / n_k $ matrix obtained by reorganizing the entries of $\vect{f}_{k+1}$ into a matrix where rows are indexed by $t_k$.  This can be computed in $O(n n_k)$ time.  Thus, the total time required to compute $\y$ is $O(n \sum n_i)$ as stated.
\end{proof}

%\clearpage
%\input{note-to-reviewers}

% \clearpage
% \input{appendix_revised}
% \input{appendix}

%\clearpage
%\appendix
%\input{appendix}
%\clearpage
\end{document}